\newcommand{\eps}{\varepsilon}
\newcommand{\be}{\begin{equation}}
\newcommand{\ee}{\end{equation}}
\newcommand{\ba}{\begin{eqnarray}}
\newcommand{\ea}{\end{eqnarray}}
\newcommand{\beq}{\begin{equation}}
\newcommand{\eeq}{\end{equation}}
\newcommand{\beqa}{\begin{eqnarray}}
\newcommand{\eeqa}{\end{eqnarray}}
\newcommand{\nn}{\nonumber}
\newcommand{\hook}{\raisebox{-0.35ex}{\makebox[0.6em][r]
{\scriptsize $-$}}\hspace{-0.15em}\raisebox{0.25ex}{\makebox[0.4em][l]{\tiny
 $|$}}}
\newcommand{\eq}[1]{(\ref{#1})}
\newcommand{\cwedge}[1]{\mathop{\wedge}_{{}^{#1}} }
\newcommand{\even}{{\mathrm{e}}}
\newcommand{\odd}{{\mathrm{o}}}
\newcommand{\KY}{{\scriptscriptstyle\mathrm{KY}}}
\newcommand{\SN}{{\scriptscriptstyle\mathrm{SN}}}
\newcommand{\dm}{n}
\newcommand{\lied}{\mathcal{L}}
\newcommand{\gc}[2]{{[\mspace{-4mu}\lvert#1,#2\rvert\mspace{-4mu}]}}
\newcommand{\bgc}[2]{{\bigl[\mspace{-4.4mu}\lvert#1,#2\rvert\mspace{-4.4mu}\bigr]}}
\newcommand{\efr}{{\underset{\raisebox{0.5ex}{$\scriptscriptstyle\rightarrow$}}{e}\mspace{-2mu}}}
\newcommand{\Xfr}{{\overset{\raisebox{-0.5ex}{$\scriptscriptstyle\rightarrow$}}{X}\mspace{-1mu}}}
\newtheorem{prop}{Proposition}[section]
\newtheorem{lemma}[prop]{Lemma}
\preprint{DAMTP-2011-1}
\title{Commuting symmetry operators of the Dirac equation, Killing--Yano and Schouten-Nijenhuis brackets}
\author{Marco Cariglia\\
Universidade Federal de Ouro Preto, ICEB, Departamento de F\'isica. \\  Campus Morro do Cruzeiro, Morro do Cruzeiro,
35400-000 - Ouro Preto, MG - Brasil \\ E-mail: \email{marco@iceb.ufop.br}}
\author{Pavel Krtou\v{s}\\
Institute of Theoretical physics, Faculty of Mathematics and Physics, Charles University,\\
V Hole\v{s}ovi\v{c}k\'{a}ch 2, 180 00 Prague, Czech Republic\\
E-mail: \email{Pavel.Krtous@utf.mff.cuni.cz}}
\author{David Kubiz\v n\'ak\\
DAMTP, University of Cambridge, Wilberforce Road, Cambridge CB3 0WA, UK\\
E-mail: \email{D.Kubiznak@damtp.cam.ac.uk}
}
\abstract{%
In this paper we derive the most general first-order symmetry operator commuting with the Dirac operator
in all dimensions and signatures. Such an operator splits into Clifford even and Clifford odd parts which are given in terms of odd Killing--Yano and even closed conformal
Killing--Yano inhomogeneous forms respectively. We study commutators of
these symmetry operators and give necessary and sufficient conditions under which they remain of the first-order.
In this specific setting we can introduce a Killing--Yano bracket, a bilinear operation acting on
odd Killing--Yano and even closed conformal
Killing--Yano forms, and demonstrate that it is closely related to the Schouten--Nijenhuis bracket.
An important non-trivial example of vanishing Killing--Yano brackets is given by Dirac symmetry operators generated from the principal conformal Killing--Yano tensor [hep-th/0612029]. We show that among these operators one can find a complete subset of mutually commuting operators. These operators underlie separability of the Dirac equation in Kerr-NUT-(A)dS spacetimes in all dimensions [arXiv:0711.0078].
}
\keywords{Dirac equation, Killing--Yano tensors, commuting operators, separability, Kerr-NUT-AdS black holes, Killing--Yano brackets}
\begin{document}

\clearpage

\section{Introduction}\label{sc:intro}
Symmetries play a central role in modern theoretical physics. They comprise fundamental laws of nature, can be used for classifying solutions,
enable one to simplify otherwise complicated physical problems.
In this paper we shall concentrate on symmetries of fundamental equations in curved spacetime. At the operator level, such symmetries
correspond to the so called {\em symmetry operators}.\footnote{In this paper, by word `operator' we always mean
local (differential) operator.}
These are operators which {\em R-commute} with the operator of the corresponding equation.
Such operators have the property that they map one solution of the equation to another.
For example, let $D$ be a field operator (for concreteness we can think of the Dirac operator) and $\psi$ be a solution of the corresponding field
equation, $D\psi=0$. Then the operator $S$ is a symmetry operator
of $D$, i.e., $R$-commuting with $D$,  if it satisfies
\be\label{jedna0}
DS=RD
\ee
for some operator $R$.
Obviously, $\psi'=S\psi$ satisfy $D\psi'=DS\psi=RD\psi=0$ and hence $\psi'$ is a new solution of the field
equation.
Among all symmetry operators the
{\em commuting operators} play a prominent role. Their eigenvalues yield
 `quantum numbers' characterizing the solution---the `constants of motion'.
Very interesting is the case when one has a {\em complete set of mutually commuting operators} and their common eigenfunctions can be found by separating variables. The corresponding eigenvalues then completely characterize the separated solution and play a role of separation constants.

The problem of separability of various fundamental equations is particularly interesting in curved spacetime.
In fact, a complete theory is known only for the {\em additive separation} of the massive {\em Hamilton--Jacobi} equation \cite{BenentiFrancaviglia:1979, KalninsMiller:1981}.
The objects underlying the separability are Killing vectors and rank-2 Killing tensors.
These symmetries have to satisfy a whole set of algebraic and differential constraints; they have to
form the so called {\em separability structure}. The central role in this theory is played by the {\em symmetric Schouten--Nijenhuis brackets}
\cite{Schouten:1940, Schouten:1954, Nijenhuis:1955}; separability of the Hamilton--Jacobi equation is possible only if one can find a complete set of mutually (Schouten--Nijenhuis)-commuting Killing vectors and rank-2 Killing tensors.\footnote{This requirement is equivalent to the requirement that the corresponding {\em moment maps} (integrals of motion) are functionally independent and mutually Poisson commute.}
Closely related to the additive separation of the Hamilton--Jacobi equation is the theory of {\em multiplicative separation of the Klein--Gordon} equation. The key symmetry is again that of Killing vectors and rank-2 Killing tensors which in addition to form a separability structure have to satisfy additional `quantum' constraints \cite{Carter:1977}.

Much less is known about separability of field equations with spin.
In this paper we discuss the {\em Dirac equation}.
In curved spacetime the study of the subject dates back to the seminal paper of Chandrasekhar who in 1976 separated and decoupled the massive Dirac equation in the Kerr geometry \cite{Chandrasekhar:1976}.
A few years later Carter and McLenaghan \cite{CarterMcLenaghan:1979} were able to demonstrate that behind such separability stands a first-order operator commuting with the Dirac operator---constructed from a Killing--Yano 2-form of Penrose and Floyd \cite{Penrose:1973}.
This discovery stimulated subsequent developments in the study of symmetry operators of the Dirac equation in curved spacetime. In particular, the most general {\em first-order operator} commuting with the Dirac operator
in 4D was constructed by McLenaghan and Spindel \cite{McLenaghanSpindel:1979}. It corresponds to Killing vectors and Killing--Yano symmetries. This work was later extended by Kamran and McLenaghan \cite{KamranMcLenaghan:1984} to {\em R-commuting} symmetry operators, cf. Eq. \eq{jedna0}. Such operators map solutions of the massless Dirac equation to solutions and correspond to  symmetries which are {\em conformal generalizations} of Killing vectors and Killing--Yano tensors. Remarkably, these first-order symmetry operators are sufficient to justify separability of the massless Dirac equation in the whole Plebanski--Demianski class of spacetimes \cite{PlebanskiDemianski:1976}.
This is, however, not true in general. First-order operators are not enough to
completely characterize all Dirac separable systems and one has to consider higher-order symmetry
operators. In fact, Fels and Kamran  \cite{FelsKamran:1990} were able to provide an example
of 4D Lorentzian spin manifold where the Dirac equation separates but the separability is justified by an operator of the second-order.
This motivated people to study higher-order symmetry operators of the Dirac operator, e.g., \cite{McLenaghanEtal:2000}.
It also means that the theory of separability of the Dirac equation must reach outside the realms of the so called {\em factorizable systems} \cite{Miller:1988}, as such systems are fully characterized by first-order symmetry operators.

With recent developments in higher-dimensional gravity, string theory, and various supergravities physicists have become interested in
the Dirac operator in spacetimes of `arbitrary' dimension and signature as well as spacetimes where
the metric is supplemented by additional matter fields (fluxes) which couple to the spinor and modify the Dirac equation.
Consequently, symmetry operators of the (modified) Dirac operator as well as separability of the corresponding Dirac equation have been studied in these more general setups.
In particular,
in their remarkable paper Oota and Yasui  \cite{OotaYasui:2008} were able to
separate the massive Dirac equation in the most general known Kerr-NUT-(A)dS spacetime in all dimensions \cite{ChenEtal:2006cqg}.
Even more generally, separability of the torsion modified Dirac equation was demonstrated in the presence of $U(1)$ and torsion fluxes of the Kerr--Sen geometry and its higher-dimensional generalizations \cite{HouriEtal:2010b} as well as in the most general spherical black hole spacetime of minimal gauged supergravity \cite{Wu:2009a}.
It is the aim of this paper to intrinsically characterize these results.
For simplicity we limit ourselves to the case of a standard Dirac operator but impose no restrictions on
number of spacetime dimensions or the signature. (We believe that an analogous discussion can be performed for flux-modified Dirac operators as well.)
In order to keep formulas relatively simple and calculations tractable in what follows we use the effective formalism of forms developed in \cite{BennTucker:book, HouriEtal:2010a}. For convenience we summarize this formalism in the appendix to which we refer the reader.

It is well known that first-order symmetry operators of the Dirac operator in all dimensions correspond to symmetries associated with {\em conformal Killing--Yano (CKY)} forms.
In $n$ number of dimensions, an (inhomogeneous) CKY form $\omega$ can be written as a sum of its homogeneous $p$-form parts $\omega^{(p)}\in\Omega^p(M)$,
$\omega=\sum_p \omega^{(p)}$, and obeys the following (twistor) equation \cite{Kashiwada:1968}:
\be\label{jedna}
T_a\omega\equiv \nabla_a\omega-\frac{1}{\pi+1}X_a\hook d\omega+\frac{1}{n-\pi+1}e_a\wedge \delta \omega=0\,.
\ee
The operator $T_a$ on the left hand side is called a {\em twistor operator}. Hence, CKY forms are in the kernel of the twistor operator. Such forms are truly special---their covariant derivative splits into the exterior and divergence parts. If in addition the divergence part is zero, $\delta \omega=0$, we have {\em Killing--Yano (KY)} forms, whereas we have {\em closed conformal Killing--Yano (CCKY)} forms when $d\omega=0$.

The following important result on first-order symmetry operators of the Dirac operator is due to
Benn, Charlton, and Kress \cite{BennCharlton:1997, BennKress:2004}. It is valid
in all dimensions  $n$ and arbitrary signature:
\begin{prop}[Symmetry operators]\label{1.1}
The most general first-order symmetry operator $S$ of the Dirac operator $D=e^a\nabla_a$,
i.e. an operator satisfying $DS=RD$ for some operator ${R}$,
is given by
\be
S=S_\omega+\alpha D\,,
\ee
where $\alpha$ is an arbitrary inhomogeneous form, and $S_\omega$, given in terms of an inhomogeneous CKY form $\omega$ obeying \eqref{jedna},  is
\be\label{SOProp1}
S_\omega=X^a\hook\omega\nabla_a+\frac{\pi-1}{2\pi}d\omega-\frac{n-\pi-1}{2(n-\pi)}\delta\omega\;.
\ee
$S_\omega$ obeys
\be
\bgc{D}{S_\omega}\equiv DS_\omega-(\eta S_\omega)D=-\Bigl(\frac{\eta}{n-\pi} \delta\omega\Bigr)D\,.
\ee
\end{prop}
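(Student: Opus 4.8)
The plan is to split the statement into its two logically independent parts: the \emph{classification} of first-order symmetry operators of $D$, and the explicit evaluation of the twisted commutator $\bgc{D}{S_\omega}$ for a given inhomogeneous CKY form $\omega$. (Together these give the Proposition: the commutator identity shows each $S_\omega$, hence each $S_\omega+\alpha D$, R-commutes with $D$, since $\alpha D$ is itself trivially a symmetry operator; the classification shows there are no others.) For the classification I would write the most general first-order operator on Dirac spinors, in the Clifford/forms formalism of the appendix, as $S=\Xi^a\nabla_a+\beta$, with $\Xi^a$ an inhomogeneous form acting by Clifford multiplication and $\beta$ an inhomogeneous form. Since the principal symbol of $D$ is the (generically invertible) Clifford multiplication by a covector, $\mathrm{ord}(RD)=\mathrm{ord}(R)+1$, so $DS=RD$ forces $\mathrm{ord}(R)\le 1$; write $R=P^a\nabla_a+\varrho$. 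Inserting into $DS=RD$ and sorting by differential order produces a hierarchy of conditions.

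At second order, after extracting the curvature from $\nabla_{[b}\nabla_{a]}$, the identity reduces to the purely algebraic constraint $e^{(b}\Xi^{a)}=P^{(b}e^{a)}$ on the Clifford-valued coefficients. Decomposing $\Xi^a$ into its irreducible pieces under $\Omega^1\otimes\Lambda^\bullet\cong\Lambda^{\bullet+1}\oplus\Lambda^{\bullet-1}\oplus(\text{mixed-symmetry hook})$, and using the freedom to replace $S$ by $S+\alpha D$ (which shifts $\Xi^a\mapsto\Xi^a+\alpha\,e^a$ in the Clifford product), I expect this constraint to force the hook component to vanish and one of the remaining components to be removable, leaving $\Xi^a=X^a\hook\omega$ for a single inhomogeneous form $\omega$, with $P^a$ then determined in terms of $\omega$ up to an $\eta$-twist. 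With $\Xi^a=X^a\hook\omega$ fixed, the first-order part of $DS=RD$ becomes a condition on $\nabla_b\omega$: the part of $\nabla_b(X^a\hook\omega)$ that cannot be compensated by the lower-order coefficients $\beta,\varrho$ is exactly the mixed-symmetry (hook) piece of $\nabla\omega$, whose vanishing is precisely the twistor equation \eqref{jedna}, so $\omega$ must be CKY. Finally the zeroth-order part must close up: using the integrability conditions of \eqref{jedna} (obtained by commuting derivatives in the twistor equation, expressing the curvature acting on a CKY form through $\nabla d\omega$, $\nabla\delta\omega$ and algebraic curvature terms) one checks consistency and simultaneously pins $\beta$ down to $\tfrac{\pi-1}{2\pi}d\omega-\tfrac{n-\pi-1}{2(n-\pi)}\delta\omega$, modulo a term proportional to $D$ which is absorbed into $\alpha$. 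This yields $S=S_\omega+\alpha D$ with $S_\omega$ as in \eqref{SOProp1}.

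For the second part I would compute $DS_\omega=e^b\nabla_b\bigl(X^a\hook\omega\,\nabla_a+\tfrac{\pi-1}{2\pi}d\omega-\tfrac{n-\pi-1}{2(n-\pi)}\delta\omega\bigr)$ directly, commute $\nabla_b$ through the Clifford factors, and substitute the twistor equation for $\nabla_a\omega$ together with the differentiated twistor identities for $\nabla d\omega$ and $\nabla\delta\omega$; on the other side I would expand $(\eta S_\omega)D$ using $\eta$-linearity. Reorganizing both sides with the Clifford identities $e^a\mu=e^a\wedge\mu+X^a\hook\mu$, $\,e^a\wedge X_a\hook=\pi\,$ and $\,X^a\hook e_a\wedge=n-\pi\,$, I expect the degree-dependent coefficients $\tfrac{\pi-1}{2\pi}$ and $\tfrac{n-\pi-1}{2(n-\pi)}$ to be tuned precisely so that all terms cancel except a single one proportional to $\delta\omega$ standing to the left of $D$, giving $\bgc{D}{S_\omega}=-\bigl(\tfrac{\eta}{n-\pi}\delta\omega\bigr)D$.

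I expect the main obstacle to be the second-order algebraic step: disentangling $e^{(b}\Xi^{a)}=P^{(b}e^{a)}$ into representation-theoretic components cleanly enough to conclude $\Xi^a=X^a\hook\omega$ modulo the $\alpha D$ freedom, and then recognizing that the resulting first-order condition is \emph{literally} \eqref{jedna} and that the zeroth-order curvature terms close using the twistor integrability conditions. By contrast the concluding commutator computation, though long, is entirely mechanical once the twistor equation is at hand.
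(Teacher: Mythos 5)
The paper does not prove Proposition \ref{1.1}: it is quoted as a known result of Benn, Charlton and Kress \cite{BennCharlton:1997, BennKress:2004}, so there is no in-paper argument to compare yours against. Judged against the proof in those references, your strategy is the standard one and is correct in outline: write ${S=\Xi^a\nabla_a+\beta}$ and ${R=P^a\nabla_a+\varrho}$, sort ${DS=RD}$ by differential order, solve the second-order symbol condition ${e^{(b}\Xi^{a)}=P^{(b}e^{a)}}$ modulo the ${\alpha D}$ freedom to get ${\Xi^a=X^a\hook\omega}$, identify the first-order condition with the twistor equation \eqref{jedna}, and use the CKY integrability conditions to close the zeroth order and fix ${\beta}$. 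The concluding graded-commutator identity does follow by the mechanical computation you describe once the twistor equation is substituted.

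That said, what you have written is a plan rather than a proof: the two steps you yourself flag as the main obstacles are exactly the ones carrying all the content, and they are only asserted (``I expect\dots''). Concretely: (i) the decomposition of the symbol constraint requires showing that the mixed-symmetry (hook) part of ${\Xi^a}$ is forced to vanish and that precisely one of the remaining components can be gauged away by ${\alpha e^a}$ --- note that Clifford multiplication ${\alpha e^a=(-1)^p\bigl(e^a\wedge\alpha-X^a\hook\alpha\bigr)}$ mixes both components, so the bookkeeping of which representative survives (here ${X^a\hook\omega}$, versus ${\frac12(e^a\omega+\omega e^a)}$ in the equivalent form \eqref{Sdef}) has to be done explicitly; (ii) the zeroth-order closure requires the curvature integrability conditions for CKY forms, which must be derived rather than invoked, and it is only there that the specific coefficients ${\frac{\pi-1}{2\pi}}$ and ${\frac{\dm-\pi-1}{2(\dm-\pi)}}$ are actually forced. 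Since none of these computations is carried out, the proposal cannot be certified as a complete proof, though nothing in it is wrong and it faithfully reproduces the architecture of the Benn--Kress argument.
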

\noindent Apart from an (inevitable) freedom of adding an arbitrary form $\alpha$ this proposition states that in all dimensions and signatures first-order Dirac symmetry operators are in one-to-one correspondence with CKY symmetries.
It generalizes the 4D result of Kamran and McLenaghan \cite{KamranMcLenaghan:1984}.

One of the main goals of the present paper is to shed some light on the result of Oota and Yasui \cite{OotaYasui:2008}.
Namely we want to demonstrate that, similar to four
\cite{CarterMcLenaghan:1979}  and five \cite{Wu:2008b} dimensions,
in any dimension separability of the Dirac equation in the Kerr-NUT-AdS spacetimes is accompanied by the existence of a {\em complete set of mutually commuting first-order operators}.
We show that this set can be chosen to consist of operators corresponding to Killing vectors and CCKY even-forms which can be generated from the {\em principal conformal Killing--Yano (PCKY) tensor} present in the spacetime \cite{KubiznakFrolov:2007,KrtousEtal:2007jhep}.
For this purpose we need to study mutual commutation of operators in the set. We shall prove the result in three steps.
i) Starting from Prop. \ref{1.1} we first establish a uniqueness result for first-order operators commuting with the Dirac operator. We demonstrate that in all dimensions such operators have a unique form. They split into Clifford odd and Clifford even parts and correspond to CCKY even-forms and KY odd-forms, respectively. ii) We study commutators of these operators. (Such commutators trivially commute with the Dirac operator.) We give
sufficient conditions under which these commutators remain of the first-order.
In that case the commutators have to be first-order operators commuting with the Dirac operator and hence of the form studied in the previous step. This allows us
to introduce a {\em Killing--Yano bracket}, a bilinear operation acting on the space of odd KY and even CCKY forms. In particular when the Killing--Yano bracket vanishes the corresponding operators commute. iii) We demonstrate that for the chosen set of symmetries generated from the PCKY tensor in Kerr-NUT-AdS spacetimes all Killing--Yano brackets vanish. Hence, the corresponding operators mutually commute and since they are all `independent' they form a complete set of commuting operators.

As a by-product of our construction we have introduced a {\em Killing--Yano bracket}, a bilinear operation acting on a subspace of odd KY and even CCKY forms. The definition of Killing--Yano brackets can be further extended by considering various (graded) \mbox{(anti-)commuting} linear symmetry operators of the Dirac operator. Such operators are of their own importance; they are relevant for the discussion of the relationship between the Dirac operator level and the corresponding spinning particle description \cite{GibbonsEtal:1993}, they are directly related to the higher-order symmetry symmetry operators of the Dirac equation.
Moreover, studying these operators allows to generalize the Killing--Yano brackets to KY forms of arbitrary degree, and to define graded Killing--Yano brackets for arbitrary KY forms, which for homogeneous forms are directly related to Schouten--Nijenhuis brackets. Using Hodge duality we give a number of conditions under which Killing--Yano brackets of KY and CCKY forms reduce to Schouten--Nijenhuis brackets.
A further possible generalization is to consider $R$-commuting symmetry operators of the Dirac operator and hence to extend the Killing--Yano brackets to arbitrary CKY forms.
 In particular, we give the Killing--Yano bracket of a CKY 1-form and an arbitrary CKY form $\omega$ and show that if the 1-form is KY then this bracket reduces to the Lie derivative of $\omega$.

The fact that Killing--Yano brackets are related to Schouten--Nijenhuis brackets is closely linked to the work of Kastor, Ray, and Traschen \cite{KastorEtal:2007}. In their paper the authors investigated whether KY tensors form a Lie algebra with respect  to the Schouten--Nijenhuis brackets.\footnote{%
It is well known that (symmetric) Killing tensors form an algebra with respect to the symmetric Schouten--Nijenhuis brackets.
}
They demonstrated that this is indeed the case in maximally symmetric spacetimes. On the other hand they were able to find two counter examples of spacetimes where this is not true. Our results provide a natural framework for these investigations. In particular we are able to formulate sufficient conditions under which Killing--Yano brackets generate KY tensors.

The paper is organized as follows. In the next section we establish a uniqueness result for the first-order symmetry operators commuting with the Dirac operator, study commutators of these operators, and introduce Killing--Yano  brackets. In Sec.~\ref{sc:PCKYsym} general results are applied to the tower of hidden symmetries generated from the PCKY tensor in Kerr-NUT-(A)dS spacetimes in all dimensions. In particular, we show that in these spacetimes one can find a complete set of mutually commuting operators for the Dirac equation. In Sec.~\ref{sc:1ordsym} we extend the definition of Killing--Yano brackets to arbitrary KY forms,  define graded Killing--Yano brackets of KY forms by studying other first-order symmetry operators of the Dirac operator, and present uniqueness results for the general form of operators anti-commuting, graded commuting and graded anti-commuting with the Dirac operator. Sec.~\ref{sc:concl} is devoted to discussion and conclusions. In App.~\ref{apx:notation} we gather  our notation and useful identities as well as some technical results
related to Hodge duality, App.~\ref{apx:PCKYsym} summarizes the results on the properties of the tower of hidden symmetries generated from a CCKY 2--form.


\section{Commuting symmetry operators and Killing--Yano brackets}\label{sc:comopKYbr}


\subsection{Commuting symmetry operators}\label{ssc:commuting}
In this section we derive the most general first-order operator commuting with the Dirac operator in all dimensions.
We start from the Prop. \ref{1.1} of Benn, Charlton, and Kress \cite{BennCharlton:1997, BennKress:2004}.
It is instructive 
to first re-cast this theorem in the following symmetric form:
\begin{prop}
The most general first-order symmetry operator $S$ for the Dirac operator $D$ is given by
\be\label{tL}
  S={\cal S}_\omega+\alpha D\;,
\ee
where $\alpha$ is an arbitrary inhomogeneous form, and ${\cal S}_\omega$ is defined in terms of an inhomogeneous CKY form $\omega$ as
\be\label{Sdef}
  {\cal S}_\omega=\frac12\bigl(e^a\omega+\omega e^a\bigr) \nabla_{\!a}
  +\frac{\pi-1}{2\pi}d\omega-\frac{\dm-\pi-1}{2(\dm-\pi)}\delta\omega\;.
\ee
The operator ${{\cal S}_\omega}$ obeys
\be\label{comSD}
[D, {\cal S}_\omega] = \Bigl(\frac1\pi d\omega_\even-\frac1{\dm-\pi} \delta\omega_\odd \Bigr)\,D\;,
\ee
with ${\omega_\odd}$ and ${\omega_\even}$ being the odd and even parts of ${\omega}$.
\end{prop}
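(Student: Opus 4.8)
The plan is to deduce both claims directly from Proposition~\ref{1.1}, essentially by a change of parametrization, so that no new analytic work is needed. First I would split the CKY form as $\omega=\omega_\even+\omega_\odd$ and use the elementary Clifford identities $e^a\omega^{(p)}=e^a\w\omega^{(p)}+X^a\hook\omega^{(p)}$ and $\omega^{(p)}e^a=(-1)^p\bigl(e^a\w\omega^{(p)}-X^a\hook\omega^{(p)}\bigr)$ recorded in the appendix to rewrite the first-order part of \eqref{Sdef}:
\be
  \tfrac12\bigl(e^a\omega+\omega e^a\bigr)\nabla_{\!a}
  =\bigl(e^a\w\omega_\even+X^a\hook\omega_\odd\bigr)\nabla_{\!a}
  =X^a\hook\omega\,\nabla_{\!a}+\omega_\even D\;.
\ee
Since the zeroth-order parts of ${\cal S}_\omega$ and of the operator $S_\omega$ of \eqref{SOProp1} are literally the same, this gives ${\cal S}_\omega=S_\omega+\omega_\even D$. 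As $\omega_\even$ is itself an inhomogeneous form, the family $\{\,{\cal S}_\omega+\alpha D\,\}$, taken over all CKY forms $\omega$ and all inhomogeneous $\alpha$, coincides with the family $\{\,S_\omega+\alpha D\,\}$ of Proposition~\ref{1.1}; this establishes the first assertion.

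For the commutator \eqref{comSD} I would start from $[D,{\cal S}_\omega]=[D,S_\omega]+[D,\omega_\even D]=[D,S_\omega]+[D,\omega_\even]\,D$, so that only two pieces have to be assembled. Proposition~\ref{1.1} supplies the graded commutator $\bgc{D}{S_\omega}=[D,(S_\omega)_\even]+\{D,(S_\omega)_\odd\}=-\bigl(\tfrac{\eta}{\dm-\pi}\,\delta\omega\bigr)D$; passing from it to the ordinary commutator costs precisely the term $-2(S_\omega)_\odd D$, where $(S_\omega)_\odd=(X^a\hook\omega_\even)\nabla_{\!a}+\tfrac{\pi-1}{2\pi}d\omega_\even-\tfrac{\dm-\pi-1}{2(\dm-\pi)}\delta\omega_\even$ is the Clifford-odd part of $S_\omega$. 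A single Leibniz step, together with $e^a\omega_\even-\omega_\even e^a=2X^a\hook\omega_\even$, gives $[D,\omega_\even]=d\omega_\even-\delta\omega_\even+2(X^a\hook\omega_\even)\nabla_{\!a}$ (with the sign of $\delta$ fixed in the appendix). Adding the three contributions, the first-order terms $\pm 2(X^a\hook\omega_\even)\nabla_{\!a}D$ cancel, and collecting the zeroth-order coefficients --- using $\eta\,\delta\omega=\delta\omega_\odd-\delta\omega_\even$ --- leaves exactly $\bigl(\tfrac1\pi d\omega_\even-\tfrac1{\dm-\pi}\delta\omega_\odd\bigr)D$, i.e.\ \eqref{comSD}.

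The only genuinely load-bearing ingredient is the Clifford rewriting above, which is what makes the symmetric form \eqref{Sdef} possible; everything after it is bookkeeping. I expect the real pitfall to be keeping the Clifford gradings straight and, relatedly, being precise about the operator ordering inside the degree-operator prefactors $\tfrac{\pi-1}{2\pi}$ and $\tfrac{\eta}{\dm-\pi}$ --- in particular whether $\pi$ and $\eta$ are to be evaluated on the homogeneous part $\omega^{(p)}$ or on $d\omega^{(p)}$, $\delta\omega^{(p)}$. Once the appendix conventions are pinned down, the cancellations in the last step are forced.
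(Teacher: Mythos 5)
Your proof is correct and takes essentially the same route as the paper, whose entire argument is the one-line observation that ${\cal S}_\omega=S_\omega+\omega_\even D$ together with Prop.~\ref{1.1}; your Clifford-identity rewriting of $\tfrac12(e^a\omega+\omega e^a)$ and the subsequent commutator bookkeeping are precisely the details the paper leaves implicit, and they check out.
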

\begin{proof}
This proposition directly follows from Prop. \ref{1.1} if one sets ${\cal S}_\omega=S_\omega+\omega_eD$.
\end{proof}
\noindent
It is now simple to specialize this result to the case of operators that commute with the Dirac operator $D$:
\begin{prop}[Commutation with ${D}$]\label{2.2}
The most general first-order operator ${S}$ which commutes with the Dirac operator $D$, ${[D,S]=0}$, splits into the Clifford even and Clifford odd parts
\be\label{Scom}
  S=S_\even+S_\odd\,,
\ee
where
\begin{align}
    S_\even&= K_{\omega_\odd} \equiv X^a\hook\omega_\odd\nabla_{\!a}
       + \frac{\pi-1}{2\pi}d\omega_\odd\;,
       \quad&&\text{with ${\omega_\odd}$ being an odd KY form}\;,\label{Kdef}\\
    S_\odd&= M_{\omega_\even} \equiv e^a\wedge\omega_\even\nabla_{\!a}
       - \frac{\dm-\pi-1}{2(\dm-\pi)}\delta\omega_\even\;,
       \quad&&\text{with ${\omega_\even}$ being an even CCKY form}\;.\label{Mdef}
\end{align}
\end{prop}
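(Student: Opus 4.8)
\emph{Proof sketch.} The plan is to start from the previous proposition, which writes the general first-order symmetry operator as $S={\cal S}_\omega+\alpha D$ with $\omega$ an inhomogeneous CKY form and $\alpha$ an arbitrary inhomogeneous form, and to read off the consequences of imposing $[D,S]=0$. First I would expand $[D,S]=[D,{\cal S}_\omega]+[D,\alpha D]$. By the previous proposition $[D,{\cal S}_\omega]=(\frac1\pi d\omega_\even-\frac1{\dm-\pi}\delta\omega_\odd)D$ is a \emph{first-order} operator, while $[D,\alpha D]=[D,\alpha]D$ with $[D,\alpha]=\beta_\alpha+[e^a,\alpha]\nabla_{\!a}$, where $\beta_\alpha=e^a\nabla_{\!a}\alpha$ is a form (so the first term is zeroth order) and $[e^a,\alpha]$ is the Clifford commutator; hence $[D,\alpha D]$ is a priori of \emph{second} order, with leading part $[e^a,\alpha]\,e^b\nabla_{\!a}\nabla_{\!b}$.

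The key step is then to dispose of the $\alpha D$ ambiguity. Since $[D,{\cal S}_\omega]$ has no second-order part, the leading part of $[D,\alpha D]$ must vanish identically; symmetrizing in $a,b$ and specializing to $a=b$ in an orthonormal frame (where $(e^a)^2=\pm1$ is invertible) forces $[e^a,\alpha]=0$ for every $a$, i.e. $\alpha$ lies in the centre of the Clifford algebra. Thus $\alpha$ is a function $f$ (together, in odd dimensions, with a term $g\,\mathrm{vol}$), so that $[D,\alpha]=\beta_\alpha$ reduces to zeroth order and the surviving first-order condition reads $(\frac1\pi d\omega_\even-\frac1{\dm-\pi}\delta\omega_\odd+\beta_\alpha)D=0$; a comparison of homogeneous parts (the same argument used below) then forces $f$ and $g$ to be constant. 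But a constant is a trivial even CCKY form and a constant multiple of the covariantly constant volume form is a trivial odd KY form, with $fD=M_f$ and $g\,\mathrm{vol}\,D=K_{g\,\mathrm{vol}}$, so these pieces are already of the asserted type; absorbing them, we may take $\alpha=0$.

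Finally, with $\alpha=0$ the condition is $(\frac1\pi d\omega_\even-\frac1{\dm-\pi}\delta\omega_\odd)D=0$. Using that $(\beta)D=0$ forces the form $\beta$ to vanish (its symbol at a non-null covector $\xi$ is $\beta\xi$, with $\xi$ invertible), and that $d\omega_\even$ and $\delta\omega_\odd$ are of opposite degree-parity and so cannot cancel, I would conclude $d\omega_\even=0$ and $\delta\omega_\odd=0$. Hence $\omega_\even$ (itself CKY, the twistor operator preserving parity) is an even CCKY form and $\omega_\odd$ an odd KY form; for such forms the $\delta$-term drops from ${\cal S}_{\omega_\odd}$ and the $d$-term from ${\cal S}_{\omega_\even}$, and ${\cal S}_{\omega_\odd}$ is Clifford even while ${\cal S}_{\omega_\even}$ is Clifford odd (their symbols $X^a\hook\omega_\odd$, $e^a\wedge\omega_\even$ being of even and odd degree respectively), so $S={\cal S}_{\omega_\even}+{\cal S}_{\omega_\odd}=M_{\omega_\even}+K_{\omega_\odd}$, as claimed; the converse inclusion is immediate from the commutator formula of the previous proposition. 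I expect the main obstacle to be the second step, showing that the harmless-looking $\alpha D$ freedom is genuinely eliminated by the commutation requirement; the remaining steps are routine bookkeeping with the Clifford product.
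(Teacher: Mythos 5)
Your argument is correct and is essentially the paper's own proof: the paper likewise starts from $S={\cal S}_\omega+\alpha D$, kills the second-order part of $[D,\alpha D]$ by forcing $[e^a,\alpha]=2\bigl(e^a\wedge\alpha_\odd+X^a\hook\alpha_\even\bigr)=0$ (your ``centre of the Clifford algebra'' condition, equivalent to $\pi\alpha_\even=0$ and $(\dm-\pi)\alpha_\odd=0$, i.e.\ a $0$-form plus, in odd dimensions, an $\dm$-form), absorbs these pieces into ${\cal S}_\omega$, and then reads off $d\omega_\even=0$, $\delta\omega_\odd=0$ from the zeroth-order coefficient of $D$. The one imprecision is your claim that $f$ and $g$ are forced to be \emph{constant}: the degree-$1$ part of the residual condition only gives $d\bigl(\omega_\even^{(0)}+f\bigr)=0$ (and similarly for $g$ against the top-degree part of $\omega_\odd$), reflecting the redundancy between the central parts of $\omega$ and $\alpha$ in the decomposition $S={\cal S}_\omega+\alpha D$, but since you absorb these pieces into $M$- and $K$-type operators anyway, the conclusion is unaffected.
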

\begin{proof}[Proof] It is straightforward to check that operator ${\cal S}_\omega$, \eqref{Sdef}, can be written as
\begin{equation}\label{S=KM}
    {\cal S}_\omega = K_{\omega_\odd} + M_{\omega_\even}
        - \frac{\dm-\pi-1}{\dm-\pi}\delta\omega_\odd
        + \frac{\pi-1}{2\pi}d\omega_\even\;.
\end{equation}
Clearly, if ${\omega_\odd}$ is a KY form and ${\omega_\even}$ is a CCKY form, we get ${{\cal S}_\omega = K_{\omega_\odd} + M_{\omega_\even}}$ and the right hand side of \eqref{comSD} vanishes. Hence operators \eqref{Kdef} and \eqref{Mdef} commute with the Dirac operator. To prove the uniqueness,
we start from a general symmetry operator $S$, \eqref{tL}, and search for the most general inhomogeneous form $\alpha$ for which this operator commute with the Dirac operator.
We find
\begin{equation}\label{genSDcom}
\begin{split}
    [D,S] &= [D,{\cal S}_\omega+\alpha D]=\Bigl(\frac1\pi d\omega_\even-
      \frac1{\dm{-}\pi}\delta\omega_\odd
      +[D,\alpha]\Bigr)\,D\\
      &=\Bigl[2\bigl(e^a\wedge\alpha_\odd+X^a\hook\alpha_\even\bigr)\nabla_{\!a}
       + \Bigl(\frac1\pi d\omega_\even + d\alpha
       - \frac1{\dm{-}\pi}\delta\omega_\odd-\delta\alpha\Bigr)\Bigr]\,D\;.
\end{split}
\end{equation}
If we require that it should vanish, the square brackets on the right-hand side must be zero in both orders of the derivative. The first-order term implies
\begin{equation}\label{alphatriv}
    (\dm-\pi)\alpha_\odd=0\;,\qquad \pi\alpha_\even =0\;,
\end{equation}
i.e., ${\alpha_\even}$ must be 0-form (which is automatically a KY form) and, in an odd dimension, ${\alpha_\odd}$ must be ${\dm}$-form (which is a CCKY form). Therefore, the ${\alpha}$-term
\begin{equation}
\alpha D=(e^a\wedge\alpha_\even-X^a\hook\alpha_\odd)\nabla_{\!a}
\end{equation}
can be absorbed into ${{\cal S}_\omega}$. The zeroth-order term in \eqref{genSDcom} thus gives
$d\omega_\even=0$, $\delta\omega_\odd=0$\,,
and Eq. \eqref{S=KM} concludes the proof of the proposition.
\end{proof}
\noindent {\em Remark:} The operators $K_{\omega_o}$ and $M_{\omega_e}$ were first constructed in \cite{BennCharlton:1997}. Prop.~\ref{2.2} states that any first-order operator which commutes with the Dirac operator is given in terms of such operators.


\subsection{First-order commutators}\label{ssc:foc}
Let $S_1$ and $S_2$ be arbitrary (not necessarily first-order) operators commuting with the Dirac operator, ${[S_{1,2},D]=0}$. Then, clearly, also ${S=[S_1,S_2]}$ commutes with ${D}$, ${[S,D]=0}$. So in this way we can always construct `new operators' commuting with ${D}$. These new operators, however, will be in general of higher-order in derivatives. For example, when $S_1$ and $S_2$ are of the first-order, their commutator ${S}$ is in general of the second order.\footnote{This is one of the reasons which led people to consider
higher-order symmetry operators of the Dirac operator, see, e.g., \cite{McLenaghanEtal:2000}.}
In what follows we restrict our considerations to the first-order operators commuting with~${D}$, i.e., to operators $K_{\omega_o}$ and $M_{\omega_e}$ given by formulas \eqref{Kdef} and \eqref{Mdef}. More specifically, we concentrate on a special subset of these operators
for which their commutator remains linear in derivatives. The uniqueness result from Prop.~\ref{2.2} then guarantees that in such a case the commutator is again of the form \eqref{Kdef} and \eqref{Mdef}. This allows us to formulate the following:
\begin{prop}[Commutators]\label{pr:algfoop}
Let ${\kappa}$, ${\lambda}$, ${\mu}$ be odd KY forms and ${\alpha}$, ${\beta}$, ${\omega}$ be even CCKY forms
obeying the following algebraic conditions:
\begin{subequations}\label{fordcond}
\begin{gather}{}
    [X^{(a}\hook\kappa,X^{b)}\hook\lambda] = 0\;,\label{fordcondKK}\\
    [X^{(a}\hook\mu,e^{b)}\wedge\omega] = 0\;,\label{fordcondKM}\\
    [e^{(a}\wedge\alpha,e^{b)}\wedge\beta] = 0\;,\label{fordcondMM}
\end{gather}
\end{subequations}
where ${[\ , \ ]}$ means the Clifford commutator on forms. Then and only then
commutators ${[K_\kappa,K_\lambda]}$, ${[K_\mu,M_\omega]}$, and ${[M_\alpha,M_\beta]}$ of the corresponding operators \eqref{Kdef} and \eqref{Mdef} remain of the first-order and one can define `new' odd KY forms ${[\kappa,\lambda]_\KY}$, ${[\alpha,\beta]_\KY}$ and a `new' even CCKY form ${[\mu,\omega]_\KY}$ such that
\begin{equation}\label{KYbrackets0}
\begin{gathered}{}
    [K_\kappa,K_\lambda] = K_{[\kappa,\lambda]_\KY}\;,\quad
    [K_\mu,M_\omega] = M_{[\mu,\omega]_\KY}\;,\quad
    [M_\alpha,M_\beta] = K_{[\alpha,\beta]_\KY}\;.
\end{gathered}
\end{equation}
\end{prop}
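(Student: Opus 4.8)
The plan is to reduce the whole statement to an analysis of the genuinely second-order part of the three commutators. Each of $K_\kappa$, $M_\omega$ is a first-order operator with principal part $X^a\hook\kappa\,\nabla_{\!a}$, respectively $e^a\wedge\omega\,\nabla_{\!a}$ (Clifford multiplication by a form composed with a covariant derivative), so a commutator of two such operators is a priori a second-order operator. Writing a generic first-order operator as $S_i=A_i^a\nabla_{\!a}+B_i$ with $A_i^a,B_i$ acting by Clifford multiplication, the only term of $[S_1,S_2]$ that carries two derivatives is $(A_1^aA_2^b-A_2^aA_1^b)\nabla_{\!a}\nabla_{\!b}$. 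Splitting $\nabla_{\!a}\nabla_{\!b}=\nabla_{\!(a}\nabla_{\!b)}+\nabla_{\![a}\nabla_{\!b]}$ and using that $\nabla_{\![a}\nabla_{\!b]}$ is the curvature operator, hence algebraic and of zeroth order, the second-order content of $[S_1,S_2]$ is $[A_1^{(a},A_2^{b)}]\,\nabla_{\!(a}\nabla_{\!b)}$, where $[\cdot,\cdot]$ is the Clifford commutator and round brackets symmetrize over $a,b$. Since $\nabla_{\!(a}\nabla_{\!b)}$ is a genuine second-order operator, $[S_1,S_2]$ drops to first order exactly when the symmetric tensor $[A_1^{(a},A_2^{b)}]$ vanishes. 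Inserting the three pairs of principal symbols $(X^a\hook\kappa,X^b\hook\lambda)$, $(X^a\hook\mu,e^b\wedge\omega)$, and $(e^a\wedge\alpha,e^b\wedge\beta)$ reproduces precisely the conditions \eqref{fordcondKK}, \eqref{fordcondKM}, \eqref{fordcondMM}; this establishes both directions of the ``then and only then''.

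It remains to identify the resulting first-order operator. Since $[S_{1},D]=[S_2,D]=0$ implies $[[S_1,S_2],D]=0$, whenever the commutator is first order it must, by the uniqueness part of Prop.~\ref{2.2}, equal $K_{\omega_\odd}+M_{\omega_\even}$ for an odd KY form $\omega_\odd$ and an even CCKY form $\omega_\even$. Now $K_\kappa$ has Clifford-even coefficients ($X^a\hook\kappa$ and $d\kappa$ are even forms when $\kappa$ is odd) while $M_\omega$ has Clifford-odd coefficients, so $[K_\kappa,K_\lambda]$ and $[M_\alpha,M_\beta]$ are Clifford even and $[K_\mu,M_\omega]$ is Clifford odd. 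Matching parities kills one of the two summands and gives \eqref{KYbrackets0} with uniquely determined forms, the uniqueness following from the injectivity of $\kappa\mapsto K_\kappa$ and $\omega\mapsto M_\omega$ (their principal symbols $X^a\hook\kappa$, $e^a\wedge\omega$ recover $\kappa$, $\omega$). These forms I would take as the \emph{definition} of $[\kappa,\lambda]_\KY$, $[\alpha,\beta]_\KY$ (odd KY) and $[\mu,\omega]_\KY$ (even CCKY).

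The main obstacle is not conceptual but computational: to check the claim one must honestly expand the three commutators, track all the first-order terms generated by $\nabla_{\!a}A_j^b$, $d\kappa$, $\delta\omega$, and the curvature contributions, and keep the Clifford identities relating $e^a\wedge$ and $X^a\hook$ straight, in particular aligning the symmetrization convention of \eqref{fordcond} with that in $[A_1^{(a},A_2^{b)}]$. The helpful point, which I would exploit rather than fight, is that the most laborious part of such a computation --- verifying that, once the symbol obstruction vanishes, the surviving first- and zeroth-order terms really do assemble into an admissible commuting operator --- is handed to us for free by Prop.~\ref{2.2}: it suffices to show that the second-order symbol vanishes, and the structural conclusion \eqref{KYbrackets0} then follows automatically.
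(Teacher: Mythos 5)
Your proposal is correct and follows essentially the same route as the paper's proof: isolate the genuinely second-order part of the commutator as $[A_1^{(a},A_2^{b)}]\nabla_{\!a}\nabla_{\!b}$ after absorbing $\nabla_{\![a}\nabla_{\!b]}$ into the curvature, obtain conditions \eqref{fordcond} as the exact vanishing criterion, and then invoke the uniqueness of Prop.~\ref{2.2} together with the Clifford-parity argument to conclude \eqref{KYbrackets0}. The only difference is that the paper additionally inverts the surviving first-order coefficient (via $e_b\wedge$ or $X_b\hook$) to extract the explicit formulas \eqref{KY1}--\eqref{KY3} for the brackets, which you defer; this is not needed for the proposition as stated but is used later in the paper.
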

\begin{proof}
Let us first consider commutator $[K_\kappa,K_\lambda]$. To simplify our calculation, we denote
\be\label{pomocne}
\kappa^a=X^a\hook \kappa\,,\quad \tilde \kappa=\frac{\pi-1}{2\pi}d\kappa\,,\quad
\lambda^a=X^a\hook \lambda\,,\quad \tilde \lambda=\frac{\pi-1}{2\pi}d\lambda\,.
\ee
Then, using \eqref{Kdef}, we find
\be
K_\kappa K_\lambda =\kappa^a\lambda^b \nabla_a\nabla_b+\bigl[\kappa^a(\nabla_a\lambda^b)+\kappa^b\tilde \lambda+\tilde \kappa\lambda^b\bigr]\nabla_b
+\kappa^a(\nabla_a\tilde \lambda)+\tilde \kappa\tilde \lambda\,,
\ee
and we have
\ba\label{pom}
[K_\kappa, K_\lambda]&=&(\kappa^a\lambda^b-\lambda^a\kappa^b)\nabla_{a}\nabla_{b}+
\Bigl\{\kappa^a(\nabla_a\lambda^b)-\lambda^a(\nabla_a\kappa^b)+[\kappa^b, \tilde \lambda]-[\lambda^b,\tilde \kappa]\Bigr\}\nabla_b
\nonumber\\
&&+\,\kappa^a(\nabla_a\tilde \lambda)-\lambda^a(\nabla_a\tilde \kappa)+[\tilde \kappa,\tilde \lambda]\nonumber\\
&=&
[\kappa^{(a},\lambda^{b)}]\nabla_{a}\nabla_{b}+
\Bigl\{\kappa^a(\nabla_a\lambda^b)-\lambda^a(\nabla_a\kappa^b)+[\kappa^b, \tilde \lambda]-[\lambda^b,\tilde \kappa]\Bigr\}\nabla_b\nonumber\\
&&+\,\mbox{zeroth order terms}\,.
\ea
In the second equality we have used the fact that the antisymmetric part of $\nabla_a\nabla_b$ gives the curvature and hence contributes to the zeroth-order term of the commutator. Obviously, the requirement that the commutator remains of the first-order imposes a condition
$[\kappa^{(a},\lambda^{b)}]=0$, which is equivalent to Eq. \eqref{fordcondKK}.
If this condition is satisfied, $[K_\kappa, K_\lambda]$ must be a linear operator commuting with the Dirac operator and hence of the form described in Prop.~\ref{2.2}. Moreover, using formulas \eqref{usefulProduct1}, we realize that the expression in the curly bracket of \eqref{pom} is even.
Hence, the commutator must be even, and since it is of the form \eqref{tL}, we must have
$[K_\kappa, K_\lambda]=K_{[\kappa,\lambda]_{\KY}}$\,, where we have denoted a `new' KY odd form by $[\kappa,\lambda]_{\KY}$. One can easily extract
an explicit expression for this form by noting that we must have $\{\ \ \}=X^b\hook [\kappa,\lambda]_{\KY}$.
By inverting this relation we find
\be\label{KY1}
[\kappa,\lambda]_{\KY}=\frac{1}{\pi}e_b\wedge \Big\{\kappa^a(\nabla_a\lambda^b)-\lambda^a(\nabla_a\kappa^b)+[\kappa^b, \tilde \lambda]+[\tilde \kappa,\lambda^b]\Bigr\}\,,
\ee
where quantities $\kappa^a\,, \tilde \kappa\,, \lambda^a\,, \tilde \lambda$ are related to $\kappa$ and $\lambda$ by \eqref{pomocne}.

The other two relations are quite analogous. In the case of $[K_\mu,M_\omega]$ we denote
\be\label{pomocne2}
\mu^a=X^a\hook \mu\,,\quad \tilde \mu=\frac{\pi-1}{2\pi}d\mu\,,\quad
\omega^a=e^a\wedge \omega\,,\quad \tilde \omega=-\frac{n-\pi-1}{2(n-\pi)}\delta \omega\,.
\ee
Then commutator $[K_\mu,M_\omega]$ is given by \eqref{pom} with $\kappa\leftrightarrow \mu$ and $\lambda\leftrightarrow \omega$ replaced. That is, we have
\be\label{pom2}
[K_\mu,M_\omega]=
[\mu^{(a},\omega^{b)}]\nabla_{a}\nabla_{b}+
\Big\{\mu^a(\nabla_a\omega^b)-\omega^a(\nabla_a\mu^b)+[\mu^b, \tilde \omega]-[\omega^b,\tilde \mu]\Bigr\}\nabla_b+\mbox{zeroth order}.
\ee
The requirement that the first term vanishes gives \eqref{fordcondKM}. The second term is now odd and so we must have $[K_\mu,M_\omega]=M_{[\mu,\omega]_{\KY}}$, where new even CCKY $[\mu,\omega]_{\KY}$ must obey
$\{\ \ \}=e^b\wedge [\mu,\omega]_{\KY}$.  By inverting this relation we obtain
\be\label{KY2}
[\mu,\omega]_{\KY}=\frac{1}{n-\pi}X_b\hook \Big\{\mu^a(\nabla_a\omega^b)-\omega^a(\nabla_a\mu^b)+[\mu^b, \tilde \omega]-[\omega^b,\tilde \mu]\Bigr\}\,,
\ee
where quantities $\mu^a\,, \tilde \mu\,, \omega^a\,, \tilde \omega$ are related to $\mu$ and $\omega$ by \eqref{pomocne2}.

Finally, in the third case we obtain
\be\label{KY3}
[\alpha,\beta]_{\KY}=\frac{1}{\pi}e_b\wedge\Big\{\alpha^a(\nabla_a\beta^b)-\beta^a(\nabla_a\alpha^b)+[\alpha^b, \tilde \beta]-[\beta^b,\tilde \alpha]\Bigr\}\,,
\ee
where
\be\label{pomocne3}
\alpha^a=e^a\wedge \alpha\,,\quad \tilde \alpha=-\frac{n-\pi-1}{2(n-\pi)}\delta \alpha\,,\quad
\beta^a=e^a\wedge \beta\,,\quad \tilde \beta=-\frac{n-\pi-1}{2(n-\pi)}\delta \beta\,,
\ee
which completes the proof of this proposition.
\end{proof}

Algebraic conditions \eqref{fordcond} are non-trivial and rather restrictive. By taking the trace with respect to indices $a$ and $b$ we obtain {\em necessary conditions}
\be\label{necesary_cond}
[X^{a}\hook\kappa,X_a\hook\lambda] = 0\;,\quad
    [X^{a}\hook\mu,e_a\wedge\omega] = 0\;,\quad
    [e^{a}\wedge\alpha,e_a\wedge\beta] = 0\;,
\ee
which shall be used in next subsections.
Using formulas for commutators listed in App.~\ref{apx:usefulid}, we can expand these conditions as follows
\begin{subequations}\label{fordcondctr}
\begin{gather}
  \sum_{k=1} \frac{(-1)^k}{(2k-1)!}\,\kappa\cwedge{2k}\lambda = 0
    \;,\label{fordcondctrKK}\\
  \sum_{k=0} \frac{(-1)^k}{(2k{+}1)!}\,\bigl((\pi{-}2k{-}1)\mu\bigr)\cwedge{2k{+}1}\omega = 0
    \;,\label{fordcondctrKM}\\
  \sum_{k=0} \frac{(-1)^k}{(2k{+}1)!}\,(\dm-\pi-2k-1)\,\bigl(\alpha\cwedge{2k{+}1}\beta \bigr)= 0
    \;.\label{fordcondctrMM}
\end{gather}
\end{subequations}
If the forms are homogeneous,  individual terms in the sums \eqref{fordcondctr} have different degrees and therefore have to vanish separately. The conditions then reduce to the sets of `individual' conditions
\begin{subequations}\label{fordcondhom}
\begin{align}
  &\kappa\cwedge{2k}\lambda = 0\;, &&k = 1,\,2, \dots
    \;,\label{fordcondhomKK}\\
  &\mu\cwedge{2k{+}1}\omega = 0\;, &&k = 0,\,1, \dots
    \;,\qquad 2k+1\neq r
    \;,\label{fordcondhomKM}\\
  &\alpha\cwedge{2k{+}1}\beta = 0\;, &&k = 0,\,1, \dots
    \;,\qquad 2k+1\neq p+q-\dm
    \;,\label{fordcondhomMM}
\end{align}
\end{subequations}
where ${p}$, ${q}$, and ${r}$ are degrees of forms ${\alpha}$, ${\beta}$, and ${\mu}$, respectively. Note that for ${\dm}$ even the exception ${2k+1 = p+q-\dm}$ cannot occur, so \eqref{fordcondhomMM} holds for all ${k}$.

\subsection{Lie derivative}\label{3.3}
Let us first consider the simplest possible case in which we have a KY 1-form ${\kappa}$ and an associated Killing vector ${k=\kappa^\sharp}$. In this case the operator ${K_\kappa}$ coincides with the Lie derivative ${\lied_{\!k}}$ acting on the Dirac bundle \cite{BennTucker:book}. The Lie derivative on the tangent bundle along a vector field ${k}$ is a generator of 1-parametric family of diffeomorfisms, induced from the manifold to the tangent bundle. Such diffeomorfisms can be lifted to the Dirac and Clifford bundles only if they conserve the metric structure, i.e., only if ${k}$ is a Killing vector. In such a case the action on the Dirac and Clifford bundle is given by\footnote{%
Recalling the standard action of the Lie derivative on the tangent bundle
${ \lied_{\!k} v^a = \nabla_{\!k}v^a - (\nabla_{\!b} k^a)v^b}$, one gets ${\lied_{\!k}\, e^a = \nabla_{\!k}\, e^a - (\nabla_{\!b}\,k^a)\, e^b +\frac14[d\kappa,e^a]=0}$, and similarly ${\lied_{\!k} X_a=0}$. It just states that the metric structures are conserved.}
\begin{equation}\label{liedDirClif}
    \lied_{\!k} \psi = \nabla_{\!k} \psi + \frac14 (d\kappa)\,\psi = K_\kappa\psi\;,\quad
    \lied_{\!k} \alpha = \nabla_{\!k} \alpha + \frac14 [d\kappa,\alpha]\;.
\end{equation}
\noindent This geometrical interpretation allows one to find a form of commutators if one of the operators is a Killing vector:
\begin{prop}[Lie derivative]\label{LD}
Let ${\kappa}$ be a KY 1-form, ${\lambda}$ an odd KY form, and ${\omega}$ an even CCKY form\footnote{In fact, this proposition can be generalized to arbitrary KY and CCKY forms $\lambda$ and $\omega$, see Prop.~\ref{Prop5.5}.}. Then the commutators of the corresponding operators
$K_\lambda$ and $M_\omega$ with ${K_\kappa}$ remain of the first-order and can be written as
\begin{equation}\label{Liedercom}
[K_\kappa,K_\lambda] = K_{\lied_k\lambda}\;,\qquad
    [K_\kappa,M_\omega] = M_{\lied_k\omega}\;,
\end{equation}
where ${k=\kappa^\sharp}$ is the Killing vector corresponding to ${\kappa}$.
In particular, if ${\lambda}$ is a KY 1-form corresponding to a Killing vector ${l=\lambda^\sharp}$, we have ${[K_\kappa,K_\lambda]=K_{[k,l]^\flat}}$, with ${[k,l]}$ being the Lie bracket.
As a corollary we have the following statement:
Let $k$ be a Killing vector, $\lambda$ a KY odd-form and $\omega$ a CCKY even-form, then
\be\label{new1}
\dot \lambda\equiv\lied_k\lambda\,,\quad
\dot \omega\equiv\lied_k\omega\,,
\ee
are `new' odd KY and even CCKY forms, respectively.
\end{prop}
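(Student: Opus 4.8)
\medskip
\noindent\emph{Proof proposal.}\ \
The plan is to lean on the geometric identity \eqref{liedDirClif}, which identifies $K_\kappa$ with the spinorial Lie derivative $\lied_k$ along the Killing vector $k=\kappa^\sharp$. The point is that for a Killing $k$ the operator $\lied_k$ is a derivation of the whole tensor--Clifford--Dirac calculus, compatible with \emph{every} ingredient out of which $K_\lambda$ and $M_\omega$ are assembled in \eqref{Kdef} and \eqref{Mdef}: Clifford multiplication, the contractions $X^a\hook\,\cdot\,$ and wedges $e^a\wedge\,\cdot\,$, the operators $d$ and $\delta$, and the covariant derivative $\nabla_{\!a}$ (this last being the one place the Killing property really enters, since $\lied_k\nabla=0$ for Killing $k$, the curvature obstruction to $\lied_k\nabla$ cancelling identically). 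Granting this, I would just apply the Leibniz rule term by term to $K_\lambda$ and $M_\omega$: because $\lied_k$ commutes with $\nabla_{\!a}$, $d$, $\delta$ and preserves the homogeneous degree, one reads off at once $[K_\kappa,K_\lambda]=[\lied_k,K_\lambda]=K_{\lied_k\lambda}$ and $[K_\kappa,M_\omega]=[\lied_k,M_\omega]=M_{\lied_k\omega}$, which is \eqref{Liedercom}; crucially no second-derivative or curvature term is generated, so the commutators are automatically of the first order.

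An equivalent, purely algebraic route would go through Prop.~\ref{pr:algfoop}: as $\kappa$ is a $1$-form, $X^a\hook\kappa$ is a function, so the Clifford-commutator conditions \eqref{fordcondKK} and \eqref{fordcondKM} hold identically --- which already gives the first-order property --- and, substituting $\kappa^a=X^a\hook\kappa$ (a scalar) and $\tilde\kappa=\tfrac14 d\kappa$ (cf.\ \eqref{pomocne}, \eqref{liedDirClif}) into the explicit bracket \eqref{KY1}, the terms with $[\kappa^b,\,\cdot\,]$ drop out and, using $e_b\wedge(X^b\hook\,\cdot\,)=\pi$ on homogeneous forms together with the commutator identities of App.~\ref{apx:usefulid}, \eqref{KY1} collapses to $\nabla_{\!k}\lambda+\tfrac14[d\kappa,\lambda]=\lied_k\lambda$; the $[K_\kappa,M_\omega]$ case is identical.

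The corollary \eqref{new1} then follows in either of two ways. Directly: $\lied_k$ commutes with the twistor operator $T_a$ of \eqref{jedna} (built from $\nabla$, $X^a\hook$, $e^a\wedge$, $d$, $\delta$) and with $d$ and $\delta$ separately, hence it sends odd KY forms to odd KY forms and even CCKY forms to even CCKY forms. More economically: $[K_\kappa,K_\lambda]$ and $[K_\kappa,M_\omega]$ commute with $D$, are first-order, and are Clifford even resp.\ odd, so by the uniqueness statement Prop.~\ref{2.2} their arguments must be an odd KY resp.\ even CCKY form --- and these we have just identified with $\lied_k\lambda$ and $\lied_k\omega$. For the special case I would use \eqref{liedDirClif} once more: if $\lambda$ is a KY $1$-form with $l=\lambda^\sharp$, then $K_\lambda=\lied_l$, so $[K_\kappa,K_\lambda]=[\lied_k,\lied_l]$, and since $k$ is Killing one has $\lied_k(l^\flat)=(\lied_k l)^\flat=[k,l]^\flat$ with $[k,l]$ again Killing; thus by \eqref{Liedercom} with $\lambda=l^\flat$ this equals $K_{[k,l]^\flat}$ (equivalently $[\lied_k,\lied_l]=\lied_{[k,l]}$).

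I expect the only genuinely non-formal step --- hence the main obstacle --- to be the verification that $\lied_k$ really does commute with $\nabla_{\!a}$ on forms and spinors, i.e.\ that the apparent curvature-valued obstruction to $\lied_k\nabla=0$ cancels for a Killing vector; once that is in hand everything else is the Leibniz rule plus degree bookkeeping, or, on the algebraic route, the routine collapse of \eqref{KY1} after recognising $X^a\hook\kappa$ as a scalar and $\tilde\kappa$ as $\tfrac14 d\kappa$.
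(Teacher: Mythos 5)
Your proposal is correct, and your second (``purely algebraic'') route is essentially the paper's own proof: the paper observes that \eqref{fordcondKK} and \eqref{fordcondKM} hold trivially because ${\kappa^a=X^a\hook\kappa}$ is a 0-form, so Prop.~\ref{pr:algfoop} applies, and then evaluates \eqref{KY1} and \eqref{KY2} directly --- the Clifford commutators with the scalar ${\kappa^b}$ drop out and the remainder collapses to ${\nabla_{\!k}+\frac14[d\kappa,\,\cdot\,]=\lied_k}$, exactly as you describe; the corollary \eqref{new1} is then declared to ``follow automatically,'' which is your uniqueness argument via Prop.~\ref{2.2}. Your preferred first route --- identifying ${K_\kappa}$ with the spinorial Lie derivative via \eqref{liedDirClif} and using that ${\lied_k}$ is a derivation compatible with Clifford multiplication, ${d}$, ${\delta}$ and ${\nabla}$ for Killing ${k}$, then applying Leibniz term by term to \eqref{Kdef} and \eqref{Mdef} --- is a genuinely different and more conceptual argument: it explains \emph{why} no second-order or curvature terms appear, rather than verifying it by computation, and it makes the generalization to arbitrary CKY forms (Prop.~\ref{Prop5.5}) transparent. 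Its price is the one step you correctly single out, namely checking that the obstruction to ${[\lied_k,\nabla]=0}$ vanishes for Killing ${k}$ (i.e.\ ${\lied_k\Gamma=0}$, equivalently the footnoted relations ${\lied_k e^a=0}$, ${\lied_k X_a=0}$), which the paper's computational route never needs to invoke explicitly. Both routes deliver the special case ${[K_\kappa,K_\lambda]=K_{[k,l]^\flat}}$ and the corollary without difficulty.
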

\begin{proof}
To prove relations \eqref{Liedercom}, we first note that for a KY 1-form ${\kappa}$ conditions \eqref{fordcondKK} and \eqref{fordcondKM} are automatically satisfied ($\kappa^a={X^a\hook\kappa}$ is a 0-form). Using now Eq. \eqref{KY1} and recalling that $\lambda^a=X^a\hook \lambda$, we have
\ba
[\kappa,\lambda]_{\KY}&=&\frac{1}{\pi}e_b\wedge \Big\{\kappa^a(\nabla_a\lambda^b)-\lambda^a(\nabla_a\kappa^b)-\frac{1}{4}[\lambda^b,d\kappa]\Bigr\}\nonumber\\
&=&\kappa^a\nabla_a \lambda+\frac{1}{4}[d\kappa, \lambda]=\lied_k\lambda\,,
\ea
which proves the first statement. Similarly, using \eqref{KY2} and recalling that $\omega^a=e^a\wedge \omega$, we have
\ba
[\kappa,\omega]_{\KY}&=&\frac{1}{n-\pi}X_b\hook \Big\{\kappa^a(\nabla_a\omega^b)-\omega^a(\nabla_a\kappa^b)-\frac{1}{4}[\omega^b,d\kappa]\Bigr\}\nonumber\\
&=&\kappa^a\nabla_a \omega+\frac{1}{4}[d\kappa, \omega]=\lied_k\omega\,,
\ea
which completes the second relation \eqref{Liedercom}. The statement \eqref{new1} automatically follows.
\end{proof}

\subsection{Killing--Yano brackets}\label{ssc:KYbr}

In Prop.~\ref{pr:algfoop} we have defined an operation ${[\ ,\ ]_\KY}$ acting on odd KY and even CCKY forms, provided that conditions \eqref{fordcond} are satisfied.
In Sec.~\ref{3.3} we have found a form of these brackets in a special case. Namely, we have found that for a KY 1-form ${\kappa}$ and a general odd KY form ${\lambda}$ or a general even CCKY form $\omega$, the KY bracket reduces to the Lie derivative along the corresponding Killing vector $k=\kappa^\sharp$.

Now we find more explicit expressions for general KY brackets. The expressions \eqref{KY1}, \eqref{KY2}, and \eqref{KY3} can be written in terms of a {\em potential and co-potentials}, which is locally guaranteed by co-closeness of KY forms and closeness of CCKY forms.
\begin{prop}[Potentials and co-potentials]\label{Prop3.3}
Under the necessary conditions \eqref{necesary_cond}, the Killing--Yano brackets \eqref{KY1}, \eqref{KY2}, and \eqref{KY3} can be written in terms of potentials and co-potentials as follows
\begin{subequations}\label{KYbracketsPot}
\begin{align}
   [\kappa,\lambda]_\KY &=
     -\frac1\pi\,\delta\,\sum_{k=0}\frac{(-1)^k}{(2k{+}1)!}\,
     \bigl(\pi\kappa\bigr)\cwedge{2k}\bigl(\pi\lambda\bigr)\;,
   \label{KYbracketsPotKK}\\
   [\mu,\omega]_\KY &=
     \frac1{\dm-\pi}\,d\, \sum_{k=0}\frac{(-1)^k}{(2k{+}1)!}\,
     \Bigl((\pi{-}2k)^{-1}\pi\,\mu\Bigr)
     \cwedge{2k+1}\Bigl((\dm{-}\pi)\omega\Bigr)\;,
   \label{KYbracketsPotKM}\\
   [\alpha,\beta]_\KY &=
     \frac1\pi\,\delta\,\sum_{k=0}\frac{(-1)^k}{(2k{+}1)!}\,
     \frac1{\dm{-}\pi{-}2k}\,\Bigl((\dm{-}\pi)\alpha\Bigr)
     \cwedge{2k+1}\Bigl((\dm{-}\pi)\beta\Bigr)\;.
    \label{KYbracketsPotMM}
\end{align}
\end{subequations}
The expression \eqref{KYbracketsPotMM} is valid if the factor ${(\dm{-}\pi{-}2k)}$ in the denominator cannot become zero. It includes the case of an odd dimension, or the case of an even dimension and forms ${\alpha,\,\beta}$ of maximal degrees ${p,\,q}$ such that ${p+q\leq\dm}$.
\end{prop}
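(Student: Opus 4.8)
The plan is to start from the closed-form expressions \eqref{KY1}, \eqref{KY2} and \eqref{KY3} obtained in Prop.~\ref{pr:algfoop} and to verify that the right-hand sides of \eqref{KYbracketsPotKK}, \eqref{KYbracketsPotKM} and \eqref{KYbracketsPotMM} reproduce them. The first move is to eliminate the covariant derivatives in \eqref{KY1}--\eqref{KY3} using the defining equations of KY and CCKY forms, which follow at once from the twistor equation \eqref{jedna}: an odd KY form $\lambda$ obeys $\nabla_{\!a}\lambda=\frac{1}{\pi+1}\,X_a\hook d\lambda$ together with $\delta\lambda=0$, and an even CCKY form $\omega$ obeys $\nabla_{\!a}\omega=-\frac{1}{\dm-\pi+1}\,e_a\wedge\delta\omega$ together with $d\omega=0$. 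Since $\lambda^b=X^b\hook\lambda$ and $\nabla_{\!a}$ is metric one has $\nabla_{\!a}\lambda^b=X^b\hook\nabla_{\!a}\lambda$, which the twistor equation expresses through $d\lambda$ alone, and analogously for $\kappa^b$ and $\mu^b$; moreover the auxiliary forms $\tilde\kappa,\tilde\lambda$ of \eqref{pomocne} (and $\tilde\mu,\tilde\omega,\tilde\alpha,\tilde\beta$) are already proportional to $d\kappa,d\lambda$ (resp.\ $d\mu,\delta\omega,\delta\alpha,\delta\beta$). After these substitutions the curly brackets of \eqref{pom}, \eqref{pom2} and their even--even counterpart become purely algebraic expressions in the two forms and their exterior derivatives, which the Clifford-product identities of App.~\ref{apx:usefulid} rewrite as contracted wedge products.

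The heart of the argument is then to recognise these algebraic expressions as $\delta$ (or, in the mixed case, $d$) of the generating sums appearing on the right of \eqref{KYbracketsPot}. For \eqref{KYbracketsPotKK} I would compute $\delta$ of $\Sigma\equiv\sum_{k\ge0}\frac{(-1)^k}{(2k+1)!}\,(\pi\kappa)\cwedge{2k}(\pi\lambda)$ directly from $\delta=-e^a\hook\nabla_{\!a}$, using the Leibniz rule for $\nabla_{\!a}$ over the contracted wedge product and substituting the KY equations above: the terms in which $e^a\hook$ contracts the very factor produced by $\nabla_{\!a}$ are proportional to $\delta\kappa$ or $\delta\lambda$ and hence vanish, while the surviving terms are contracted wedges of $d\kappa$ with $\lambda$ and of $\kappa$ with $d\lambda$. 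The contracted-wedge identities of the appendix together with the trace conditions \eqref{necesary_cond} (equivalently \eqref{fordcondctr}) then reorganise these into exactly the combination on the right-hand side of \eqref{KY1}, up to the overall factor $-\frac1\pi$; the combinatorial coefficients $\frac{(-1)^k}{(2k+1)!}$ are precisely those for which consecutive terms of the resulting sum telescope, which is also why the formally infinite $\Sigma$ is in fact a well-defined (finite) co-potential. The mixed case \eqref{KYbracketsPotKM} and the even--even case \eqref{KYbracketsPotMM} are established by the identical computation with the roles of $d$ and $\delta$ (and of $\pi$ and $\dm-\pi$) interchanged as appropriate. Finally, the existence of the potentials and co-potentials themselves is just the (co-)Poincar\'e lemma: closedness of a CCKY form gives $\omega=d\phi$ locally, co-closedness of a KY form gives $\kappa=\delta\psi$ locally, and the generating sums in \eqref{KYbracketsPot} provide explicit such (co-)potentials for the brackets.

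The step I expect to be the main obstacle is the bookkeeping in this telescoping calculation: tracking how the degree operator $\pi$ interacts with $X^a\hook$ and $e^a\hook$ (each of which shifts the degree by one), checking that the weights $(\pi-2k)^{-1}$ and $(\dm-\pi-2k)^{-1}$ in \eqref{KYbracketsPotKM}--\eqref{KYbracketsPotMM} are reproduced correctly from the $\frac{1}{\pi+1}$ prefactors of \eqref{jedna} together with the factorials, and making sure that no boundary term of the sum is dropped. The validity caveat for \eqref{KYbracketsPotMM} is exactly the statement that the denominators $\dm-\pi-2k$ produced in the process never vanish on the contracted-wedge orders that actually contribute; this holds automatically in odd dimensions and in even dimensions when the maximal degrees $p,q$ of $\alpha,\beta$ satisfy $p+q\le\dm$, which by \eqref{fordcondhomMM} is precisely when the would-be singular order $2k+1=p+q-\dm$ is absent.
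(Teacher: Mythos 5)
Your plan is essentially the paper's own proof run in reverse: the paper massages the explicit brackets \eqref{KY1}--\eqref{KY3} into the form $\delta(\cdots)$ or $d(\cdots)$ using the KY/CCKY equations, the contracted-wedge identities of App.~\ref{apx:usefulid}, and the necessary conditions \eqref{necesary_cond} to kill the residual pieces (which it identifies as $\tfrac12 d[\kappa^a,\lambda_a]$ and $\tfrac12\delta[\kappa^a,\lambda_a]$) --- exactly the ingredients you invoke, in the opposite direction. The only slips are cosmetic: $\delta=-X^a\hook\nabla_{\!a}$ rather than $-e^a\hook\nabla_{\!a}$, and the finiteness of the generating sums comes from $\alpha\cwedge{k}\beta$ vanishing for $k$ large (cf.~\eqref{cwedgedef}) rather than from any telescoping.
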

\begin{proof}
The proof of \eqref{KYbracketsPotKK} starts with the expression \eqref{KY1} which can be rewritten as
\begin{equation}\label{KYpotproof1}
  \pi[\kappa,\lambda]_\KY = \,e_b\wedge\Bigl(
    {\textstyle\frac12}\bigl[\nabla^b\kappa^a,\lambda_a\bigr]_+\!
    -{\textstyle\frac12}\bigl[\kappa^a,\nabla^b\lambda_a\bigr]_+\!
    +\bigl[\kappa^b,\tilde\lambda\bigr]+\bigl[\tilde\kappa,\lambda^b\bigr]\Bigr)
    - {\textstyle\frac12}\, d \bigl[\kappa^a,\lambda_a\bigr] \;.
\end{equation}
Here, we have used the antisymmetry ${\nabla_a\kappa_b=-\nabla_b\kappa_a}$ which follows from the KY property of ${\kappa}$, and rewritten the terms ${\kappa^a(\nabla_b\lambda_a)}$ and ${\lambda^a(\nabla_b\kappa_a)}$ using commutators and anticommutators. The last term in \eqref{KYpotproof1} vanishes thanks to necessary conditions \eqref{necesary_cond}. The (anti)-commutators can be expanded using relations \eqref{useful1_odd}. Next, we transform the contraction over index ${a}$ into contracted wedge and use the KY equation. Substituting \eqref{pomocne} for ${\tilde\kappa}$ and ${\tilde\lambda}$ than leads to
\begin{equation}\label{KYpotproof2}
\begin{split}
  \pi[\kappa,\lambda]_\KY = \sum_{k=1}\frac{(-1)^k}{(2k{-}1)!}\,
     \,e^b\wedge\Bigl(
     &(2k{-}1)\kappa\cwedge{2k{-}1}\bigl(X_b\hook{\textstyle\frac1\pi}d\lambda\bigr)
     -(2k{-}1)\bigl(X_b\hook{\textstyle\frac1\pi}d\kappa\bigr)\cwedge{2k{-}1}\lambda\\
     &+\bigl(X_b\hook\kappa\bigr)\cwedge{2k{-}1}\bigl({\textstyle\frac{\pi{-}1}\pi}d\lambda\bigr)
     +\bigl({\textstyle\frac{\pi{-}1}\pi}d\kappa\bigr)\cwedge{2k{-}1}\bigl(X_b\hook\lambda\bigr)
     \Bigr)\;.
\end{split}
\end{equation}
Applying relations \eqref{mywedge_hookWedge} and some algebra gives
\begin{equation}\label{KYpotproof3}
\begin{split}
  \pi[\kappa,\lambda]_\KY = &\sum_{k=1}\frac{(-1)^k}{(2k{-}1)!}\,\Bigl(
     -\bigl(\pi\kappa\bigr)\cwedge{2k{-}1}\bigl({\textstyle\frac{\pi{-}1}\pi}d\lambda\bigr)
     +\bigl({\textstyle\frac{\pi{-}1}\pi}d\kappa\bigr)\cwedge{2k{-}1}\bigl(\pi\lambda\bigr)\Bigr)\\
     &+\sum_{k=1}\frac{(-1)^{k+1}}{(2k{-}1)!}\,
     \Bigl(\kappa\cwedge{2k{+}1}\bigl({\textstyle\frac1\pi}d\lambda\bigr)
     -\bigl({\textstyle\frac1\pi}d\kappa\bigr)\cwedge{2k{+}1}\lambda\Bigr)\;.
\end{split}
\end{equation}
Using expansions \eqref{useful1_odd}, relations \eqref{A24}, ${\delta\kappa=\delta\lambda=0}$, the KY property of ${\kappa}$ and ${\lambda}$, and definition \eqref{cwedgedef}, it can be shown that the second sum is equal to ${\frac12\delta[\kappa^a,\lambda_a]}$. Therefore, it vanishes thanks to conditions \eqref{necesary_cond}. By a similar calculation, the first sum is equal to
\begin{equation}\label{KYpotproof4}
  \pi[\kappa,\lambda]_\KY = -\,\delta\,\sum_{k=0}\frac{(-1)^k}{(2k{+}1)!}\,
    \bigl(\pi\kappa\bigr)\cwedge{2k}\bigl(\pi\lambda\bigr)\;,
\end{equation}
which is the desired relation \eqref{KYbracketsPotKK}.

The proofs of relations \eqref{KYbracketsPotKM} and \eqref{KYbracketsPotMM} are analogous. However, in the case of two even CCKY forms, the expression \eqref{KYbracketsPotMM} is correct only if it is well defined. There is a potential problem with the vanishing denominator ${(\dm{-}\pi{-}2k)}$. Fortunately, in some important cases as for an odd dimension and even forms ${\alpha}$, ${\beta}$, this cannot happen.

Terms for which the denominator vanishes must be treated separately. It cannot be done in a simple way for general inhomogeneous forms. For such forms it is better to use Hodge duality which will be discussed in Sec.~\ref{ssc:HodgeDual}. However for homogeneous forms ${\alpha}$, ${\beta}$ one can identify the problematic terms. The result is presented in the following proposition.

Note that the prefactor ${\frac1\pi}$ in \eqref{KYbracketsPotKK} and \eqref{KYbracketsPotMM} cannot be divergent, since the result of the KY bracket is odd, i.e., it has a non-zero degree. The factor ${\frac1{n-\pi}}$ in \eqref{KYbracketsPotKM} is not divergent, because for ${[\mu,\omega]_\KY}$ of degree ${\dm}$ the form ${\omega}$ must be also of degree ${\dm}$ and corresponding operators ${M_{[\mu,\omega]_\KY}}$ and ${M_{\omega}}$ would be vanishing.
\end{proof}

Expressions \eqref{KYbracketsPot} simplify significantly under the assumption that all the KY and CCKY forms are homogeneous. In that case we can use \eqref{fordcondhom} to find:
\begin{prop}[Killing--Yano brackets for homogeneous forms]\label{PropKYbrhom}
For odd homogeneous forms ${\kappa}$, ${\lambda}$, ${\mu}$ of degrees ${p}$, ${q}$, ${r}$, respectively, and even homogeneous forms ${\alpha}$, ${\beta}$, ${\omega}$ of degrees ${a}$, ${b}$, and ${c}$, respectively, the Killing--Yano brackets take the form:
\begin{subequations}\label{KYbracketsHom}
\begin{align}
   [\kappa,\lambda]_\KY &=
     -\frac{p\,q}{p{+}q{-}1}\,\delta
     \bigl(\kappa\wedge\lambda\bigr) \;,
      \label{KYbracketsHomKK}\\[1ex]
   [\mu,\omega]_\KY &=
     \frac{r\,(\dm{-}c)}{\dm{+}r{-}c{-}1}
     \frac{(-1)^{\frac{r{-}1}{2}}}{r!}\,
     d\bigl(\mu\cwedge{r}\omega\bigr)\;,
   \label{KYbracketsHomKM}\\[1ex]
   [\alpha,\beta]_\KY &=
   \begin{cases}
     \displaystyle
     \frac{(\dm{-}a)(\dm{-}b)}{2\dm{-}a{-}b{-}1}
     \frac{(-1)^{\frac{a{+}b{-}\dm{-}1}{2}}}{(a{+}b{-}\dm)!}\,
     \delta\bigl(\alpha\cwedge{p{+}q{-}\dm}\beta\bigr)
       &\mspace{-255mu}\text{for ${\dm}$ odd}\;,\\[1.5ex]
     \displaystyle
     0\quad
       &\mspace{-255mu}\text{for ${\dm}$ even and ${a+b\leq\dm}$}\;,\\[1ex]
     \displaystyle
     \frac{(\dm{-}a)(\dm{-}b)}{2\dm{-}a{-}b{+}1}\frac1{(a{+}b{-}\dm{-}1)!}
       \Bigl(\alpha\cwedge{a{+}b{-}\dm{-}1}\bigl({\textstyle\frac1{n{-}b{+}1}}\delta\beta\bigr)
       -\bigl({\textstyle\frac1{n{-}a{+}1}}\delta\alpha\bigr)\cwedge{a{+}b{-}\dm{-}1}\beta\Bigr)&\\
       &\mspace{-255mu}\text{for ${\dm}$ even and ${a+b>\dm}$}\;.
   \end{cases}\label{KYbracketsHomMM}\raisetag{14ex}
   \end{align}
\end{subequations}
\end{prop}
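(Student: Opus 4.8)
The plan is to substitute homogeneous inputs into the potential/co-potential formulas \eqref{KYbracketsPot} of Prop.~\ref{Prop3.3} and to exploit that, for homogeneous forms, the individual contracted-wedge terms $\kappa\cwedge{2k}\lambda$, $\mu\cwedge{2k+1}\omega$, $\alpha\cwedge{2k+1}\beta$ occurring in those sums have pairwise distinct degrees, so that the trace conditions \eqref{necesary_cond} sharpen to the ``individual'' conditions \eqref{fordcondhom}. On homogeneous forms the degree operator $\pi$ reduces to multiplication by the relevant numerical degree, and the prefactors $\frac1\pi$, $\frac1{\dm-\pi}$ acting on the left-hand sides are just the reciprocals of the degree of the single surviving term, read off from it.

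First I would dispose of \eqref{KYbracketsHomKK} and \eqref{KYbracketsHomKM}. In \eqref{KYbracketsPotKK}, condition \eqref{fordcondhomKK} annihilates every $k\ge1$ term, leaving only $k=0$; as $[\kappa,\lambda]_\KY$ then has degree $p+q-1$ and $(\pi\kappa)\cwedge{0}(\pi\lambda)=pq\,\kappa\wedge\lambda$, this gives \eqref{KYbracketsHomKK}. In \eqref{KYbracketsPotKM}, condition \eqref{fordcondhomKM} forces $2k+1=r$, i.e.\ $k=(r-1)/2$; there $(\pi-2k)^{-1}\pi\,\mu=r\mu$, $(\dm-\pi)\omega=(\dm-c)\omega$, the bracket has degree $c-r+1$, and $(-1)^k/(2k+1)!=(-1)^{(r-1)/2}/r!$, which is \eqref{KYbracketsHomKM}. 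The two easy branches of \eqref{KYbracketsHomMM} are analogous: for $\dm$ odd only the $2k+1=a+b-\dm$ term of \eqref{KYbracketsPotMM} survives, the denominator $\dm-\pi-2k$ does not vanish there (exactly the regime for which \eqref{KYbracketsPotMM} was established), and inserting the numerical degrees yields the first branch; for $\dm$ even with $a+b\le\dm$ the parity of $a+b-\dm$ (even, since $a$, $b$, $\dm$ are even) means the exception in \eqref{fordcondhomMM} is never realized, so every $\alpha\cwedge{2k+1}\beta$ vanishes and the bracket is zero, the second branch.

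The hard part will be the third branch, $\dm$ even with $a+b>\dm$, where \eqref{KYbracketsPotMM} is genuinely inapplicable: the denominator $\dm-\pi-2k$ vanishes for precisely one value, $k_0=\tfrac12(a+b-\dm)-1$, and at that term the numerator $\alpha\cwedge{2k_0+1}\beta=\alpha\cwedge{a+b-\dm-1}\beta$ also vanishes under \eqref{fordcondhomMM} (its contraction order $a+b-\dm-1$ is odd, hence $\neq a+b-\dm$), so one faces an indeterminate $0/0$ rather than a zero. My plan here is to retreat one step, to the form of \eqref{KYbracketsPotMM}'s derivation just before $\delta$ is pulled out, or equivalently back to \eqref{KY3} itself, and, for homogeneous $\alpha$, $\beta$, to isolate this single ``defective'' term. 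Concretely I would substitute $\alpha^b=e^b\wedge\alpha$, $\tilde\alpha=-\tfrac{\dm-a-1}{2(\dm-a)}\delta\alpha$ (and similarly for $\beta$) together with the CCKY relation $\nabla_a\alpha=-\tfrac1{\dm-a+1}\,e_a\wedge\delta\alpha$ from \eqref{jedna} into \eqref{KY3}, expand the Clifford (anti)commutators into contracted wedges, and use \eqref{necesary_cond} to kill all contributions of the schematic form $\alpha\cwedge{\bullet}\beta$; what remains are the terms carrying a single divergence, and collecting them at the unique admissible degree produces the displayed combination of $\alpha\cwedge{a+b-\dm-1}\delta\beta$ and $\delta\alpha\cwedge{a+b-\dm-1}\beta$, with the numerical factors $(\dm-a)$, $(\dm-b)$, $(\dm-a+1)$, $(\dm-b+1)$ coming from the CCKY equation and from the $\tilde\alpha$, $\tilde\beta$ substitutions. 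As an independent check of this branch I would Hodge-dualise the already-established bracket \eqref{KYbracketsHomKK} using the duality dictionary of Sec.~\ref{ssc:HodgeDual}, which trades the pair of CCKY forms for KY forms of complementary degree.

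The only real obstacle is thus the bookkeeping in that last branch: the slick ``factor out a $\delta$'' step underlying \eqref{KYbracketsPotMM} is precisely what fails there, so the surviving term must be computed by hand from \eqref{KY3}, and the delicate points are fixing the signs and keeping track of the several degree-dependent coefficients that the CCKY equation and the $\tilde\alpha$, $\tilde\beta$ substitutions generate.
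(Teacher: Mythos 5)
Your proposal follows essentially the same route as the paper: the paper derives \eqref{KYbracketsHom} by specializing the potential/co-potential expressions \eqref{KYbracketsPot} to homogeneous forms, using the individual conditions \eqref{fordcondhom} to kill all but one term in each sum, and it likewise flags (in the proof of Prop.~\ref{Prop3.3}) that the ${\dm}$ even, ${a+b>\dm}$ branch must be handled by identifying the problematic ${0/0}$ term separately, which is exactly the direct computation from \eqref{KY3} you outline. Your identification of the defective term at ${2k+1=a+b-\dm-1}$, and of the two regimes where \eqref{KYbracketsPotMM} applies directly, is correct, so the plan is sound and matches the paper's argument.
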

\noindent {\em Remark:} We would like to emphasize here that relations \eqref{KYbracketsPot} and \eqref{KYbracketsHom} have been derived using the necessary conditions \eqref{necesary_cond}. By construction, however, one can guarantee that the Killing--Yano brackets produce KY or CCKY forms only under the more restrictive conditions \eqref{fordcond}.
At present (and apart from the case of a Killing 1-form discussed in Sec.~\ref{3.3}) we do not know any example for which conditions \eqref{fordcond} are satisfied and the Killing--Yano brackets are non-vanishing. One might speculate that these conditions are in fact so strong that they guarantee that the Killing--Yano brackets (of homogeneous forms) are automatically trivial.
To support this idea, simple examples in which conditions \eqref{fordcond} are satisfied and Killing--Yano brackets vanish can be easily found in flat or pp-wave spacetimes. Another, highly non-trivial example illustrating this point is discussed in the next section.


\section{PCKY tensor and complete set of commuting operators}\label{sc:PCKYsym}
In this section we shall study commutators of the Dirac symmetry operators generated from the principal conformal Killing--Yano tensor \cite{KrtousEtal:2007jhep}. We demonstrate that among these operators one can find a complete subset of mutually commuting operators. These operators are related to separability  \cite{OotaYasui:2008} of the Dirac equation in Kerr-NUT-(A)dS spacetimes in all dimensions  \cite{CKK:separability}. We closely follow \cite{FrolovKubiznak:2008} to review the definition of the principal conformal Killing--Yano tensor and the procedure to generate from it the tower of explicit and hidden symmetries.

\subsection{Principal conformal Killing--Yano tensor}\label{PCKYsec}
The {\em principal conformal Killing--Yano (PCKY)} tensor ${h}$ is a non-degenerate CCKY 2-form
\cite{KrtousEtal:2007jhep}.
This means that there exists a 1-form ${\xi}$ so that
\be\label{PCKY}
\nabla_{a}{h}={e_a}\wedge {\xi}\, ,\quad {\xi}=-\frac{1}{n-1}{\delta h}\,.
\ee
{\em Non-degeneracy} requires that in a generic point of the manifold the skew symmetric matrix $h_{ab}$
has the maximum possible (matrix) rank and that the eigenvalues of ${h}$ are
functionally independent in some spacetime domain. Eq. \eqref{PCKY} automatically implies $dh=0$ and
hence there exists a 1-form, a PCKY potential, so that
\be
{h}={db}\,.
\ee
The 1-form ${\xi}$ associated with ${h}$ is called {\em primary} and it is a Killing 1-form \cite{KrtousEtal:2008}.

The most general metric, a {\em canonical metric}, admitting the PCKY tensor was constructed
in \cite{HouriEtal:2007, KrtousEtal:2008}.
If we parametrize $n=2N+\eps\,,$ $N=[n/2]$, the metric and the PCKY potential are
\ba
{g}_{\mbox{\tiny can}}\!&=&\!\sum_{\mu=1}^N\Bigl[\frac{U_\mu}{X_\mu}{d}x_{\mu}^2
  +\frac{X_\mu}{U_\mu}\Bigl(\sum_{j=0}^{N-1} A_{\mu}^{(j)}{d}\psi_j\!\Bigr)^{\!2}\Bigr]+
\eps S\Bigl(\sum_{j=0}^{N} A^{(j)}{d}\psi_j\Bigr)^2\,, \ \label{KNdS2}\\
{b}\!&=&\!\frac{1}{2}\sum_{k=0}^{N-1} A^{(k+1)}{d}\psi_k\,,\label{b}
\ea
where
\begin{gather}
U_{\mu}=\prod_{\nu\ne\mu}(x_{\nu}^2-x_{\mu}^2)\,,\quad S=\frac{-c}{A^{(N)}}\,,\nonumber\\
A^{(k)}_\mu=\!\!\!\!\!\sum_{\substack{\nu_1<\dots<\nu_k\\\nu_i\ne\mu}}\!\!\!\!\!x^2_{\nu_1}\dots x^2_{\nu_k},\quad
A^{(k)}=\!\!\!\!\!\sum_{\nu_1<\dots<\nu_k}\!\!\!\!\!x^2_{\nu_1}\dots x^2_{\nu_k}\;\label{vztahy},
\end{gather}
the quantities $X_\mu=X_\mu(x_\mu)$ are arbitrary functions of one variable which we call {\em metric functions},
and $c$ is a constant. Introducing the basis
\be\label{omega}
{E}^{\mu} =\sqrt{\frac{U_\mu}{X_\mu}}\,
{d}x_\mu\,,\quad
{\tilde  E}^{\mu} = \sqrt{\frac{X_\mu}{U_\mu}}
 \sum_{j=0}^{N-1}A_{\mu}^{(j)}{d}\psi_j\;,\quad
 {E}^{0} =\sqrt{S}\sum_{j=0}^{N} A^{(j)}{d}\psi_j\,,
\ee
the metric and the PCKY tensor take the form
\ba
{g}_{\mbox{\tiny can}}\!\!\!&=&\!\sum_{\mu=1}^N\bigl({E}^{\mu}\otimes{E}^{\mu}+
{\tilde E}^{\mu}\otimes{\tilde E}^{\mu}\bigr)+\eps E^0\otimes E^0\,,\label{gbasis}\\
{h}\!&=&\!{db}=\sum_{\mu=1}^N x_\mu\, {E}^{\mu}
\wedge \!{\tilde E}^{\mu}\,.\label{PCKY_2n}
\ea
This means that the chosen basis is the Darboux basis and that coordinates $x_\mu$
are natural coordinates associated with the `eigenvalues' of the PCKY tensor.

\subsection{Towers of symmetries}
The PCKY tensor generates a tower of even CCKY tensors
\be\label{hj}
{h}^{(j)}\equiv {h}^{\wedge j}=\underbrace{{h}\wedge \ldots \wedge
{h}}_{\mbox{\tiny{total of $j$ factors}}}\, .
\ee
${h}^{(j)}$ being a $(2j)$-form. In particular ${h^{(0)}=1}$ and ${h}^{(1)}={h}$.
Since ${h}$ is non-degenerate, one has a
set of $N$ non-vanishing CCKY forms. In an even dimensional
spacetime  ${h}^{(N)}$ is proportional to the totally antisymmetric
tensor, whereas it is dual to a Killing vector in odd dimensions. In both cases
such a CCKY tensor is trivial and can be excluded from the tower of hidden symmetries.
Therefore, we take $j=1,\dots, N-1$.
The CCKY tensors \eqref{hj} can be generated from the potentials ${b}^{(j)}$
\be\label{bj}
{b}^{(j)}\equiv{b}\wedge {h}^{\wedge (j-1)}\,,\quad
{h}^{(j)}={d}{b}^{(j)}\,.
\ee
Each $(2j)$-form ${h}^{(j)}$
determines a $(n-2j)$-form of the KY tensor
\be\label{fj}
{f}^{(j)}\equiv{*}{h}^{(j)}\, .
\ee
In their turn, these tensors give rise to the Killing tensors
${K}^{(j)}$
\be\label{Kj}
K^{(j)}_{ab}\equiv \frac{1}{(n-2j-1)!(j!)^2} f^{(j)}_{\, \, \, \, \, a c_1\ldots c_{n-2j-1}}
f_{\, \, \, \, \, b}^{(j) \, \,  c_1\ldots c_{n-2j-1}}\, .
\ee
The coefficient in this definition is chosen so that we get
\be
{K}^{(j)}=\sum_{\mu=1}^N A^{(j)}_\mu\bigl({E}^{\mu}\otimes{E}^{\mu}+
{\tilde E}^{\mu}\otimes{\tilde E}^{\mu}\bigr)+\eps A^{(j)}E^0\otimes E^0\,.
\ee
We also define ${K}^{(0)}=g$ [cf. Eq. \eqref{gbasis}]. Hence we can take the range of $j$ as $j=0,\dots N-1$.

The PCKY tensor also naturally generates $(N+\eps)$ 2-forms $\omega^{(k)}$ $(k=0,\dots,N-1+\eps)$ which are {\em Killing co-potentials} for the
Killing vectors $\partial_{\psi_k}$ \cite{CveticEtal:2010}.
These are given by
\be
\omega^{(j)}_{ab}=\frac{1}{n-2j-1}K^{(j)}_{ac}h^c_{\ b}\,,\quad \omega^{(N)}=\frac{\sqrt{-c}}{N!}*b^{(N)}\,.
\ee
In the canonical basis we have [cf. Eq. \eqref{PCKY_2n}]
\be
\omega^{(j)}=\frac{1}{n-2j-1}\sum_{\mu=1}^N x_\mu A_\mu^{(j)} E^\mu\wedge \tilde E^\mu\,.
\ee
[Note that $\omega^{(0)}=h/(n-1)$.] One then has\footnote{%
Note that although the gauge freedom $b\to b+d\lambda$ affects $\omega^{(N)}$,
$\omega^{(N)}\to \omega^{(N)}+\frac{1}{N!}*d(\lambda h^{(N-1)})\,,$
$\delta \omega^{(N)}$ remains unchanged.
}
\ba\label{KVs}
\xi^{(j)}=-\delta\omega^{(j)}=K^{(j)}\cdot \xi=(\partial_{\psi_j})^\flat\,,\quad
\xi^{(N)}=-\delta\omega^{(N)}={f}^{(N)}=(\partial_{\psi_N})^\flat\,.
\ea
The third equality in these expressions was first established in \cite{KrtousEtal:2007jhep}.

The constructed hidden symmetries obey for any $k$ and $l$ the following relations:
\be\label{Lie_can}
\lied_{{\xi}^{(k)\,\sharp}} {\xi}^{(l)}=0\,,\quad
\lied_{{\xi}^{(k)\,\sharp}} h^{(l)}=0\,,\quad
\lied_{{\xi}^{(k)\,\sharp}} f^{(l)}=0\,.
\ee
which can be rewritten in terms of Killing--Yano brackets as
\begin{equation}\label{KYxihf=0}
[{\xi}^{(k)}, {\xi}^{(l)}]_\KY = 0 \,,\quad
[{\xi}^{(k)}, h^{(l)}]_\KY = 0 \,,\quad
[{\xi}^{(k)}, f^{(l)}]_\KY=0\,.
\end{equation}
These Killing--Yano brackets are well-defined since the necessary conditions \eqref{fordcond} are for any KY 1-form ${\xi^{(k)}}$ automatically satisfied. It is natural to ask what are mutual Killing--Yano brackets of the remaining symmetries ${h^{(l)}}$ and $f^{(k)}$.
\begin{prop}[Killing--Yano brackets and towers of hidden symmetries]\label{ttt}
The towers of hidden symmetries $\{h^{(l)}, f^{(k)}\}$, generated from the PCKY tensor by Eq. \eq{hj} and \eq{fj} satisfy the following relations:
\begin{align}
[e_{(a}\wedge h^{(k)}, e_{b)}\wedge h^{(l)}]&=0 \,,\qquad [h^{(k)}, h^{(l)}]_{\KY}=0\,,\label{sat1}\\
[X_{(a}\hook f^{(k)}, X_{b)}\hook f^{(l)}]&=0 \,,\qquad  [f^{(k)}, f^{(l)}]_\KY=0\,. \label{sat2}
\end{align}
In odd number of spacetime dimensions one also has
\begin{equation}\label{sat3}
[X_{(a}\hook f^{(k)}, e_{b)}\wedge h^{(l)}]=0\,,\qquad [f^{(k)}, h^{(l)}]_\KY=0\,.
\end{equation}
\end{prop}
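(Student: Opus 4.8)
The plan is to separate the verification of the algebraic first-order conditions (the first equality in each of \eqref{sat1}, \eqref{sat2}, \eqref{sat3}, which are instances of \eqref{fordcond}) from the evaluation of the Killing--Yano brackets via the explicit formulas \eqref{KYbracketsPot}/\eqref{KYbracketsHom}. Both reduce to one structural input about the Darboux form \eqref{PCKY_2n}. Writing $e_\mu\equiv E^\mu\wedge\tilde E^\mu$, we have $h^{(k)}=\sum_{|S|=k}(\prod_{\mu\in S}x_\mu)\bigwedge_{\mu\in S}e_\mu$, and $f^{(k)}={*}h^{(k)}$ is the analogous sum over the complementary plane-sets, carrying in odd dimension an extra factor $E^0$. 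Since the two-planes $\Pi_\mu=\mathrm{span}(E^\mu,\tilde E^\mu)$ are mutually orthogonal (and orthogonal to $E^0$), a contraction in a contracted wedge can only pair a leg of $e_\mu$ with a leg of the \emph{same} $e_\mu$ on the other side, and the two $E^0$ legs only with each other; contracting just one leg of a plane leaves a repeated cotangent vector ($E^\mu$, $\tilde E^\mu$, or $E^0$) in the surviving wedge, which vanishes. Hence each plane is either untouched or doubly contracted, and one obtains $h^{(k)}\cwedge{2j+1}h^{(l)}=0$ and $f^{(k)}\cwedge{2j+1}h^{(l)}=0$ for all $j\ge0$, together with $f^{(k)}\cwedge{2j+2}f^{(l)}=0$ in odd dimension, $h^{(k)}\wedge h^{(l)}=h^{(k+l)}$, and (odd dimension) $f^{(k)}\wedge f^{(l)}=0$ by repeated $E^0$. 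These are the facts I would record in App.~\ref{apx:PCKYsym}.

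The algebraic conditions then follow at once. For homogeneous forms \eqref{fordcond} reduces through \eqref{fordcondhom} to contracted-wedge vanishings: \eqref{fordcondhomMM} for $(h^{(k)},h^{(l)})$ is $h^{(k)}\cwedge{2j+1}h^{(l)}=0$ (the exceptional index $2k+2l-\dm$, being odd, is covered too); \eqref{fordcondhomKM} for $(f^{(k)},h^{(l)})$ is $f^{(k)}\cwedge{2j+1}h^{(l)}=0$; and \eqref{fordcondhomKK} for $(f^{(k)},f^{(l)})$ is $f^{(k)}\cwedge{2j+2}f^{(l)}=0$. I would also confirm the full, untraced conditions \eqref{fordcond} the same way: after expanding the Clifford commutators with the product formulas of App.~\ref{apx:usefulid}, every surviving term is one of these contracted wedges (or a derivative of the closed form $h^{(k)}\wedge h^{(l)}$), hence zero. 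By Prop.~\ref{pr:algfoop} the corresponding brackets are therefore well defined.

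For the brackets themselves I would use the potential/co-potential form of Prop.~\ref{Prop3.3}. Whenever \eqref{KYbracketsPotMM} is regular — automatic in odd dimension, and valid in even dimension when $k+l\le N$ — the bracket $[h^{(k)},h^{(l)}]_\KY$ equals $\delta$ of a sum of odd contracted wedges $h^{(k)}\cwedge{2j+1}h^{(l)}$, hence $\delta(0)=0$; in odd dimension the homogeneous formula \eqref{KYbracketsHomMM} gives this directly, as $2k+2l-\dm$ is then odd. Next, $[f^{(k)},f^{(l)}]_\KY$ is a bracket of KY forms, given by \eqref{KYbracketsHomKK} as a multiple of $\delta(f^{(k)}\wedge f^{(l)})$; in odd dimension $f^{(k)}\wedge f^{(l)}=0$, so it vanishes. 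In odd dimension $[f^{(k)},h^{(l)}]_\KY$ is of the mixed type \eqref{KYbracketsHomKM}, a multiple of $d(f^{(k)}\cwedge{\dm-2k}h^{(l)})$ with $\dm-2k$ odd, hence $d(0)=0$. The outstanding cases are even dimension with $k+l>N$ for $[h^{(k)},h^{(l)}]_\KY$, and the range of $[f^{(k)},f^{(l)}]_\KY$ in even dimension where $f^{(k)}\wedge f^{(l)}$ need not vanish (it does vanish when $k+l<N$); these I would handle by Hodge duality, transporting the bracket through the relations of Sec.~\ref{ssc:HodgeDual} to an expression in the closed forms $h^{(j)}$ ($dh^{(j)}=0$) and the co-closed forms $f^{(j)}$ ($\delta f^{(j)}=0$), whose closedness/co-closedness then forces it to vanish.

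The main obstacle is this last, even-dimensional, high-degree regime. There the clean co-potential formula \eqref{KYbracketsPotMM} degenerates: its denominator $\dm-\pi-2j$ hits zero precisely on the term $h^{(k)}\cwedge{2k+2l-\dm-1}h^{(l)}$, which is itself zero, so one meets a genuine $0/0$ and must use the replacement expression — the $\delta\alpha,\delta\beta$ term in the third line of \eqref{KYbracketsHomMM}. Substituting $\delta h^{(j)}\propto\xi\wedge h^{(j-1)}$, with $\xi$ the primary Killing $1$-form, this term does \emph{not} vanish by plane-counting alone, since $\xi$ lies along the $\tilde E^\mu$ directions and a single contraction can be routed through it; making it vanish needs either a delicate cancellation between its two pieces or — the route I expect the paper to take — the Hodge-duality bookkeeping of Sec.~\ref{ssc:HodgeDual}, namely a precise account of how $[\,\cdot\,,\,\cdot\,]_\KY$ transforms under ${*}$ and how $d$ interacts with contracted wedges, so that $dh^{(j)}=0$ and $\delta f^{(j)}=0$ can be brought to bear. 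Everything else collapses to the elementary orthogonal-planes identities above.
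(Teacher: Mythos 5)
Your overall architecture matches the paper's: establish the vanishing of the relevant contracted wedge products, deduce the algebraic conditions \eqref{fordcond}, evaluate the brackets with the explicit formulas, and transport the remaining statements by Hodge duality (the paper proves \eqref{sat1} in Lemmas~\ref{lemma_tower1}--\ref{lemma_tower3} and obtains \eqref{sat2}, \eqref{sat3} from Prop.~\ref{pr:HDKYbr}). One harmless difference: for the basic identity ${h^{(i)}\cwedge{2k+1}h^{(j)}=0}$ you argue from the Darboux form \eqref{PCKY_2n} and orthogonality of the $2$-planes, whereas Lemma~\ref{lemma_tower1} gives a basis-free combinatorial argument (any term of an odd contraction contains a linear chain of $h$'s with an even number of factors, which is symmetric in its free endpoints and dies under antisymmetrization) valid for an arbitrary $2$-form in any signature; your version suffices here but is less robust, and the untraced conditions such as ${[e^{(a}\wedge h^{(k)},e^{b)}\wedge h^{(l)}]=0}$ still need the inductive reduction via \eqref{eq:even_contraction_ab} that you only gesture at.

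The genuine gap is the case you yourself flag as ``the main obstacle'': even dimension with ${2k+2l>\dm}$ for ${[h^{(k)},h^{(l)}]_\KY=0}$. You offer two candidate routes and carry out neither, and the one you say you expect the paper to take --- Hodge duality plus closedness/co-closedness --- does not work as described. Dualizing converts ${[h^{(k)},h^{(l)}]_\KY}$ into the bracket of two \emph{even} KY forms ${f^{(k)},f^{(l)}}$, which by \eqref{genhomKYbracketsPot} is proportional to ${f^{(k)}\wedge df^{(l)}-df^{(k)}\wedge f^{(l)}}$ with ${df^{(j)}=\pm*\delta h^{(j)}\neq 0}$; co-closedness of the ${f^{(j)}}$ is irrelevant and nothing is ``forced'' to vanish. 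What is actually required is the explicit cancellation of Lemma~\ref{lemma_tower3}: substitute ${\delta h^{(j)}\propto\xi\wedge h^{(j-1)}}$ into both terms of the third line of \eqref{KYbracketsHomMM}, expand with \eqref{useful2+3} and \eqref{useful4} using the vanishing of all odd contracted wedges of the ${h^{(j)}}$'s, and observe that the two pieces combine into ${-(\xi^\sharp\hook h)\hook\bigl(h^{(i-1)}\cwedge{m_*-2}h^{(j-1)}\bigr)}$ with ${m_*-2=2i+2j-\dm-3}$ odd, hence zero. Without this computation \eqref{sat1} is unproven in even dimensions for ${k+l>N}$, and since you derive \eqref{sat2} in that regime from \eqref{sat1} by duality, that part falls with it.
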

\begin{proof}
The first relation is proved in App.~\ref{apx:PCKYsym}, (Lemmas~\ref{lemma_tower1}, \ref{lemma_tower2}, and \ref{lemma_tower3}). Other relations follow
by employing Hodge duality and Prop.~\ref{pr:HDKYbr}.
\end{proof}
\noindent In odd dimensions \eqref{sat2} can be rephrased using Prop.~\ref{Prop3.4} as an interesting corollary about the Schouten--Nijenhuis brackets (cf.\ Sec.~\ref{ssc:SNbr}) of KY forms ${f^{(k)}}$
\begin{equation}\label{SNff=0}
   [f^{(k)}, f^{(l)}]_\SN=0\,.
\end{equation}
\noindent {\em Remark:} the results of prop.~\ref{ttt} also apply to the tower of symmetries built from a degenerate CCKY 2--form. For a CCKY 2--form with some degenerate eigenvalues the metric is not the canonical metric discussed in sec.~\ref{PCKYsec} and not all the operators in the tower are independent. Prop.~\ref{ttt} in any case guarantees that all operators in the tower commute among each other. These metrics have been classified in \cite{HouriEtal:2008b}.

\subsection{Complete set of commuting operators}\label{apx:setcomop}

We have now all the tools necessary to show that the canonical spacetime described above admits a complete set of mutually commuting symmetry operators for the Dirac equation.
\begin{prop}[Complete set of commuting operators]\label{CSOCO}
The most general spacetime admitting the PCKY tensor admits the following complete set of commuting operators:
\be\label{symop}
\{D, K_{\xi^{(0)}},\dots K_{\xi^{{(N-1+\eps)}}}, M_{h^{(1)}},\dots M_{h^{(N-1)}}\}\,.
\ee
Here, $D$ is the Dirac operator, $K_{\xi^{(k)}}$ are operators \eqref{Kdef} corresponding to Killing forms $\xi^{(k)}$,
\eqref{KVs}, and $M_{h^{(i)}}$ are operators \eqref{Mdef} connected with even CCKY forms  $h^{(i)}$, \eqref{hj}.
\end{prop}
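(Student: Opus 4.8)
The plan is to prove two things: \emph{(a)} every pair of operators in the list \eqref{symop} commutes, and \emph{(b)} the list is \emph{complete}, meaning that it contains $n=2N+\eps$ operators — as many as there are coordinates $(x_\mu,\psi_k)$ in the canonical metric \eqref{KNdS2} — and that these operators are functionally independent, so that their common eigenfunctions are determined by $n$ separation constants.

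Part \emph{(a)} is a matter of assembling the structural results of Secs.~\ref{ssc:commuting}--\ref{ssc:KYbr} together with Prop.~\ref{ttt}. Each $\xi^{(k)}$ is a Killing $1$-form, hence an odd KY form, and each $h^{(i)}=h^{\w i}$ is an even CCKY form; so Prop.~\ref{2.2} gives immediately $[D,K_{\xi^{(k)}}]=0$ and $[D,M_{h^{(i)}}]=0$. Because, moreover, $\xi^{(0)},\dots,\xi^{(N-1+\eps)}$ are KY \emph{one}-forms, Prop.~\ref{LD} applies with $k=\xi^{(k)\sharp}$, and the Lie-derivative relations \eqref{Lie_can} (equivalently \eqref{KYxihf=0}) give
\[
  [K_{\xi^{(k)}},K_{\xi^{(l)}}]=K_{\lied_{\xi^{(k)\sharp}}\xi^{(l)}}=K_0=0\,,\qquad
  [K_{\xi^{(k)}},M_{h^{(i)}}]=M_{\lied_{\xi^{(k)\sharp}}h^{(i)}}=M_0=0\,.
\]
The first relation is also immediate from the commutation of the coordinate Killing vectors $\partial_{\psi_k}$. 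Finally, relation \eqref{sat1} of Prop.~\ref{ttt} supplies both the algebraic first-order condition \eqref{fordcondMM} for every pair $(h^{(i)},h^{(j)})$ and the vanishing $[h^{(i)},h^{(j)}]_\KY=0$; hence Prop.~\ref{pr:algfoop} yields $[M_{h^{(i)}},M_{h^{(j)}}]=K_{[h^{(i)},h^{(j)}]_\KY}=0$. This exhausts all pairs.

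For part \emph{(b)}, the count is immediate, $1+(N+\eps)+(N-1)=2N+\eps=n$, matching the coordinates in \eqref{KNdS2}. To argue independence I would pass to the canonical Darboux basis \eqref{omega}. Using $\xi^{(k)}=(\partial_{\psi_k})^\flat$ from \eqref{KVs} and the geometric identification \eqref{liedDirClif}, the operators $K_{\xi^{(k)}}$ coincide with the Dirac-bundle Lie derivatives along the cyclic directions $\partial_{\psi_k}$, so their symbols are the $N+\eps$ independent momenta conjugate to the $\psi_k$. The remaining operators $D$ and $M_{h^{(1)}},\dots,M_{h^{(N-1)}}$ carry the $\partial_{x_\mu}$ derivatives, and by the eigenstructure \eqref{PCKY_2n} of $h$ their symbols are distinguished, for each $\mu$, by the coefficients $A^{(i)}_\mu$ ($i=0,\dots,N-1$, with $A^{(0)}_\mu=1$ playing the role of the $D$-slot); since the matrix $\bigl[A^{(i)}_\mu\bigr]$ is of Vandermonde type, nonsingular precisely because $h$ is non-degenerate (the $x_\mu$ are distinct), these $N$ operators are independent as well. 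A cross-check is to evaluate the joint eigenvalue problem for \eqref{symop} on the explicitly separated Dirac spinor of Oota--Yasui \cite{OotaYasui:2008,CKK:separability}: it reproduces exactly their separated system, the $n$ eigenvalues being the $n$ separation constants, and since a generic choice of these constants determines the separated solution, no functional relation among the operators can hold.

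The substantive difficulty has already been shifted into the earlier results that this proof invokes — above all into Prop.~\ref{ttt}, whose proof (App.~\ref{apx:PCKYsym}) must verify the very restrictive algebraic conditions \eqref{fordcond} for the whole tower and the vanishing of every Killing--Yano bracket $[h^{(i)},h^{(j)}]_\KY$, handling along the way the contracted-wedge identities and the even/odd-dimensional case split in \eqref{KYbracketsHomMM}. Within the present argument the only delicate point is to make precise the notion of completeness, i.e.\ the independence of the $n$ commuting operators; I expect the symbol computation in the canonical coordinates to be the cleanest route, with the separated ansatz of \cite{OotaYasui:2008} serving as confirmation.
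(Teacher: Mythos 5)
Your proof is correct and follows essentially the same route as the paper: commutation with $D$ from Prop.~\ref{2.2}, the $K$--$K$ and $K$--$M$ commutators from the Lie-derivative specialization (equivalently Prop.~\ref{pr:algfoop} with \eqref{KYxihf=0}), and the $M$--$M$ commutators from Prop.~\ref{pr:algfoop} with \eqref{sat1}. Your counting and symbol-based sketch of independence is more explicit than the paper's one-line appeal to the independence of the corresponding CKY forms, but it is a refinement rather than a different argument.
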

\noindent {\em Remark:} Note that the Dirac operator can be
    written as ${D=M_{h^{(0)}}}$.
\begin{proof}[Proof]
From Prop.~\ref{2.2} all operators commute with the Dirac operator $D$. Next, Prop.~\ref{pr:algfoop} together with Eq.~\eqref{KYxihf=0} guarantee that all operators $K$ commute between each other and with operators $M$.
Lastly, Prop.~\ref{pr:algfoop} together with \eqref{sat1} guarantee that operators $M$ mutually commute among themselves.
The `independence' of these operators
follows from the independence of corresponding CKY forms.
\end{proof}
\noindent
In even dimensions the set \eqref{symop} exhausts all possibilities (there are no further first order operators commuting with the Dirac operator available).
However, in odd dimensions one also has operators commuting with the Dirac operator corresponding to odd KY forms $f^{(j)}$, \eqref{fj}. Therefore, in odd dimensions one has a choice of different complete sets of commuting operators; instead of each $M_{h^{(i)}}$ one can take $K_{*h^{(i)}}$.
Prop.~\ref{pr:algfoop}, together with Eqs.~\eqref{sat2},~\eqref{sat3} and the independence of corresponding tensors, guarantees that we have another complete set of commuting operators. In particular one has the the following obvious choice:
\begin{prop}[Another complete set in odd dimensions]
In odd dimensions the canonical spacetimes admit another complete set of commuting operators, given by
\be\label{symop2}
\{D, K_{\xi^{(0)}},\dots K_{\xi^{{(N-1+\eps)}}}, K_{f^{(1)}},\dots K_{f^{(N-1)}}\}\,.
\ee
\end{prop}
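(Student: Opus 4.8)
The goal is to verify the two requirements defining a complete set of commuting symmetry operators for \eqref{symop2}: first, that every listed operator is a first-order symmetry operator of $D$ and that the operators mutually commute; second, that they are independent and that there are $\dm$ of them. The plan is to assemble this entirely from results already established, chiefly Prop.~\ref{2.2}, Prop.~\ref{pr:algfoop}, and Prop.~\ref{ttt}.

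First I would note that in an odd dimension $\dm=2N+1$ the Hodge dual $f^{(j)}=*h^{(j)}$ of the even CCKY form $h^{(j)}$ is an odd KY form (the odd-dimensional instance of the duality between closed conformal Killing--Yano and Killing--Yano forms collected in App.~\ref{apx:notation}). Hence by Prop.~\ref{2.2} each $K_{f^{(j)}}$ is a genuine first-order operator with $[D,K_{f^{(j)}}]=0$; the same proposition gives $[D,K_{\xi^{(k)}}]=0$ for the Killing $1$-forms $\xi^{(k)}$, and together with $D$ itself (which is $M_{h^{(0)}}$, see the Remark after Prop.~\ref{CSOCO}) this settles commutation with the Dirac operator.

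Next I would establish the pairwise commutators among the $K$'s. For a KY $1$-form $\xi^{(k)}$ the algebraic conditions of type \eqref{fordcondKK} hold automatically, since $X^a\hook\xi^{(k)}$ is a $0$-form, so Prop.~\ref{pr:algfoop} applies and gives $[K_{\xi^{(k)}},K_{\xi^{(l)}}]=K_{[\xi^{(k)},\xi^{(l)}]_\KY}$ and $[K_{\xi^{(k)}},K_{f^{(l)}}]=K_{[\xi^{(k)},f^{(l)}]_\KY}$, both of which vanish by \eqref{KYxihf=0}. For the pairs $(f^{(k)},f^{(l)})$ the required algebraic condition is precisely the first equation of \eqref{sat2}, so Prop.~\ref{pr:algfoop} gives $[K_{f^{(k)}},K_{f^{(l)}}]=K_{[f^{(k)},f^{(l)}]_\KY}$, which vanishes by the second equation of \eqref{sat2}. (For the more general mixed complete sets obtained by replacing only some of the $M_{h^{(i)}}$ by $K_{f^{(i)}}$, the extra commutators $[K_{f^{(k)}},M_{h^{(l)}}]$ are handled the same way using \eqref{sat3}.)

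Finally, for independence and completeness I would argue that $\xi^{(0)},\dots,\xi^{(N)}$ are independent Killing vectors, that $h^{(1)},\dots,h^{(N-1)}$ are independent by non-degeneracy of $h$, and that invertibility of $*$ carries this over to $f^{(1)},\dots,f^{(N-1)}$; the associated operators are therefore independent, and their number is $1+(N+\eps)+(N-1)=2N+\eps=\dm$, matching the count behind Prop.~\ref{CSOCO}. Since every ingredient is quoted from earlier results, there is no genuine obstacle here; the only point needing care is the Hodge-duality bookkeeping --- that $*h^{(j)}$ is indeed KY and odd in odd dimensions and that the degrees line up so that both the counting and the applicability of Prop.~\ref{pr:algfoop} are correct.
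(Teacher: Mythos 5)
Your proposal is correct and follows essentially the same route as the paper, which disposes of this proposition in a single sentence by invoking Prop.~\ref{pr:algfoop} together with Eqs.~\eqref{sat2}, \eqref{sat3} and the independence of the corresponding tensors (the commutation with ${D}$ itself being supplied by Prop.~\ref{2.2} once one notes that ${f^{(j)}=*h^{(j)}}$ is an odd KY form in odd dimensions). Your write-up simply makes explicit the same ingredients, including the correct operator count ${1+(N+\eps)+(N-1)=\dm}$.
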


\subsection{Separability in Kerr-NUT-(A)dS spacetimes}
The canonical spacetime described in Sec.~\ref{PCKYsec} captures the most general known Kerr-NUT-AdS spacetimes
in all dimensions \cite{ChenEtal:2006cqg}. These spacetimes are recovered
 when metric functions take the form
\be\label{X}
X_{\mu}=\sum\limits_{k=\varepsilon}^{N}c_kx_{\mu}^{2k}-2b_{\mu}x_{\mu}^{1-\varepsilon}+\frac{\varepsilon c}{x_{\mu}^2}\,.
\ee
Here, $c_N$ is proportional to the cosmological constant and the other constants are related to the mass, NUT-parameters, and rotations.\footnote{%
In fact, Kerr-NUT-(A)dS spacetimes assume the canonical form \eqref{KNdS2} only if all rotation parameters are nonzero and distinct.
In the opposite case the CCKY tensor $h$ is degenerate and the metric is slightly different, see \cite{HouriEtal:2008b}.}

Hidden symmetries in Kerr-NUT-AdS spacetimes allow for separating variables in the Hamilton--Jacobi \cite{FrolovEtal:2007}, scalar \cite{FrolovEtal:2007},
and Dirac \cite{OotaYasui:2008} equations in all dimensions. In the scalar case the link between symmetry operators and the separability
has been already established  \cite{SergyeyevKrtous:2008, FrolovKrtous:2011}; the result for the Hamilton--Jacobi equation follows by geometric optics approximation
\cite{SergyeyevKrtous:2008}. What remains is to intrinsically characterize the separability of the Dirac equation.

It follows from Prop.~\ref{CSOCO} that Kerr-NUT-(A)dS spacetimes in all dimensions possess a complete set of mutually commuting operators.
This set is given by the operators \eqref{symop} in even dimensions, whereas in  odd dimensions one can choose any complete subset of operators \eqref{symop} and \eqref{symop2}.
It is an interesting question to ask if a proper choice of these operators underlies separability of the massive Dirac equation demonstrated by Oota and Yasui \cite{OotaYasui:2008}.
In fact, it was demonstrated by Carter and McLenghan \cite{CarterMcLenaghan:1979}  that the Chandrasekhar's separation constants in 4D \cite{Chandrasekhar:1976} are the eigenvalues of operators \eqref{symop}. In 5D Wu found a separated solution for the massive Dirac equation in the Kerr-(A)dS spacetime which is an eigenfuction of operators \eqref{symop2} \cite{Wu:2008b}.
In all dimensions, the link between the symmetry operators \eqref{symop} or \eqref{symop2} and the separated solution of Oota and Yasui \cite{OotaYasui:2008} will be shown in \cite{CKK:separability}.


\section{First order symmetries of the Dirac operator} \label{sc:1ordsym}
In Sec.~\ref{sc:comopKYbr} we have focused our attention on operators {\em commuting} with the Dirac operator. Such operators play a central role for separability of the Dirac equation. However, commuting operators represent only a subset of all symmetry operators of the Dirac operator described by Prop.~\ref{1.1}.
One can consider operators that anti-commute with $D$, as well as other symmetry operators. Especially useful are \emph{graded symmetry operators (GSOs)}, i.e., operators that graded commute or anti-commute with $D$.
More general symmetry operators also allow us to study more general Killing--Yano brackets and extend the validity of some of the theorems of Sec.~\ref{sc:comopKYbr}. The aim of this section is not to deliver an exhaustive treatment of the subject but rather to provide a framework which can be worked out more thoroughly if needed.

\subsection{Killing--Yano brackets for general KY forms}\label{ssc:genKYbr}
We start with the anti-commuting version of Prop.~\ref{2.2}. By similar arguments as in Sec.~\ref{ssc:commuting} one finds that:
\begin{prop}[Anticommutation with ${D}$]
The most general first order operator ${S}$ anticommuting with ${D}$ is
\begin{equation}\label{anticomwD}
\begin{aligned}
    &S = K_{\omega_\even} + M_{\omega_\odd}\;,\\
    &K_{\omega_\even} = X^a\hook\omega_\even\nabla_{\!a}
       + \frac{\pi-1}{2\pi}d\omega_\even\;,
       \quad&&\text{with ${\omega_\even}$ being an even KY form}\;,\\
    &M_{\omega_\odd} = e^a\wedge\omega_\odd\nabla_{\!a}
       - \frac{\dm-\pi-1}{2(\dm-\pi)}\delta\omega_\odd\;,
       \quad&&\text{with ${\omega_\odd}$ being an odd CCKY form}\;.
\end{aligned}
\end{equation}
\end{prop}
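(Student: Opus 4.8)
The plan is to mirror the argument used for Proposition~\ref{2.2} (commutation with $D$), replacing commutators by anticommutators throughout. Recall that the symmetric form of Proposition~\ref{1.1} gives every first-order symmetry operator as $S={\cal S}_\omega+\alpha D$ with ${\cal S}_\omega$ as in \eqref{Sdef} and with the commutator relation \eqref{comSD}. First I would compute the \emph{anticommutator} $\{D,{\cal S}_\omega\}\equiv D{\cal S}_\omega+{\cal S}_\omega D$ using the same product identities \eqref{usefulProduct1} that were used to prove \eqref{comSD}. The key structural point is that contraction with $D=e^a\nabla_a$ exchanges Clifford parity: $D{\cal S}_\omega$ picks up the opposite-parity pieces of ${\cal S}_\omega$ relative to what appears in $[D,{\cal S}_\omega]$. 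Concretely, where the commutator selects $d\omega_\even$ and $\delta\omega_\odd$, the anticommutator will select the Clifford-parity-swapped combination involving $d\omega_\odd$, $\delta\omega_\even$, together with the "wrong-parity" halves of the decomposition \eqref{S=KM}. I expect the outcome to be that $\{D,{\cal S}_\omega\}$ is a second-order operator unless one kills the genuinely second-order symbol, and that demanding $\{D,S\}=0$ forces $\omega_\even$ to be KY ($\delta\omega_\even=0$) and $\omega_\odd$ to be CCKY ($d\omega_\odd=0$), exactly as the roles of even/odd swap.

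The second step is the uniqueness argument for the $\alpha$-term, again copied from the proof of Proposition~\ref{2.2}. Writing $S={\cal S}_\omega+\alpha D$ and imposing $\{D,S\}=0$, the second-order part of $\{D,\alpha D\}$ is $2(e^{(a}\wedge\alpha_\even+X^{(a}\hook\alpha_\odd)\nabla_{b)}\nabla_{\!a}\cdots$ — note the parity swap relative to \eqref{genSDcom} — so vanishing of the symbol gives $\pi\alpha_\odd=0$ and $(\dm-\pi)\alpha_\even=0$, i.e. $\alpha_\odd$ is a $0$-form (automatically CCKY) and, in odd dimension, $\alpha_\even$ is an $\dm$-form (automatically KY). Hence the $\alpha D$ piece, $\alpha D=(e^a\wedge\alpha_\even - X^a\hook\alpha_\odd)\nabla_{\!a}$, is already of the form $M_{\alpha_\odd}+K_{\alpha_\even}$ with $\alpha_\odd$ CCKY and $\alpha_\even$ KY, so it can be absorbed and there is no extra freedom. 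The zeroth-order remainder then yields $\delta\omega_\even=0$ and $d\omega_\odd=0$, completing the characterization. The analogue of the splitting identity \eqref{S=KM} — now with the roles of the even and odd parts interchanged and adjusted divergence/exterior-derivative correction terms — shows that on KY even forms and CCKY odd forms one indeed has ${\cal S}_\omega=K_{\omega_\even}+M_{\omega_\odd}$ with the operators as written in \eqref{anticomwD}.

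The main obstacle I anticipate is bookkeeping the Clifford-parity grading correctly in the anticommutator: one must be careful that $\{D,\cdot\}$ versus $[D,\cdot\,]$ swaps which half of ${\cal S}_\omega$ (the $X^a\hook$ half versus the $e^a\wedge$ half) contributes to the second-order symbol and which to the zeroth-order remainder, and that the "cross terms" $\tfrac{\pi-1}{2\pi}d\omega_\even$ and $-\tfrac{\dm-\pi-1}{2(\dm-\pi)}\delta\omega_\odd$ reattach to $K_{\omega_\even}$ and $M_{\omega_\odd}$ respectively rather than to the opposite-parity operator. Since the paper explicitly says this proposition follows "by similar arguments as in Sec.~\ref{ssc:commuting}," I would keep the write-up short: state that $\{D,{\cal S}_\omega\}$ is computed exactly as \eqref{comSD} with even$\leftrightarrow$odd exchanged, that the second-order-symbol condition forces $\alpha$ to be trivial, and that the zeroth-order condition gives the KY/CCKY requirements — and then simply exhibit $K_{\omega_\even}$ and $M_{\omega_\odd}$ in the displayed form \eqref{anticomwD}.
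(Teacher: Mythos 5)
Your overall strategy---rerun the proof of Prop.~\ref{2.2} with commutators replaced by anticommutators---is exactly what the paper intends (it offers no explicit proof, only the remark that the statement follows by similar arguments). However, your second paragraph contains a genuine error. In the commuting case the decisive fact is that $[D,{\cal S}_\omega]$ is \emph{already} of the form $(\dots)D$ with a zeroth-order prefactor, cf.~\eqref{comSD}; hence the only second-order contribution to $[D,{\cal S}_\omega+\alpha D]$ comes from $[D,\alpha]D$, and the symbol condition constrains $\alpha$ alone, rendering it essentially trivial. In the anticommuting case this fails: $\{D,{\cal S}_\omega\}=[D,{\cal S}_\omega]+2{\cal S}_\omega D$, and the term $2{\cal S}_\omega D$ is genuinely second order. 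The vanishing of the second-order symbol therefore couples $\omega$ and $\alpha$, giving $\pi(\omega_\odd+\alpha_\odd)=0$ and $(\dm-\pi)(\omega_\even+\alpha_\even)=0$, i.e.\ $\alpha=-\omega$ up to a $0$-form piece (which must vanish, being odd) and an $\dm$-form piece in \emph{even} dimension. The $\alpha$-term is thus not absorbable: it is what replaces the symmetrized symbol $\frac12(e^a\omega+\omega e^a)=e^a\wedge\omega_\even+X^a\hook\omega_\odd$ of ${\cal S}_\omega$ by the antisymmetrized one $\frac12(e^a\omega-\omega e^a)=X^a\hook\omega_\even+e^a\wedge\omega_\odd$, which is the leading symbol of $K_{\omega_\even}+M_{\omega_\odd}$. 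Consequently your claimed analogue of \eqref{S=KM}, namely ${\cal S}_\omega=K_{\omega_\even}+M_{\omega_\odd}+(\text{zeroth order})$ on even KY and odd CCKY forms, is false---the first-order parts do not match. The correct identity is ${\cal S}_\omega-\omega D=K_{\omega_\even}+M_{\omega_\odd}+\frac{\pi-1}{2\pi}d\omega_\odd-\frac{\dm-\pi-1}{2(\dm-\pi)}\delta\omega_\even$, after which the zeroth-order condition yields $d\omega_\odd=0$ and $\delta\omega_\even=0$ as required.

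Two smaller slips in the same step: the residual freedom in $\alpha_\even$ is an $\dm$-form in \emph{even} dimension (an even-degree form of degree $\dm$ exists only for $\dm$ even), not odd; and your parentheticals are swapped---a $0$-form is automatically KY, while an $\dm$-form is automatically CCKY. Neither label is the relevant one here anyway: since $e^a\wedge\alpha_\even=0$ for a top form, the residual piece enters through $\alpha_\even D=-X^a\hook\alpha_\even\nabla_{\!a}$, i.e.\ as a $K$-type term, and the requirement that it be an even KY form (equivalently, parallel) is enforced by the zeroth-order condition $\delta\omega_\even=0$ rather than holding automatically.
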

\noindent This allows us to extend the definition of Killing--Yano brackets to general (not necessarily odd) KY forms:
\begin{prop}[KY bracket for KY forms]
Let ${\kappa}$, ${\lambda}$ be two KY forms. Sufficient and necessary conditions that the commutator ${[K_\kappa,K_\lambda]}$ remains of the first order are
\begin{equation}\label{algcondgenKYbr}
    [X^{(a}\hook\kappa,X^{b)}\hook\lambda]=0\;.
\end{equation}
If these conditions are satisfied, we can define the Killing--Yano brackets ${[\kappa,\lambda]_\KY}$ as
\begin{equation}\label{genKYbr}
    [K_\kappa,K_\lambda]=K_{[\kappa,\lambda]_\KY}\;.
\end{equation}
The conditions \eqref{algcondgenKYbr} are equivalent to
\begin{equation}\label{algcondgenKYbrexpl}
\begin{aligned}
    \sum_{k=1}\frac{(-1)^k}{(2k{-}1)!}\,\bigl(X^{(a}\hook\kappa\bigr)\cwedge{2k{-}1}\bigl(X^{b)}\hook\lambda\bigr)&=0
      &\quad&\text{for at least one form odd}\,,\\
    \sum_{k=0}\frac{(-1)^k}{(2k)!}\,\bigl(X^{(a}\hook\kappa\bigr)\cwedge{2k}\bigl(X^{b)}\hook\lambda\bigr)&=0
      &\quad&\text{for both forms even}\,.
\end{aligned}
\end{equation}
For homogeneous forms each of terms in the sums must vanish separately.
\end{prop}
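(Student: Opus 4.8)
The plan is to follow the same strategy used in the proof of Prop.~\ref{pr:algfoop}, adapted to the case where the KY forms $\kappa$ and $\lambda$ need not be odd. First I would note that the operator $K_\kappa = X^a\hook\kappa\,\nabla_{\!a} + \frac{\pi-1}{2\pi}d\kappa$ is defined by the same formula \eqref{Kdef} regardless of the parity of $\kappa$; when $\kappa$ is odd it commutes with $D$ (Prop.~\ref{2.2}), and when $\kappa$ is even it anticommutes with $D$ by the preceding proposition. In either case $[K_\kappa,D]$ and $[K_\lambda,D]$ are zero-th order (in fact zero) operators multiplied by $D$, so the Jacobi identity forces $[[K_\kappa,K_\lambda],D]$ to be again a multiple of $D$; more importantly, $[K_\kappa,K_\lambda]$ graded-commutes or commutes with $D$ depending on the total parity. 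So once we know it is first order, the uniqueness results (Prop.~\ref{2.2} and its anticommuting version) identify it as $K_{[\kappa,\lambda]_\KY}$ for a KY form $[\kappa,\lambda]_\KY$ of the appropriate parity.

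Next I would compute the leading (second-order) symbol of $[K_\kappa,K_\lambda]$ exactly as in \eqref{pom}: writing $\kappa^a = X^a\hook\kappa$, $\lambda^a = X^a\hook\lambda$, one gets $K_\kappa K_\lambda = \kappa^a\lambda^b\nabla_a\nabla_b + (\text{lower order})$, and the second-order part of the commutator is $[\kappa^{(a},\lambda^{b)}]\nabla_a\nabla_b$ after discarding the antisymmetric part of $\nabla_a\nabla_b$, which contributes only curvature (zeroth-order) terms. Here the Clifford commutator $[\,\cdot\,,\cdot\,]$ should be understood in the graded sense appropriate to the parities, but since $\kappa^a$ and $\lambda^b$ have definite parities the symmetrized object $[\kappa^{(a},\lambda^{b)}]$ is well defined and its vanishing is exactly \eqref{algcondgenKYbr}. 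This shows the condition is necessary; conversely, if it holds the operator is first order and the uniqueness argument above applies, giving \eqref{genKYbr}. I do not expect any genuine obstacle here — it is a parity-bookkeeping variant of the computation already done in Prop.~\ref{pr:algfoop}.

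Finally, to obtain the explicit expansion \eqref{algcondgenKYbrexpl}, I would substitute the Clifford-product-to-contracted-wedge expansions from the appendix (the formulas \eqref{useful1_odd} and their even analogue; see App.~\ref{apx:usefulid}) for the Clifford commutator of two forms. When at least one of $\kappa^a$, $\lambda^b$ is odd, the Clifford commutator picks out the odd-total-degree part of the Clifford product, yielding the sum over $2k-1$ contracted wedges; when both are even, the commutator instead picks out the even part, yielding the sum over $2k$ contracted wedges. (The parity of $\kappa^a = X^a\hook\kappa$ is opposite to that of $\kappa$, so "$\kappa$ odd" corresponds to "$\kappa^a$ even" — I would be careful to state the dichotomy in terms of the original forms as in the proposition, i.e.\ "at least one form odd" vs.\ "both forms even".) The statement that each graded term vanishes separately for homogeneous forms is immediate since distinct values of $k$ produce forms of distinct degrees. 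The only point requiring mild care is matching the parity convention in the appendix identities to the two cases, but this is routine.
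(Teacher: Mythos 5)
Your proposal follows essentially the same route as the paper's proof: the second-order symbol of $[K_\kappa,K_\lambda]$ read off from Eq.~\eqref{pom} yields condition \eqref{algcondgenKYbr}, the expansions \eqref{useful1_odd} and \eqref{useful1_even} convert it to \eqref{algcondgenKYbrexpl} (with homogeneous forms forcing each degree to vanish separately), and the Clifford parity of the commutator together with Prop.~\ref{2.2} and its anticommuting analogue identifies the result as $K_\varphi$ for a KY form $\varphi$ of the appropriate parity. One small correction: the relevant bracket of the symbols is the \emph{ordinary} Clifford commutator, since $[K_\kappa,K_\lambda]$ is an ordinary operator commutator --- your parenthetical suggestion to read it ``in the graded sense'' would, for two even KY forms (odd symbols $\kappa^a$, $\lambda^b$), turn the condition into an anticommutator and give the wrong expansion, so that remark should be dropped; likewise the parity labels in your last paragraph should be stated for $\kappa$, $\lambda$ rather than for $\kappa^a$, $\lambda^b$, as you yourself note.
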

\begin{proof}
The commutator ${[K_\kappa,K_\lambda]}$ is again given by Eq.~\eqref{pom}. The second order term gives the conditions \eqref{algcondgenKYbr} which transfers to \eqref{algcondgenKYbrexpl} using expansions \eqref{useful1_odd} and \eqref{useful1_even}. For homogeneous forms we realize that the different terms in the sum are of the different degree.

It remains to check that the commutator reduces to ${K_\varphi}$ for some KY form ${\varphi}$. If both ${\kappa}$ and ${\lambda}$ are even, the operators ${K_\kappa}$ and ${K_\lambda}$ are odd and they anticommute with ${D}$. Hence, their commutator is even and commutes with ${D}$. According to Prop.~\ref{2.2} it is given by ${K_\varphi}$ with an odd KY form ${\varphi}$. If instead ${\kappa}$ is odd and ${\lambda}$ even, ${K_\kappa}$ is even and commuting with ${D}$ and ${K_\lambda}$ is odd and anticommuting with ${D}$. Their commutator is odd and anticommuting with ${D}$, hence, thanks to Prop.~\ref{anticomwD}, it is given again by ${K_\varphi}$ with ${\varphi}$ being an even KY form.
\end{proof}

Let us mention that the contraction of conditions \eqref{algcondgenKYbr} with respect to the indices ${a}$ and ${b}$ leads to simpler necessary conditions, which for homogeneous forms give a set of relations
\begin{equation}\label{ctrcondgenhomKYbr}
\begin{aligned}
    \kappa\cwedge{2k}\lambda &= 0\;, &&k=1,2,\dots &\quad&\text{for at least one of the forms odd}\,,\\
    \kappa\cwedge{2k-1}\lambda &= 0\;, &&k=1,2,\dots &\quad&\text{for both forms even}\,.
\end{aligned}
\end{equation}

By the same arguments as in Sec.~\ref{ssc:KYbr} the expression \eqref{KYbracketsPotKK} can be extended to general KY forms, namely:
\begin{equation}\label{genKYbracketsPot}
\begin{aligned}
   &[\kappa,\lambda]_\KY =
     -\frac1\pi\,\delta\,\sum_{k=0}\frac{(-1)^k}{(2k{+}1)!}\,
     \bigl(\pi\kappa\bigr)\cwedge{2k}\bigl(\pi\lambda\bigr)&\quad
     &\text{for ${\kappa}$ odd}\;,\\
   &\begin{split}
   [\kappa,\lambda]_\KY &=
     \frac1\pi\Bigl(\bigl(\pi\kappa\bigr)\wedge\bigl({\textstyle\frac{\pi-1}\pi}d\lambda\bigr)
     -\bigl({\textstyle\frac{\pi-1}\pi}d\kappa\bigr)\wedge\bigl(\pi\lambda\bigr)\Bigr)\\
     &\qquad
     -\frac1\pi\,\delta\,\sum_{k=1}\frac{(-1)^k}{(2k)!}\,
     \bigl(\pi\kappa\bigr)\cwedge{2k{-}1}\bigl(\pi\lambda\bigr)
   \end{split}&\quad
     &\text{for ${\kappa}$, ${\lambda}$ even}\;.
\end{aligned}
\end{equation}
For homogeneous forms ${\kappa}$, ${\lambda}$ of degrees ${p}$, ${q}$, respectively, we can employ the conditions \eqref{ctrcondgenhomKYbr} and the KY bracket simplifies to:
\begin{equation}\label{genhomKYbracketsPot}
   [\kappa,\lambda]_\KY =
   \begin{cases}
     \displaystyle
     -\frac{pq}{p+q-1}\,\delta\,\bigl(\kappa\wedge\lambda\bigr)
     &\text{for ${\kappa}$ odd}\;,\\[2ex]
     \displaystyle
     \frac{pq}{p+q+1}\,\Bigl(\frac1{q+1}\,\kappa\wedge d\lambda
     -\frac1{q+1}\, d\kappa\wedge \lambda\Bigr)\quad
     &\text{for ${\kappa}$, ${\lambda}$ even}\;.
   \end{cases}
\end{equation}

A similar procedure could be applied to commutators of arbitrary KY and CCKY forms. We will not do that in details. Let us mention only the generalization of Prop.~\ref{LD}:
\begin{prop}[Lie derivative]\label{Prop5.5}
Let ${\kappa}$ be a KY 1-form, ${\lambda}$ an arbitrary (not necessary odd) KY form, and ${\omega}$ an arbitrary (not necessary even) CCKY form. Then the commutators of the corresponding operators ${K_\lambda}$ and ${M_\omega}$ with ${K_\kappa}$ remain of the first-order and they can be written as
\begin{equation}
  [K_\kappa,K_\lambda] = K_{\lied_k\lambda}\;,\quad
  [K_\kappa,M_\omega]  = M_{\lied_\kappa\omega}\;,
\end{equation}
where ${k={}^\sharp\kappa}$ is the Killing vector associated with ${\kappa}$.
As a corollary we have the following statement: Let $k$ be a Killing vector, $\lambda$ an arbitrary KY form and $\omega$ an arbitrary CCKY form,
then
\be\label{new2}
\dot \lambda\equiv\lied_k\lambda\,,\quad
\dot \omega\equiv\lied_k\omega\,,
\ee
are `new' KY and CCKY forms, respectively.
\end{prop}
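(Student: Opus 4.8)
\textbf{Proof proposal for Proposition~\ref{Prop5.5}.}
The plan is to reduce everything to the geometric interpretation of $K_\kappa$ as the spinorial Lie derivative $\lied_k$ established in Sec.~\ref{3.3}, which did not use the degree of the second argument. First I would verify that the algebraic first-order conditions \eqref{fordcondKK} and \eqref{fordcondKM} (and their analogue for $M_\omega$) are automatically satisfied when $\kappa$ is a KY 1-form: since $k=\kappa^\sharp$ is Killing, $\kappa^a=X^a\hook\kappa$ is a $0$-form, so $[X^{(a}\hook\kappa,X^{b)}\hook\lambda]$ and $[X^{(a}\hook\kappa,e^{b)}\wedge\omega]$ each contain a Clifford commutator with a scalar, which vanishes identically. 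Hence, whatever the degree of $\lambda$ or $\omega$, the second-order symbol of the commutator drops out and $[K_\kappa,K_\lambda]$, $[K_\kappa,M_\omega]$ are genuinely first-order operators commuting with $D$ (the commutators automatically $R$-commute with $D$ since both factors do).

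Next I would argue the structural form of the result. For $[K_\kappa,K_\lambda]$ with $\lambda$ even, $K_\lambda$ is Clifford odd and anticommutes with $D$ (Prop.~\ref{anticomwD}), while $K_\kappa$ is Clifford even and commutes with $D$; their commutator is odd and anticommutes with $D$, so by Prop.~\ref{anticomwD} it must be $K_\varphi$ for some even KY form $\varphi$. For $\lambda$ odd, the earlier Prop.~\ref{LD} already applies, giving $K_\varphi$ with $\varphi$ odd. Either way $[K_\kappa,K_\lambda]=K_\varphi$; similarly $[K_\kappa,M_\omega]=M_{\varphi'}$ with $\varphi'$ of the appropriate parity (even CCKY if $\omega$ odd, odd CCKY if $\omega$ even). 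It then remains to identify $\varphi$ and $\varphi'$.

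Here is where the computation from Prop.~\ref{LD} is reused essentially verbatim. Because $\kappa$ is a KY 1-form, $\tilde\kappa=\frac{\pi-1}{2\pi}d\kappa=\frac14 d\kappa$ (the factor $\frac{\pi-1}{2\pi}$ on a $1$-form equals $\frac14$), and $\kappa^a\nabla_a=\nabla_k$. Feeding $\kappa^a=X^a\hook\kappa$, $\tilde\kappa=\frac14 d\kappa$ into the general bracket formulas \eqref{KY1} and \eqref{KY2}---which were derived without any assumption on the degree of the second form, only on the parity needed to invert $\{\ \}=X^b\hook(\ )$ resp.\ $\{\ \}=e^b\wedge(\ )$---collapses them, exactly as in the proof of Prop.~\ref{LD}, to
\begin{equation*}
  [\kappa,\lambda]_\KY=\nabla_k\lambda+\tfrac14[d\kappa,\lambda]=\lied_k\lambda\,,\qquad
  [\kappa,\omega]_\KY=\nabla_k\omega+\tfrac14[d\kappa,\omega]=\lied_k\omega\,,
\end{equation*}
using \eqref{liedDirClif}. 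Alternatively, and more cleanly, I would argue directly at the operator level: since $K_\kappa=\lied_k$ on the Dirac bundle and $\lied_k$ also acts on inhomogeneous forms by \eqref{liedDirClif}, one checks $\lied_k$ is a derivation compatible with Clifford multiplication and with $\nabla_a$ contracted against $e^a$ or $X^a$, so $\lied_k$ conjugates $K_\lambda$ into $K_{\lied_k\lambda}$ and $M_\omega$ into $M_{\lied_k\omega}$, i.e.\ $[\lied_k,K_\lambda]=K_{\lied_k\lambda}$ and $[\lied_k,M_\omega]=M_{\lied_k\omega}$; this is just the statement that $K$ and $M$ are natural in their form argument. Finally, since the left-hand sides are first-order operators commuting (resp.\ of the appropriate graded type) with $D$, the uniqueness in Prop.~\ref{2.2}/Prop.~\ref{anticomwD} forces $\lied_k\lambda$ to be KY and $\lied_k\omega$ to be CCKY, which is the corollary \eqref{new2}. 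The main obstacle is purely bookkeeping: checking that the degree-dependent coefficients $\frac{\pi-1}{2\pi}$, $\frac{\dm-\pi-1}{2(\dm-\pi)}$ and the inversion operators $\frac1\pi e_b\wedge(\cdot)$, $\frac1{\dm-\pi}X_b\hook(\cdot)$ behave correctly when the operator $\pi$ now reads off the (possibly different) degree of the second form, so that no spurious degree-zero divergences appear---exactly the point already flagged in the proof of Prop.~\ref{Prop3.3}, and harmless here because $\lied_k$ preserves degree.
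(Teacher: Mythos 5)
Your proof is correct and follows essentially the route the paper intends: the paper actually omits the proof of Prop.~\ref{Prop5.5}, saying only that ``a similar procedure'' to Prop.~\ref{LD} applies, and your write-up supplies exactly that --- the algebraic first-order conditions hold trivially because ${X^a\hook\kappa}$ is a $0$-form, the parity/(anti)commutation bookkeeping via Prop.~\ref{2.2} and Prop.~\ref{anticomwD} pins down the structural form of the commutator, and the identification with ${\lied_k}$ repeats the computation of Prop.~\ref{LD}. Your alternative operator-level naturality argument (${K_\kappa=\lied_k}$ conjugates ${K_\lambda}$ into ${K_{\lied_k\lambda}}$ because ${\lied_k e^a=0}$, ${\lied_k X_a=0}$ and ${\lied_k}$ commutes with ${\nabla}$, ${d}$, ${\delta}$ for Killing ${k}$) is in fact the cleanest way to sidestep redoing the sign-sensitive Clifford expansions when ${\lambda}$ is even, since \eqref{useful1_odd} and \eqref{useful1_even} differ. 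One slip to fix: in the parenthetical for ${[K_\kappa,M_\omega]}$ you state the parity of ${\varphi'}$ backwards. When ${\omega}$ is odd, ${M_\omega}$ is Clifford \emph{even} and anticommutes with ${D}$, so the commutator with the Clifford-even, ${D}$-commuting ${K_\kappa}$ is again Clifford even and anticommutes with ${D}$; by Prop.~\ref{anticomwD} it is ${M_{\varphi'}}$ with ${\varphi'}$ an \emph{odd} CCKY form. Likewise ${\varphi'}$ is even when ${\omega}$ is even. Parity is preserved, consistent with ${\lied_k}$ preserving degree and with your own final formula ${M_{\lied_k\omega}}$.
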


\subsection{Graded symmetry operators}\label{ssc:grKYbr}
Particulary important symmetry operators are \emph{graded symmetry operators} (GSO) which graded (anti)-commute with the Dirac operator ${D}$. The {\em graded (anti)-commutator} of two homogeneous forms $\kappa$ and $\lambda$ of degrees ${p}$ and ${q}$ is defined as
\be\label{GC}
\bgc{\kappa}{\lambda}\equiv\kappa \lambda -(-1)^{pq}\lambda\kappa\,,\quad \bgc{\kappa}{\lambda}_{+}\equiv\kappa \lambda +(-1)^{pq}\lambda\kappa\,.
\ee
For inhomogeneous forms, the operation is generalized by linearity in both arguments.

\begin{prop}[Graded (anti)-commutation with ${D}$]\label{5.2}
The most general first-order operator $S$ which graded commutes with the Dirac operator $D$, i.e. obeys
\be
\bgc{D}{S}=DS-(\eta S)D=0\,,
\ee
is given in terms of an inhomogeneous KY form $\omega$ as follows:
\be\label{KKK}
S=K_\omega\equiv X^a\hook\omega\nabla_a+\frac{\pi-1}{2\pi}d\omega\,.
\ee
Similarly, the most general first-order operator $S$ which graded anticommutes with the Dirac operator $D$, i.e. obeys
\be
\bgc{D}{S}_{+}=DK_\omega+(\eta S)D=0\,,
\ee
is given in terms of an inhomogeneous CCKY form $\omega$ as:
\be\label{MMM}
S=M_\omega\equiv e^a\wedge\omega\nabla_a-\frac{\dm-\pi-1}{2(\dm-\pi)}\delta\omega\,.
\ee
\end{prop}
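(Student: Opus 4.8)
The plan is to deduce both statements from the classifications already established above---Proposition~\ref{2.2} for first-order operators commuting with $D$ and the preceding proposition on anticommutation with $D$---by splitting $S$ according to the even--odd grading of the Clifford bundle into a Clifford-even part $S_\even$ and a Clifford-odd part $S_\odd$. The structural facts I would rely on are: $D=e^a\nabla_a$ is Clifford-odd; $\eta$ commutes with $\nabla_a$, so on any first-order operator $\eta$ acts as the grade involution, $\eta S=S_\even-S_\odd$; and each of $K_\omega$ and $M_\omega$ carries Clifford parity opposite to that of the form $\omega$.

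For the graded-commuting case I would first observe that $\bgc{D}{S}=DS-(\eta S)D=[D,S_\even]+[D,S_\odd]_+$, in which $[D,S_\even]$ is Clifford-odd and $[D,S_\odd]_+$ is Clifford-even. Since an operator vanishes iff each of its Clifford parts vanishes, $\bgc{D}{S}=0$ is equivalent to the two independent conditions $[D,S_\even]=0$ and $[D,S_\odd]_+=0$. Proposition~\ref{2.2} then forces $S_\even=K_{\omega_\odd}$ with $\omega_\odd$ an odd KY form---its Clifford-odd $M$-summand must be absent---while the anticommutation proposition forces $S_\odd=K_{\omega_\even}$ with $\omega_\even$ an even KY form. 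By linearity of $\omega\mapsto K_\omega$ this yields $S=K_{\omega_\odd}+K_{\omega_\even}=K_\omega$ for the inhomogeneous KY form $\omega=\omega_\odd+\omega_\even$; running the same two classifications backwards gives the converse, which establishes \eqref{KKK}.

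The graded-anticommuting case \eqref{MMM} I would handle identically, with the roles of the two classifications exchanged: here $\bgc{D}{S}_{+}=DS+(\eta S)D=[D,S_\even]_+ + [D,S_\odd]$, so the condition splits into the requirements that $S_\even$ anticommutes with $D$ and $S_\odd$ commutes with $D$. The anticommutation proposition forces the even part to be $M_{\omega_\odd}$ with $\omega_\odd$ an odd CCKY form, and Proposition~\ref{2.2} forces the odd part to be $M_{\omega_\even}$ with $\omega_\even$ an even CCKY form; hence $S=M_{\omega_\odd}+M_{\omega_\even}=M_\omega$ with $\omega$ an inhomogeneous CCKY form, the converse again being immediate.

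I do not expect any genuine computation here: the argument is parity bookkeeping on top of results already in hand. The points needing care are keeping the Clifford parities straight---this is exactly what decouples $\bgc{D}{S}$ and $\bgc{D}{S}_{+}$ into two conditions and what eliminates the wrong-parity summands of the cited propositions---together with the innocuous observation that $\eta$ commutes with $\nabla_a$ (so that $\eta S$ is indeed the grade involution of $S$), and the degenerate top-degree form components, for which the correspondence between $\omega$ and the operator is already handled inside Proposition~\ref{2.2} and its anticommuting analogue and therefore only needs to be invoked.
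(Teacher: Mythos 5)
Your proposal is correct and is precisely the argument the paper intends: its proof of this proposition is the single line that both statements follow by combining Prop.~\ref{2.2} with the anticommutation proposition, and your parity bookkeeping (splitting $S$ into Clifford even and odd parts, decomposing $\bgc{D}{S}$ into $[D,S_\even]+[D,S_\odd]_+$ and likewise for the graded anticommutator, then applying the two classifications to the respective parts) is exactly the intended combination, just written out explicitly.
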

\begin{proof}
Both statements can be obtained as a combination of Prop.~\ref{2.2} and Prop.~\ref{anticomwD}.
\end{proof}

In the heuristic correspondence with the classical supersymmetric spinning particle, commuting GSO operators are related to observables Poisson-commuting with the superinvariant \cite{GibbonsEtal:1993}. Since their mutual Poisson brackets correspond to graded commutators we are motivated to investigate the graded commutation of commuting GSOs. As in the previous sections we restrict to the case where the graded commutator is of the first order. Since the product of two commuting GSOs is again a commuting GSO, the first order graded commutator has the form given by eq.~\eqref{KKK}. We can thus define a new operation on KY forms:
\begin{prop}[Graded KY bracket]\label{pr:grKYbr}
Let ${\kappa}$, ${\lambda}$ be two KY forms. Sufficient and necessary conditions that the graded commutator ${\bgc{K_\kappa}{K_\lambda}}$ remains of the first order are
\begin{equation}\label{algcondgrKYbr}
    \bgc{X^{(a}\hook\kappa}{X^{b)}\hook\lambda}=0\;,
\end{equation}
If these conditions are satisfied, we can define {graded Killing--Yano brackets ${\bgc{\kappa}{\lambda}_\KY}$} by relation
\begin{equation}\label{grKYbr}
    \bgc{K_\kappa}{K_\lambda}=K_{\gc{\kappa}{\lambda}_\KY}\;.
\end{equation}
The conditions \eqref{algcondgrKYbr} are equivalent to
\begin{equation}\label{algcondgrKYbrexpl}
    \sum_{k=0}\frac{(-1)^k}{(2k+1)!}\,
    \bigr( X^{(a}\hook\eta\kappa\bigl)\cwedge{2k{+}1}\bigl(X^{b)}\hook\lambda\bigr)=0\;.
\end{equation}
For homogeneous forms each term in the sum must vanish separately:
\begin{equation}\label{algcondgrhomKYbr}
    \bigr( X^{(a}\hook\kappa\bigl)\cwedge{2k{+}1}\bigl(X^{b)}\hook\lambda\bigr)=0\,,\quad
    k=0,1,\dots\,.
\end{equation}
\end{prop}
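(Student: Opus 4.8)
The plan is to run, in the graded setting, the same argument already used for Prop.~\ref{pr:algfoop} and for its un-graded extension to general KY forms. The first ingredient is purely structural and has in fact already been flagged in the paragraph preceding the proposition: by Prop.~\ref{5.2} both $K_\kappa$ and $K_\lambda$ graded commute with $D$, and a graded commutator of two operators that graded commute with $D$ again graded commutes with $D$ (the graded Jacobi identity). Hence, the moment $\bgc{K_\kappa}{K_\lambda}$ is known to be of the first order, Prop.~\ref{5.2} forces it into the form \eqref{KKK}, i.e. $\bgc{K_\kappa}{K_\lambda}=K_\varphi$ for some (in general inhomogeneous) KY form $\varphi$; one then simply \emph{defines} $\gc{\kappa}{\lambda}_\KY\equiv\varphi$, which is \eqref{grKYbr} (an explicit formula for $\gc{\kappa}{\lambda}_\KY$, if wanted, is read off by inverting the relation between its first-order coefficient and $X^b\hook\varphi$, exactly as in \eqref{KY1}, but it is not needed here). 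So the entire content of the proposition reduces to the criterion for $\bgc{K_\kappa}{K_\lambda}$ to remain of the first order.

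To obtain that criterion I would compute the leading symbol. Expanding $K_\kappa K_\lambda$ as in \eqref{pom}, with $\kappa^a=X^a\hook\kappa$, $\lambda^a=X^a\hook\lambda$, its second-order part is $\kappa^a\lambda^b\,\nabla_a\nabla_b$. Taking the graded commutator and using that $K$ is linear in its form argument (the operator parity of $K_\omega$ being shifted uniformly from the form parity by the single contraction $X^a\hook$), the second-order part of $\bgc{K_\kappa}{K_\lambda}$ becomes $\bgc{\kappa^a}{\lambda^b}\,\nabla_a\nabla_b$, with $\bgc{\,}{\,}$ the Clifford graded commutator \eqref{GC} of the contraction coefficients. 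Splitting $\nabla_a\nabla_b$ into symmetric and antisymmetric parts — the latter being curvature, hence a zeroth-order operator, as noted below \eqref{pom} — the genuine second-order part is $\bgc{\kappa^{(a}}{\lambda^{b)}}\,\nabla_{(a}\nabla_{b)}$. Since the $\nabla_{(a}\nabla_{b)}$ are `independent' at second order, this vanishes if and only if $\bgc{X^{(a}\hook\kappa}{X^{b)}\hook\lambda}=0$, which is \eqref{algcondgrKYbr}. Together with the first paragraph this proves that \eqref{algcondgrKYbr} is necessary and sufficient and establishes \eqref{grKYbr}.

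It then remains to rewrite \eqref{algcondgrKYbr} in contracted-wedge form. Here I would invoke the Clifford (anti)commutator expansions of App.~\ref{apx:usefulid} — the graded analogue of the identities \eqref{usefulProduct1}, \eqref{useful1_odd} already used in the earlier proofs — according to which the graded Clifford commutator of two forms is a sum over the odd-order contracted wedges $\cwedge{2k+1}$ with the main involution $\eta$ acting on one of the factors; this is exactly where $\eta$ enters \eqref{algcondgrKYbrexpl}. Applying the expansion to the factors $X^{(a}\hook\kappa$ and $X^{b)}\hook\lambda$ yields \eqref{algcondgrKYbrexpl}. If $\kappa$ and $\lambda$ are homogeneous (of degrees $p$ and $q$), the summands $\bigl(X^{(a}\hook\kappa\bigr)\cwedge{2k+1}\bigl(X^{b)}\hook\lambda\bigr)$ are homogeneous forms whose degrees strictly decrease with $k$ (by four at each step, the $\eta$ contributing only an irrelevant overall sign $(-1)^p$), so they must vanish one by one — which is \eqref{algcondgrhomKYbr}.

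The step I expect to cost the most care is the bookkeeping of signs and parities: one must verify that the operator-level graded commutator $\bgc{K_\kappa}{K_\lambda}$, with the parity that Prop.~\ref{5.2} attaches to $K_\omega$ through the involution $\eta$, really does restrict on its leading symbol to the \emph{form}-level graded Clifford commutator \eqref{GC} with every sign consistent; and one must keep in mind that $\varphi$ can be inhomogeneous even when $\kappa,\lambda$ are homogeneous, so the full strength of Prop.~\ref{5.2} — which covers both parities, via Prop.~\ref{2.2} and the characterization \eqref{anticomwD} — is genuinely needed. Once those conventions are pinned down, the lower-order terms and the explicit contracted-wedge manipulation are a routine repetition of the computations already carried out in the proof of Prop.~\ref{pr:algfoop}.
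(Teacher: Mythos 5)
Your proposal is correct and follows essentially the same route the paper intends: its proof consists of the single remark that the derivation is analogous to Secs.~\ref{ssc:commuting} and \ref{ssc:foc}, and you have simply carried out that analogy explicitly (uniqueness via Prop.~\ref{5.2}, vanishing of the symmetrized second-order symbol, and the expansion \eqref{useful_graded} producing the odd contracted wedges with the $\eta$ insertion). The parity bookkeeping you flag does close correctly, since $(-1)^{(p+1)(q+1)}=(-1)^{(p-1)(q-1)}$ matches the operator-level grading to the form-level graded commutator of $X^a\hook\kappa$ and $X^b\hook\lambda$.
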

\begin{proof}
The derivation of these statements is analogous to those in Secs.~\ref{ssc:commuting} and \ref{ssc:foc}.
\end{proof}
\noindent Note that for homogeneous forms as a consequence of the conditions \eqref{algcondgrKYbrexpl} we also get relations
\begin{equation}\label{ctrcondgrhomKYbr}
    \kappa\cwedge{2k}\lambda = 0\;,\qquad k=1,2,\dots \;.
\end{equation}

Similarly to Sec.~\ref{ssc:KYbr}, we can derive the explicit expressions for graded Killing--Yano brackets
\begin{equation}
    \bgc{\kappa}{\lambda}_\KY =
     \frac1\pi\,\delta\,\sum_{k=0}\frac{(-1)^k}{(2k{+}1)!}\,
     \bigl(\pi\eta\kappa\bigr)\cwedge{2k}\bigl(\pi\lambda\bigr)\;,
\end{equation}
which for homogeneous forms of degrees ${p}$ and ${q}$ reduces to
\begin{equation}\label{grKYbrPot}
    \bgc{\kappa}{\lambda}_\KY =
     \frac{pq}{p+q-1}\,\delta\,\Bigl((\eta\kappa)\wedge\lambda\Bigr)\;.
\end{equation}
It is straightforward to check that ${\bgc{\lambda}{\kappa}_\KY=(-1)^{pq+p+q}\bgc{\kappa}{\lambda}_\KY}$. This is consistent with the property of the graded commutator of corresponding operators ${K_\kappa}$, ${K_\lambda}$, which have opposite parity to that of ${\kappa}$ and ${\lambda}$.

\subsection{Relation to Schouten--Nijenhuis brackets}\label{ssc:SNbr}

It is well known that skew-symmetric tensors form a (graded) Lie algebra with respect to the skew-symmetric Schouten--Nijenhuis (SN) bracket \cite{Schouten:1940, Schouten:1954, Nijenhuis:1955}.
This fact led Kastor, Ray, and Traschen \cite{KastorEtal:2007} to investigate whether, similar to Killing vectors, KY tensors form a subalgebra of this algebra. Unfortunately, such statement is not true in general, the authors were able to give two counter examples disproving the conjecture. On the other hand, the statement is true in maximally symmetric spaces. In particular, in Minkowski space KY tensors form a Lorentz-like subalgebra with respect to the SN brackets \cite{KastorEtal:2007}. It is then natural to ask under which conditions
SN brackets have to produce KY tensors.

In this subsection we shall demonstrate that graded Killing--Yano brackets, defined in Prop.~\ref{pr:grKYbr}, reduce for homogeneous forms to corresponding SN brackets. This means that restriction \eqref{algcondgrhomKYbr} represents a sufficient condition under which SN brackets of two KY forms give rise to a KY form.

Let us first recall the definition of SN brackets and state some useful identities.
SN brackets are usually defined for (skew-symmetric) multivector fields.
Given $A$ a rank $p$ multivector and $B$ a rank $q$ multivector, their Schouten--Nijenhus bracket is a $(p+q-1)$-multivector defined by
\be
[A,B]_{\SN}^{a_1\dots a_{p+q-1}}=pA^{b[a_1\dots a_{p-1}}\nabla_b B^{a_p\dots a_{p+q-1}]}
+ (-1)^{pq} qB^{b[a_1\dots a_{q-1}}\nabla_b A^{a_q\dots a_{p+q-1}]}\, . \label{SN_definition}
\ee
In the case of a vector $A=k$ this definition reduces to the Lie-derivative of a covariant antisymmetric tensor, $[k,B]_{\SN}=\lied_k B$.

For our purpose it is useful to extend the definition of the SN bracket to forms. For a $p$-form $\alpha$ and a $q$-form $\beta$ we define
\be
[\alpha,\beta]_{\SN}\equiv\Bigl([\alpha^\sharp, \beta^\sharp]_{\SN}\Bigr)^\flat\,.
\ee
In our notation this is
\be\label{NS2}
\frac{(p+q-1)!}{p!\,q!}\,[\alpha,\beta]_{\SN}=(X^a\hook\alpha)\wedge \nabla_a\beta+(-1)^{pq}(X^a\hook \beta)\wedge \nabla_a\alpha\,.
\ee
In particular, for a conformal Killing 1-form $\kappa$ and an arbitrary $q$-form $\beta$ we have
\be\label{SN_ckv}
[\kappa, \beta]_{\SN}
=(\lied_{\kappa^\sharp}\beta^\sharp)^\flat=\lied_{\kappa^\sharp} \beta+\frac{2q}{n}\beta\, \delta k \,,
\ee
where in the last equality we have used the fact that $\lied_{\kappa^\sharp}g=-\frac{2}{n}(\delta\kappa) g$. When $\kappa$ is a Killing 1-form, $\delta \kappa=0$, and the previous formula reduces to
\be\label{SN_kv}
[\kappa, \beta]_{\SN}=\lied_{\kappa^\sharp} \beta\,.
\ee

It may be also useful to rewrite formula \eqref{NS2} as
\be
\frac{(p+q-1)!}{p!\,q!}\,[\alpha,\beta]_{\SN}=-\delta(\alpha\wedge \beta)+\delta\alpha\wedge \beta+(-1)^p\alpha\wedge\delta\beta\,,\label{eq:useful_NS}
\ee
which follows from identity \eqref{A23}.
For KY forms $\kappa$ and $\lambda$ this reduces to
\be
[\kappa,\lambda]_{\SN}=-\frac{p!\,q!}{(p+q-1)!}\,\delta(\kappa\wedge \lambda)\,. \label{eq:useful_KY}
\ee
Comparing this identity with \eqref{grKYbrPot} allows us to formulate a direct relation between SN and graded Killing--Yano brackets
\begin{prop}[SN and graded Killing--Yano brackets]\label{pr:grKYSN}
For homogeneous KY ${p}$-form ${\kappa}$ and \mbox{${q}$-form} ${\lambda}$ satisfying conditions \eqref{algcondgrhomKYbr}, the graded Killing--Yano brackets are related to SN brackets as
\begin{equation}\label{grKYSN}
   \bgc{\kappa}{\lambda}_\KY =
     -\frac{(p+q-2)!}{(p{-}1)!\,(q{-}1)!}\,\bigl[\eta\kappa,\lambda\bigr]_\SN\;.
\end{equation}
\end{prop}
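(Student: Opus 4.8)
The plan is to compare two explicit formulas that have already been established in the excerpt: the expression \eqref{grKYbrPot} for the graded Killing--Yano bracket of two homogeneous KY forms,
\begin{equation*}
    \bgc{\kappa}{\lambda}_\KY =
     \frac{pq}{p+q-1}\,\delta\,\bigl((\eta\kappa)\wedge\lambda\bigr)\;,
\end{equation*}
and the identity \eqref{eq:useful_KY} for the SN bracket of two KY forms,
\begin{equation*}
[\kappa,\lambda]_{\SN}=-\frac{p!\,q!}{(p+q-1)!}\,\delta(\kappa\wedge \lambda)\;.
\end{equation*}
Both are valid precisely when the relevant conditions hold (for \eqref{grKYbrPot} this is \eqref{algcondgrhomKYbr}), so under the hypotheses of the proposition both may be invoked.

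First I would replace $\kappa$ by $\eta\kappa$ in the SN identity \eqref{eq:useful_KY}. Since $\eta\kappa$ is still a KY $p$-form (the $\eta$ map just multiplies a homogeneous $p$-form by $(-1)^{p(p-1)/2}$ or a similar sign and preserves the KY property), this gives
\begin{equation*}
[\eta\kappa,\lambda]_{\SN}=-\frac{p!\,q!}{(p+q-1)!}\,\delta\bigl((\eta\kappa)\wedge \lambda\bigr)\;.
\end{equation*}
Then I would solve this for $\delta\bigl((\eta\kappa)\wedge\lambda\bigr)$ and substitute into \eqref{grKYbrPot}, obtaining
\begin{equation*}
\bgc{\kappa}{\lambda}_\KY = \frac{pq}{p+q-1}\cdot\Bigl(-\frac{(p+q-1)!}{p!\,q!}\Bigr)[\eta\kappa,\lambda]_{\SN}
= -\frac{pq}{p+q-1}\cdot\frac{(p+q-1)!}{p!\,q!}\,[\eta\kappa,\lambda]_\SN\;.
\end{equation*}
The final step is the elementary factorial simplification $\dfrac{pq}{p+q-1}\cdot\dfrac{(p+q-1)!}{p!\,q!} = \dfrac{(p+q-2)!}{(p-1)!\,(q-1)!}$, which yields exactly \eqref{grKYSN}.

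There is no real obstacle here; the proof is a two-line algebraic manipulation once the two prior formulas are in hand. The only point requiring a word of care is the consistency of applying \eqref{eq:useful_KY} to the pair $(\eta\kappa,\lambda)$: one should note that $\eta\kappa$ is KY whenever $\kappa$ is, and that the necessary/sufficient condition \eqref{algcondgrhomKYbr} for the graded bracket is stated in terms of $X^{(a}\hook\kappa$ (equivalently $X^{(a}\hook\eta\kappa$, since $\eta$ only changes an overall sign), so nothing new needs to be assumed. I would also remark that the asymmetry of the resulting formula under $\kappa\leftrightarrow\lambda$ is consistent with the graded-symmetry property ${\bgc{\lambda}{\kappa}_\KY=(-1)^{pq+p+q}\bgc{\kappa}{\lambda}_\KY}$ already noted after \eqref{grKYbrPot}, since the SN bracket on forms satisfies the corresponding graded antisymmetry.
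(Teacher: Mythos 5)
Your proof is correct and follows essentially the same route as the paper, which likewise obtains \eqref{grKYSN} by directly comparing the explicit graded bracket \eqref{grKYbrPot} with the co-exact expression \eqref{eq:useful_KY} for the SN bracket of KY forms; the factorial simplification is right. (One trivial remark: for a homogeneous $p$-form the paper's parity operator gives $\eta\kappa=(-1)^p\kappa$, not $(-1)^{p(p-1)/2}\kappa$, but since it is in any case just an overall sign your argument is unaffected.)
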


As a corollary, we see that the algebraical conditions \eqref{algcondgrhomKYbr} guarantee that the SN brackets of homogeneous KY forms produce again a KY form. Let us take an example directly related to investigations in \cite{KastorEtal:2007}. For two KY 2-forms the conditions \eqref{algcondgrhomKYbr} claim that the associated Killing tensor vanishes, $K_{ab}\equiv\kappa_{(a|c|}\lambda_{b)}{}^c=0$ (where, e.g., ${\kappa_{ac}=X_c\hook X_a\hook\kappa}$). In such a case $[\kappa,\lambda]_{\SN}$ must be a KY 3-form.
Condition $K_{ab}=0$ is, for example, satisfied in four dimensional Minkowski space for $\kappa=dt\wedge dx$ and $\lambda=dy\wedge dz$. In this case, however, we have $[\kappa,\lambda]_{\SN}=0$. We do not know any nontrivial example for which $K_{ab}=0$ and $[\kappa,\lambda]_{\SN}$ is nontrivial.

\subsection{Hodge duality}\label{ssc:HodgeDual}

SN brackets can be related also to Killing--Yano brackets of CCKY forms discussed in Sec.~\ref{sc:comopKYbr}. It can be done employing the well-known duality between KY and CCKY forms. We begin by reviewing some facts about Hodge duality.

The Hodge dual of a homogeneous ${p}$-form can be written as
${*\omega = \frac1{p!}\,{\displaystyle\omega\cwedge{p}\eps}}$,
where, ${\eps}$ is Levi-Civita tensor---${\dm}$-form normalized as ${{\displaystyle\eps\cwedge{\dm}\eps}=s\dm!}$, ${s}$ being a product of signs in the signature. It can be rewritten using Clifford multiplication as
\begin{equation}\label{HodgeDualCl}
    *\omega=(-1)^{(n{-1})p+[\frac p2]}\,\eps\omega = (-1)^{[\frac p2]}\,\omega\eps\;,
\end{equation}
which can be generalized directly to inhomogeneous forms by linearity. It means that the Hodge dual differs from the multiplication by ${\eps}$ just by a sign. Therefore we will understand by Hodge duality any of these two operations.

It is well known that the space of CKY forms is invariant under Hodge duality and in particular that KY forms are Hodge dual to CCKY forms. Namely, let $\kappa$ be a KY form and $\alpha$ its Hodge dual CCKY form. This relation translates to an analogous relation for the corresponding operators ${K_\kappa}$ and ${M_\alpha}$, namely
\begin{equation}\label{KMduality}
    K_{\eps\alpha} = (-1)^{\dm-1}\eps\,M_\alpha\;,\qquad
    M_{\eps\kappa} = (-1)^{\dm-1}\eps\,K_\kappa\;.
\end{equation}
It is a direct consequence of identities \eqref{HDwedgehook} and \eqref{HDddelta}.

Now we can formulate properties of the Killing--Yano brackets under Hodge duality.
\begin{prop}[Hodge duality of Killing--Yano brackets]\label{pr:HDKYbr}
Let ${\mu}$ be an odd KY form, and ${\alpha}$, ${\beta}$ even CCKY forms dual to KY forms ${\kappa}$ and ${\lambda}$ as ${\alpha = \eps\kappa}$, ${\beta = \eps\lambda}$. Then the Killing--Yano brackets defined in Sec.~\ref{sc:comopKYbr} are related to Killing--Yano brackets from Sec.~\ref{ssc:genKYbr} as
\begin{equation}\label{KYbrduality}
    [\mu,\alpha]_\KY = (-1)^{n{-}1}\eps\, [\mu,\kappa]_\KY\;,\qquad
    [\alpha,\beta]_\KY = (-1)^{n{-}1}\eps^2\, [\kappa,\lambda]_\KY\;.
\end{equation}
Note that ${\eps^2}$ contributes only a sign factor: ${\eps^2=(-1)^{[\frac\dm2]}s}$, cf.~\eqref{epsids}.
\end{prop}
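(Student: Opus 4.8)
The plan is to argue entirely at the level of operators, reducing everything to the already-established duality identities \eqref{KMduality} for the operators $K$ and $M$ together with the defining relations of the Killing--Yano brackets: $[K_\kappa,K_\lambda]=K_{[\kappa,\lambda]_\KY}$ from Sec.~\ref{ssc:genKYbr}, and $[K_\mu,M_\omega]=M_{[\mu,\omega]_\KY}$, $[M_\alpha,M_\beta]=K_{[\alpha,\beta]_\KY}$ from Prop.~\ref{pr:algfoop}. The final step, reading off an equality of forms from an equality of operators, is legitimate because $\nu\mapsto K_\nu$ and $\nu\mapsto M_\nu$ are injective on the relevant CKY spaces: $\nu$ is recovered from the principal symbol by a contraction, e.g.\ $X_a\hook(e^a\wedge\nu)=(\dm-\pi)\nu$.

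First I would record two facts. (a) The volume form $\eps$ is covariantly constant, so left Clifford multiplication by $\eps$ commutes with every $\nabla_a$; and since such multiplication commutes with Clifford multiplication by even-degree forms and anticommutes with odd-degree ones for $\dm$ even (and is central for $\dm$ odd), it commutes with $K_\mu$ when $\mu$ is odd, whereas $K_\kappa\,\eps=(-1)^{\dm-1}\eps\,K_\kappa$ whenever $\eps\kappa$ is even --- the only case relevant here. (b) Using the duality identity \eqref{HDwedgehook}, which exchanges $e^b\wedge{}$ with $X^b\hook{}$ under multiplication by $\eps$, the first-order algebraic conditions \eqref{fordcondKM}, \eqref{fordcondMM} defining the left-hand brackets are equivalent to the conditions \eqref{algcondgenKYbr} defining the right-hand brackets; hence the left-hand brackets are defined precisely when the right-hand ones are, so the statement is well posed.

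Then the two identities in \eqref{KYbrduality} drop out by short manipulations. For the first, write $\alpha=\eps\kappa$, use \eqref{KMduality} to replace $M_\alpha=M_{\eps\kappa}=(-1)^{\dm-1}\eps\,K_\kappa$, pull $\eps$ through $K_\mu$ by fact (a), apply $[K_\mu,K_\kappa]=K_{[\mu,\kappa]_\KY}$, and convert $\eps\,K_{[\mu,\kappa]_\KY}$ back into an $M$-operator via \eqref{KMduality}; comparing with $[K_\mu,M_\alpha]=M_{[\mu,\alpha]_\KY}$ and using injectivity gives the first identity in \eqref{KYbrduality}. For the second, substitute $M_\alpha=(-1)^{\dm-1}\eps\,K_\kappa$ and $M_\beta=(-1)^{\dm-1}\eps\,K_\lambda$ into $[M_\alpha,M_\beta]$; commuting the inner $\eps$ past $K_\kappa$ (resp.\ $K_\lambda$) by fact (a) produces a sign $(-1)^{\dm-1}$ and a scalar $\eps^2=(-1)^{[\dm/2]}s$, so $[M_\alpha,M_\beta]=(-1)^{\dm-1}\eps^2\,[K_\kappa,K_\lambda]=K_{(-1)^{\dm-1}\eps^2[\kappa,\lambda]_\KY}$, which compared with $[M_\alpha,M_\beta]=K_{[\alpha,\beta]_\KY}$ yields the second identity.

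The only real difficulty is the bookkeeping of signs: one must track the parity-dependent sign with which left multiplication by $\eps$ passes each Clifford factor buried in $K_\nu$ and $M_\nu$ --- including the zeroth-order pieces $\tfrac{\pi-1}{2\pi}d\nu$ and $-\tfrac{\dm-\pi-1}{2(\dm-\pi)}\delta\nu$, whose degrees differ by one from that of $\nu$ --- and combine these consistently with the two factors $(-1)^{\dm-1}$ coming from \eqref{KMduality}. Everything else is a one-line substitution, and the case split on the parity of $\dm$ (which decides whether the dual KY forms $\kappa,\lambda$ are odd or even) affects only these signs, not the structure of the argument.
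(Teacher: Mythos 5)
Your argument is correct and follows essentially the same route as the paper: substitute ${M_{\eps\kappa}=(-1)^{n-1}\eps K_\kappa}$ from \eqref{KMduality} into the defining commutators \eqref{KYbrackets0} and \eqref{genKYbr}, move ${\eps}$ through the ${K}$-operators using \eqref{epscom}, and read off the resulting form. The only additions are your explicit checks that the algebraic first-order conditions on the two sides correspond under Hodge duality and that ${\nu\mapsto K_\nu,\,M_\nu}$ is injective; both are left implicit in the paper's terser proof, which also treats only the first identity and declares the second analogous.
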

\begin{proof}
The brackets on left hand side are defined in terms of commutators of the operators ${K_\mu}$, ${M_\alpha}$, and ${M_\beta}$. For example, for the first relation we have
\begin{equation}\label{HDKYbrproof1}
    M_{[\mu,\alpha]_\KY}=[K_\mu,M_{\eps\kappa}]=(-1)^{n-1}\bigl(K_\mu\eps K_\kappa - \eps K_\kappa K_\mu\bigr)\;,
\end{equation}
where we used \eqref{KYbrackets0} and \eqref{KMduality}. Taking into account the property \eqref{epscom} we obtain
\begin{equation}\label{HDKYbrproof2}
    M_{[\mu,\alpha]_\KY}=(-1)^{n-1}\eps\, [K_\mu,K_\kappa]=(-1)^{n-1}\eps\, K_{[\mu,\kappa]_\KY}\;,
\end{equation}
which concludes the proof. The second relation can be demonstrated analogously.
\end{proof}

Finally, using this duality we can express some of the Killing--Yano brackets of CCKY forms in terms of the SN brackets.
\begin{prop}[KY and SN brackets]\label{Prop3.4}
Under the necessary conditions \eqref{necesary_cond} and for homogeneous odd KY ${\kappa}$, ${\lambda}$, ${\mu}$ of degrees ${p}$, ${q}$, and ${r}$, respectively, and even CCKY forms ${\alpha}$, ${\beta}$, ${\omega}$, of degrees $a$, ${b}$, and ${c}$, respectively, the Killing--Yano brackets \eqref{KYbrackets0} are related to the SN brackets as follows
\begin{subequations}\label{KY-NS_Hom}
\begin{align}
   [\kappa,\lambda]_\KY &=
     \frac{(p+q-2)!}{(p{-}1)!\,(q{-}1)!}\,[\kappa,\lambda]_\SN\;,
     \label{KY-NS_HomKK}\\
   [\mu,\omega]_\KY &=
     s\,(-1)^{[\frac{n{-}1}2]}\,\frac{(r+n-c-2)!}{(r{-}1)!\,(n{-}c{-}1)!}\,\eps\,[\mu,\eps\omega]_\SN\;,
     \label{KY-NS_HomKM}\\
   [\alpha,\beta]_\KY &=
     s\,(-1)^{[\frac{n{-}1}2]}\,\frac{(2n-a-b-2)!}{(n{-}a{-}1)!\,(n{-}b{-}1)!}\,[\eps\alpha,\eps\beta]_\SN\
     \qquad\text{for ${\dm}$ odd}\;.
     \label{KY-NS_HomMM}
   \end{align}
\end{subequations}
\end{prop}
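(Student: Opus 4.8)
The plan is to prove \eqref{KY-NS_HomKK} by a direct comparison of two already-available formulas, and then to deduce \eqref{KY-NS_HomKM} and \eqref{KY-NS_HomMM} from it by Hodge duality, invoking Prop.~\ref{pr:HDKYbr}.

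For the basic identity I would in fact prove the slightly more general statement: for homogeneous KY forms $\kappa$ of degree $p$ and $\lambda$ of degree $q$, at least one of which is odd, and satisfying the necessary conditions \eqref{necesary_cond}, one has $[\kappa,\lambda]_\KY=\frac{(p{-}q{+}2)!}{(p{-}1)!\,(q{-}1)!}[\kappa,\lambda]_\SN$ — wait, that should read $[\kappa,\lambda]_\KY=\frac{(p{+}q{-}2)!}{(p{-}1)!\,(q{-}1)!}[\kappa,\lambda]_\SN$; this already contains \eqref{KY-NS_HomKK} (both forms odd). Indeed, in this parity range \eqref{KYbracketsHomKK} (or the branch of \eqref{genhomKYbracketsPot} valid when at least one argument is odd) gives $[\kappa,\lambda]_\KY=-\frac{p\,q}{p{+}q{-}1}\,\delta(\kappa\wedge\lambda)$, while \eqref{eq:useful_KY} gives $[\kappa,\lambda]_\SN=-\frac{p!\,q!}{(p{+}q{-}1)!}\,\delta(\kappa\wedge\lambda)$. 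Both brackets are thus scalar multiples of the single form $\delta(\kappa\wedge\lambda)$, and comparing the two coefficients yields the claim with the constant $\frac{p\,q}{p{+}q{-}1}\cdot\frac{(p{+}q{-}1)!}{p!\,q!}=\frac{(p{+}q{-}2)!}{(p{-}1)!\,(q{-}1)!}$.

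For the remaining two relations I would pass to Hodge duals. To an even CCKY form $\omega$ of degree $c$ (and likewise to $\alpha$, $\beta$ of degrees $a$, $b$) I associate the KY form $\eps^{-1}\omega$ of degree $\dm-c$; since $\eps^2=(-1)^{[\frac\dm2]}s=\pm1$ this is just $\pm\,\eps\omega$. Prop.~\ref{pr:HDKYbr} applies because $\eps$-multiplication exchanges $X^a\hook$ with $e^a\wedge$ (identities \eqref{HDwedgehook}, \eqref{HDddelta}) and hence carries the CCKY half of the conditions \eqref{necesary_cond} onto their KY half; it rewrites the left-hand sides of \eqref{KYbrackets0} as $[\mu,\omega]_\KY=(-1)^{\dm-1}\eps\,[\mu,\eps^{-1}\omega]_\KY$ and $[\alpha,\beta]_\KY=(-1)^{\dm-1}\eps^2\,[\eps^{-1}\alpha,\eps^{-1}\beta]_\KY$. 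I then apply the comparison of the previous paragraph to the KY brackets on the right, now with the shifted degrees $\dm-c$ (resp.\ $\dm-a$, $\dm-b$), and use bilinearity of the SN bracket together with $(\eps^2)^2=1$ to trade $[\mu,\eps^{-1}\omega]_\SN$ and $[\eps^{-1}\alpha,\eps^{-1}\beta]_\SN$ for $\eps^2\,[\mu,\eps\omega]_\SN$ and $[\eps\alpha,\eps\beta]_\SN$. This reproduces the factorial prefactors of \eqref{KY-NS_HomKM} and \eqref{KY-NS_HomMM}, while the leftover sign $(-1)^{\dm-1}\eps^2$ is checked, via $\eps^2=(-1)^{[\frac\dm2]}s$ and a one-line case split on the parity of $\dm$, to equal $s\,(-1)^{[\frac{\dm-1}{2}]}$, which fixes the overall constants.

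I expect the delicate point — and the reason \eqref{KY-NS_HomMM} is stated only for $\dm$ odd — to be the following. The argument needs the dual KY brackets to be of the $\delta$-type, i.e.\ \eqref{genhomKYbracketsPot} in its branch valid when at least one argument is odd. For the KM case this holds automatically because $\mu$ is odd. For the MM case the duals $\eps^{-1}\alpha$, $\eps^{-1}\beta$ have degrees $\dm-a$, $\dm-b$, which are odd exactly when $\dm$ is odd (the CCKY forms being even); if $\dm$ is even both duals are even and \eqref{genhomKYbracketsPot} then produces a $d$-type expression which is \emph{not} proportional to $\delta(\kappa\wedge\lambda)$ and hence not to $[\kappa,\lambda]_\SN$, so the clean proportionality breaks down. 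Thus, beyond the routine sign and factorial bookkeeping above, the main obstacle is to recognise this obstruction and to verify that the degree counts $\dm-a$, $\dm-b$, $\dm-c$ place one in the admissible ``one-argument-odd'' branch of \eqref{genhomKYbracketsPot} precisely under the hypotheses stated.
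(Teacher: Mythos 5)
Your proposal is correct and follows essentially the same route as the paper: the paper likewise obtains all three relations by combining the Hodge-duality statement of Prop.~\ref{pr:HDKYbr} with the $\delta(\kappa\wedge\lambda)$ comparison underlying Prop.~\ref{pr:grKYSN}, and it attributes the ${\dm}$-odd restriction in \eqref{KY-NS_HomMM} to exactly the obstruction you identify (two even dual KY forms falling outside the ``one-argument-odd'' branch). The only cosmetic difference is that the paper phrases the basic comparison through the \emph{graded} bracket of Prop.~\ref{pr:grKYSN} and then identifies it with the ordinary bracket when one argument is odd, whereas you compare \eqref{genhomKYbracketsPot} with \eqref{eq:useful_KY} directly; your sign bookkeeping $(-1)^{\dm-1}\eps^2=s(-1)^{[\frac{\dm-1}{2}]}$ checks out.
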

\begin{proof}
The statement of the proposition is a combination of the duality formulated in Prop.~\ref{pr:HDKYbr} and relation between graded Killing--Yano brackets and SN brackets from Prop.~\ref{pr:grKYSN}. The only restriction arises from the fact that we have to identify the KY bracket of KY forms with their graded KY bracket. It can be done if at least one of the KY form is odd (as can be seen from relations between commutator and graded commutator of the corresponding operators). The problematic case arise in an even dimension, when the KY bracket of two even CCKY forms ${\alpha}$ and ${\beta}$ transfers to the KY bracket of two even KY forms ${\eps\alpha}$ and ${\eps\beta}$. Such a bracket is equivalent to graded anticommutator and cannot be thus rewritten using SN brackets.
\end{proof}

\subsection{Conformal symmetries}\label{ssc:confsym}
Until now we have discussed only Dirac symmetry operators which (anti)-commuted or graded (anti)-commuted with ${D}$. This discussion can be naturally extended
to general symmetry operators R-commuting with ${D}$, as described by Prop.~\ref{1.1}. We restrict here only to a useful example generalizing the notion of Lie derivative along a conformal Killing vector.
\begin{prop}\label{pr:CKV}
Let ${\kappa}$ be a CKY 1-form, ${\omega}$ an arbitrary CKY form, and $S_\kappa$ and $S_\omega$ be the corresponding symmetry operators of the Dirac operator given by Prop.~\ref{1.1}. Then the commutator of these operators remains of the first-order
and reads
\be\label{kf}
  [S_{\kappa}, S_\omega]=S_{\dot \omega}\,,\quad \dot \omega=\lied_{\kappa^\sharp}\omega+\frac{\pi+1}{n}\delta \kappa\, \omega\,.
\ee
As a corollary we have that for any $\kappa$ CKY 1-form and ${\omega}$ CKY form
\be\label{cor55}
\dot \omega=[\kappa,\omega]_{\SN}-\frac{\pi-1}{n}\delta\kappa\, \omega=\lied_{\kappa^\sharp}\omega+\frac{\pi+1}{n}\delta \kappa\, \omega\,
\ee
is a new CKY form.
\end{prop}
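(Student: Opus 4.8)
The plan is to reduce the statement to two inputs already available in the excerpt: (i) the structure of the $R$-commuting symmetry operators $S_\kappa$, $S_\omega$ from Prop.~\ref{1.1}, and (ii) the known behavior of the twistor/CKY equation \eqref{jedna} under commutators with a conformal-Killing flow. First I would write out $S_\kappa$ explicitly. Since $\kappa$ is a CKY $1$-form, $X^a\hook\kappa=\kappa^a$ is a $0$-form (a scalar), so $S_\kappa=\kappa^a\nabla_a+\tfrac14 d\kappa-\tfrac{n-2}{2(n-1)}\delta\kappa$. The leading part $\kappa^a\nabla_a+\tfrac14 d\kappa$ is precisely the (spinorial) operator that, for $\kappa$ a Killing $1$-form, equals the Lie derivative $\lied_{\kappa^\sharp}$ acting on the Clifford/Dirac bundle, as recalled in \eqref{liedDirClif}; for a genuine conformal Killing vector one gets instead a conformal weight term. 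So the computation of $[S_\kappa,S_\omega]$ naturally splits as "$\lied_{\kappa^\sharp}$ acting on $S_\omega$" plus correction terms coming from the $\delta\kappa$ pieces.

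Next I would carry out the commutator degree by degree in derivatives. The second-order term vanishes for the same reason as in the proof of Prop.~\ref{pr:algfoop}: the symbol of $S_\kappa$ is the scalar $\kappa^a$, and $[\kappa^{(a},\sigma^{b)}(S_\omega)]=0$ automatically since $\kappa^a$ is a $0$-form and hence Clifford-commutes with everything. Thus $[S_\kappa,S_\omega]$ is automatically first-order — this is the part of the claim that needs essentially no work. For the first-order part I would use the geometric interpretation: modulo the conformal-weight correction, $S_\kappa$ acts on any tensorial/spinorial object as $\lied_{\kappa^\sharp}+(\text{weight})\,\delta\kappa/n$, and the Lie derivative is a derivation that commutes with $\nabla$ up to curvature (which lands in the zeroth-order term) and with all the algebraic operations $X^a\hook$, $e^a\wedge$, $d$, $\delta$. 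Hence $[S_\kappa,S_\omega]$ reproduces $S$ evaluated on $\lied_{\kappa^\sharp}\omega$ plus the terms generated by the $\tfrac{\pi+1}{n}\delta\kappa$ weight factor and by $\delta\kappa$ not being annihilated by $\nabla$. Collecting these gives $\dot\omega=\lied_{\kappa^\sharp}\omega+\tfrac{\pi+1}{n}\delta\kappa\,\omega$, matching \eqref{kf}; one checks the coefficient $\tfrac{\pi+1}{n}$ is exactly the conformal weight dictated by how $S_\omega$ in Prop.~\ref{1.1} scales under a conformal rescaling generated by $\kappa$. Finally, $[S_\kappa,S_\omega]$ is by construction an $R$-commuting symmetry operator; by the uniqueness part of Prop.~\ref{1.1} it must be of the form $S_{\dot\omega}+\alpha D$, and tracking the zeroth- and first-order terms pins down $\alpha$ (the $D$-term is absorbed, as in the proof of Prop.~\ref{2.2}) and forces $\dot\omega$ to be a CKY form; this is the content of the corollary \eqref{cor55}, where the two displayed expressions for $\dot\omega$ are identified using $[\kappa,\omega]_\SN=\lied_{\kappa^\sharp}\omega+\tfrac{2\pi}{n}\delta\kappa\,\omega$ from \eqref{SN_ckv}.

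The main obstacle is bookkeeping the conformal-weight terms: unlike the Killing case, $\lied_{\kappa^\sharp}$ does not preserve the metric, so when it passes through $X^a\hook$, $e^a\wedge$, $d\omega$, $\delta\omega$ in the definition \eqref{Sdef}/\eqref{SOProp1} of $S_\omega$ it generates extra $\delta\kappa$-proportional contributions, and one must verify that all of these assemble into the single clean correction $\tfrac{\pi+1}{n}\delta\kappa\,\omega$ rather than something degree-dependent and messy. Concretely the delicate check is that the various fractions $\tfrac{\pi-1}{2\pi}$, $\tfrac{n-\pi-1}{2(n-\pi)}$ multiplying $d\omega$, $\delta\omega$ conspire with the conformal variation of $d$ and $\delta$ (which shift $\pi\to\pi\pm1$) to leave behind exactly the weight $\tfrac{\pi+1}{n}$; I expect this to be the only place where a genuine (if short) computation is unavoidable. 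Everything else — the vanishing of the second-order term, the absorption of the $\alpha D$ piece, and the translation into the $\lied$ and $\SN$ forms of $\dot\omega$ — follows formally from Prop.~\ref{1.1}, \eqref{liedDirClif}, and \eqref{SN_ckv}.
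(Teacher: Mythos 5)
Your skeleton matches the paper's: the second-order symbol vanishes for exactly the reason you give (${X^a\hook\kappa}$ is a $0$-form and Clifford-commutes with everything), the commutator is then a first-order symmetry operator and hence of the form ${S_{\dot\omega}+\alpha D}$ by the classification of Prop.~\ref{1.1}, and the two expressions for ${\dot\omega}$ are reconciled via \eqref{SN_ckv}. Where you diverge is in the one step that carries the actual content: you propose to obtain ${\dot\omega}$ by interpreting ${S_\kappa}$ as a conformal Lie derivative and tracking how the weight terms pass through ${d}$, ${\delta}$ and the prefactors in \eqref{SOProp1}, and you explicitly defer that bookkeeping. The paper's route is more direct and avoids it entirely: it writes the first-order coefficient of the commutator as ${\kappa^a(\nabla_a\omega^b)-\omega^a(\nabla_a\kappa^b)+\frac14[d\kappa,\omega^b]}$, kills ${\alpha}$ by taking ${X_b\hook}$ of the matching condition, applies ${\frac1\pi e^b\wedge}$ to invert, and then uses the CKY equation for ${\kappa}$ once to land on ${\kappa^a\nabla_a\omega-\nabla_a\kappa\wedge\omega^a-\frac{\pi-1}{n}\delta\kappa\,\omega}$, which \eqref{NS2} and \eqref{SN_ckv} convert into both forms in \eqref{cor55}. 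So the "delicate conspiracy of fractions" you worry about never has to be checked; the weight ${\frac{\pi+1}{n}}$ falls out of \eqref{SN_ckv} rather than from a conformal-rescaling argument. Your plan would presumably work, but as written it asserts the coefficient rather than deriving it, and that is the whole theorem.

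Two small cautions. First, with the paper's convention that ${\pi}$ acts on the form immediately to its right, the zeroth-order piece of ${S_\kappa}$ is ${-\frac{n-1}{2n}\delta\kappa}$ (since ${\delta\kappa}$ is a $0$-form), not ${-\frac{n-2}{2(n-1)}\delta\kappa}$; this happens to be harmless here because a scalar multiplication only feeds the zeroth-order part of the commutator, which Prop.~\ref{1.1} fixes automatically once the first-order part is matched. Second, the ${\alpha D}$ term is not "absorbed" as in Prop.~\ref{2.2}; here one shows ${\alpha=0}$ outright by contracting the first-order matching condition with ${X_b\hook}$.
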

\begin{proof}
We calculate (denoting by $\kappa^a=X^a\hook \kappa$ and $\omega^a=X^a\hook \omega$)
\be
[S_\kappa, S_\omega]=\Bigl\{\kappa^a(\nabla_a \omega^b)-\omega^a(\nabla_a \kappa^b)+\frac{1}{4}[d\kappa,\omega^b]\Bigr\}\nabla_b+
\mbox{zeroth order terms}.
\ee
Since this commutator is a symmetry operator of the Dirac operator and it is of the first-order, it must be of the form described by Prop.~\ref{1.1}. So we must have $[S_\kappa, S_\omega]=S_{\dot \omega}+\alpha D$, where $S_{\dot \omega}$ is given by \eqref{SOProp1} and $\alpha$ is some form. This leads to an equation
\be
\kappa^a(\nabla_a \omega^b)-\omega^a(\nabla_a \kappa^b)+\frac{1}{4}[d\kappa,\omega^b]=X^b\hook (\dot \omega-\eta \alpha)+e^b\wedge(\eta\alpha)\,.
\ee
By taking $X_b \hook$ of this equation we find that $\alpha=0$. So we have
\ba\label{nic}
\dot \omega&=&\frac{1}{\pi}e^b\wedge \Bigl\{\kappa^a(\nabla_a \omega^b)-\omega^a(\nabla_a \kappa^b)+\frac{1}{4}[d\kappa,\omega^b]\Bigr\}=
\kappa^a\nabla_a\omega-\nabla_a\kappa\wedge\omega^a-\frac{\pi-1}{n}\delta \kappa\omega\nonumber\\
&=&[\kappa,\omega]_{\SN}-\frac{\pi-1}{n}\delta\kappa \omega=
\lied_{\kappa^\sharp}\omega+\frac{\pi+1}{n}\delta \kappa \omega\,.
\ea
In the third equality we have used the definition of SN brackets, \eqref{NS2}, and in the last we have applied Eq. \eqref{SN_ckv}.
\end{proof}
\noindent The corollary of this proposition, Eq. \eqref{cor55}, is already known in the literature, see, e.g., \cite{BennCharlton:1997}. It is also a generalization of relation \eqref{new2} to the case of a conformal Killing vector.


\section{Discussion and Conclusions}\label{sc:concl}
In this paper we have studied first-order symmetry operators of the Dirac operator, in all dimensions and signatures, both from the point of view of their general properties and in the specific context of Kerr-NUT-AdS black hole spacetimes admitting the PCKY tensor.
Namely, we have established
the general form of first-order operators that commute, anti-commute, graded commute and graded anti-commute with the Dirac operator. Our main attention focused on
the `commuting operators'. Such operators are directly related to separability of the Dirac equation. We have demonstrated that commuting operators assume a unique form given by Prop.~\ref{2.2};
they split into Clifford even and Clifford odd parts and correspond to odd KY and even CCKY inhomogeneous forms, respectively.
We have further studied the commutators of these operators and given algebraic conditions under which they remain of the first-order. In that case the explicit expression for the commutator is given by Eq. \eqref{KYbrackets0}.
In particular, when this expression vanishes the operators, besides commuting with the Dirac operator, commute also between themselves.

As a main application of the general theory we have demonstrated that in the most general known (spherical) Kerr-NUT-(A)dS black hole spacetimes in all dimensions
there exists a complete set of first-order mutually commuting operators, one of which is the Dirac operator.
These operators correspond to the tower of symmetries generated from the PCKY tensor. In even dimensions they are given by the set \eqref{symop} whereas in  odd dimensions one can choose any complete subset of operators \eqref{symop} and \eqref{symop2}.
Such operators underlie separability of the massive Dirac equation in these spacetimes demonstrated by Oota and Yasui in all dimensions \cite{OotaYasui:2008}. This will be further discussed in a forthcoming paper \cite{CKK:separability}.

As a by-product of our construction we are able to partly address the issue of whether or not CKY tensors `behave as proper symmetries'.
(Conformal) KY tensors are known to naturally generalize (conformal) Killing symmetries to higher-rank antisymmetric objects; they obey a natural generalization of the conformal Killing vector equation  \cite{Kashiwada:1968}, similar to Killing vectors they give rise to various conserved quantities---Y-ADM charges for example \cite{KastorTraschen:2004}. For this reason such forms are expected to be associated with symmetries in some `appropriately generalized sense'. In particular one might expect that they should form a closed Lie algebra with respect to some appropriately chosen {\em 'Killing--Yano bracket'}.
This is true for Killing vectors which close on themselves with respect to the Lie bracket as well as for (symmetric) Killing tensors which are known to form such an algebra with respect to the symmetric Schouten--Nijenhuis bracket. It would seem that an obvious candidate for the Killing--Yano bracket is the antisymmetric Schouten--Nijenhuis bracket. Unfortunately, it was shown in \cite{KastorEtal:2007} that, except in maximally symmetric spacetimes, this is not necessarily the case.

Our construction enables us to discuss the Killing--Yano bracket in a specific setting.
 Namely, we were able to introduce a bilinear operation \eq{KYbrackets0} acting on the space of odd KY and even CCKY forms, provided that the conditions \eq{fordcond} are satisfied.
By considering more general first-order symmetry operators of the Dirac operator, this definition can be further extended to `arbitrary' CKY forms.
Unfortunately, the derived algebraic conditions are rather stringent and at present we do not know any examples in which these conditions are satisfied and at the same time the Killing--Yano brackets are non-vanishing. This raises the following two interesting possibilities:
1) Algebraic conditions allow for non-vanishing Killing--Yano brackets in suitable spacetimes. If this is the case, the introduced Killing--Yano bracket is a non-trivial operation and can be used to study the symmetries of the Dirac operator: to this effect we have presented complete formulas for the calculation of the brackets.
2) Algebraic conditions are so strong that, when satisfied, they automatically imply zero Killing--Yano brackets and therefore a set of (graded) commuting operators. In this case the algebraic conditions turn into a useful tool to find constants of motion.
This possibility is further supported by a number of examples in flat and pp-wave spacetimes as well as by a highly non-trivial
example of symmetries of Kerr-NUT-AdS spacetimes in all dimensions.
Let us stress that, however strong, the presented algebraic conditions are {\em necessary conditions} which have to be satisfied in order for the (graded) commutators of the studied first-order symmetry operators of the Dirac operator to vanish.

Let us make two more remarks on the importance of the antisymmetric  Schouten--Nijenhuis brackets.
1) The antisymmetric Schouten--Nijenhuis brackets play an important (though not entire) role for the commutation of first-order symmetry operators
commuting with the Dirac operator, see Prop~\ref{Prop3.4}. Hence such brackets are important for the theory of separability of the Dirac equation.
This is to be compared with the separability theory for the scalar field in which it is the symmetric Schouten-Nijenhuis brackets which play the key role.
2) Vanishing of the Schouten--Nijenhuis bracket of two Killing--Yano tensors is a necessary requirement for the graded commutation of the corresponding commuting GSOs
\eq{KKK}, see Prop.~\ref{pr:grKYSN}. This has an important consequence for the theory of spinning particles \cite{GibbonsEtal:1993}.
Namely, at the spinning particle level the commuting GSOs correspond to superinvariants, i.e., to quantities (Dirac-Poisson)-commuting with the generic supercharge.
Such quantities are determined by Killing--Yano tensors, in a way similar to Eq. \eq{KKK}, see \cite{GibbonsEtal:1993}. Hence,  vanishing of the Schouten--Nijenhuis bracket
of two Killing--Yano tensors is a necessary requirement for two  non-generic supercharges to (Dirac-Poisson)-commute.
Since the (Dirac-Poisson)-commutation at the spinning particle level corresponds to {\em graded commutation}, not commutation, the
existence of the complete set of commuting operators does not necessarily imply the existence of a corresponding set of the (Dirac-Poisson)-commuting
spinning particle observables. This explains the `discrepancy' between our operator results presented in Sec.~\ref{apx:setcomop} and the recent spinning particle analysis of Ahmedov and Aliev \cite{AhmedovAliev:2009}.

Let us finally mention that in our study we have concentrated on the `standard' Dirac operator and its symmetries.
In spacetimes relevant for string theory and various supergravities, however, the Dirac operator gets modified by the presence of fluxes.
The first-order symmetry operators
of the flux-modified Dirac operator have been studied
recently in \cite{AcikEtal:2008c, HouriEtal:2010a, KubiznakEtal:2010}; they correspond to symmetries which generalize CKY tensors
in the presence of fluxes
\cite{Wu:2009a, KubiznakEtal:2009b, KubiznakEtal:2010}.
Of special importance seems the torsion-modified Dirac operator. Separability of the corresponding Dirac equation has been demonstrated
in the most general spherical black hole spacetime of minimal gauged supergravity \cite{Wu:2009a}, or in the background of the Kerr--Sen geometry and its higher-dimensional generalizations \cite{HouriEtal:2010b}. We expect that the results of the present paper can be straightforwardly generalized
to these more general setups.

\section*{Acknowledgments}
We would like to thank G.~W.~Gibbons and C.~M.~Warnick for useful discussions.
M.C. would like to thank DAMTP at the University of Cambridge for the hospitality during part of the work, and is grateful to the Federal University of Ouro Preto for financial support for the travel.
P.K. was supported by Grant No. GA\v{C}R-202/08/0187,
Project No.~MSM0021620860 and appreciates the hospitality of the Theoretical
Physics Institute of the University of Alberta where a part of the work has been done.
D.K. is grateful to Herchel Smith Postdoctoral Fellowhip at the University of Cambridge for financial support.


\section*{Appendices}
\appendix


\section{Conventions and useful identities}\label{apx:notation}

\subsection{Notes on the formalism}

In the paper we follow the notation and formalism of \cite{BennTucker:book, HouriEtal:2010a}, to which we refer the reader.\footnote{Note, however, that we use a convention different from \cite{BennTucker:book} for the wedge product, the inner derivative and related operations. Namely we use the standard convention
${(\alpha\wedge\beta)_{a\dots b\dots} = \frac{(p+q)!}{p!\,q!}\,\alpha_{[a\dots}\,\beta_{b\dots]}}$.
Nevertheless, both conventions are isomorphic and, therefore, the most of formulas of \cite{BennTucker:book} is valid in our convention without any changes.}
Here, we are going to make some comments on geometrical meaning of this formalism.

Spacetime is a (pseudo-)Riemannian spin manifold $M$ of dimension ${\dm}$ with metric ${g_{ab}}$. We use Latin indices to denote spacetime tensors. We assume that on this manifold we can build the Dirac bundle of spinors and Clifford bundle with the irreducible representation on the Dirac bundle. Each fiber of the Clifford bundle has structure of the Clifford algebra generated by the \emph{gamma matrices ${\gamma^a}$}, which also connect the Clifford bundle with the tangent space. The gamma matrices satisfy the standard relation
\begin{equation}\label{ggmetric}
    \gamma^a\,\gamma^b + \gamma^b\, \gamma^a = 2g^{ab}\;,
\end{equation}
which allows to eliminate any symmetric product of ${\gamma}$'s and reduce any Clifford object ${\slash\mspace{-10mu}\omega}$ to a sum of antisymmetric products ${\gamma^{a_1\dots a_p}=\gamma^{[a_1}}\dots\gamma^{a_p]}$ with proper coefficients given by tangent antisymmetric forms ${\omega^{(p)}_{a_1\dots a_p}}$,
\begin{equation}\label{clobrepr}
    \slash\mspace{-10mu}\omega
     = \sum_p \frac1{p!}\, \omega^{(p)}_{a_1\dots a_p} \gamma^{a_1\dots a_p}\;.
\end{equation}
This representation is unique and gives thus an isomorphism ${\gamma_*}$ of the Clifford bundle with the exterior algebra ${\Omega(M)=\bigoplus_{p=0}^{\dm} \Omega^p(M)}$ of inhomogeneous antisymmetric forms,
\begin{equation}\label{cleaiso}
    \slash\mspace{-10mu}\omega = \gamma_* \omega \;,\qquad
    \text{where}\quad \omega = \sum_p  \omega^{(p)}\;.
\end{equation}

Since we do not need details of the action of the Clifford bundle on the Dirac spinors and we work mainly with objects from the Clifford bundle, we can use the isomorphism \eqref{cleaiso} and replace the Clifford bundle by the exterior algebra (cf., e.g., \cite{BennTucker:book}). However, when we use the antisymmetric forms in sense of objects from the Clifford bundle, we denote them without spacetime indices.\footnote{Since we work with inhomogeneous forms, the indices would not be very helpful. However, for a homogeneous form ${\omega}$, by ${\omega_{a_1\dots a_p}}$ we mean its tensor components.} Analogously, we write vectors without indices if we understand them as elements of dual to ${\Omega^1(M)}$. The metric allows to raise and lower indices: if $\omega$ is a 1--form and $v$ a vector, we denote the corresponding vector and 1--form as ${\omega^\sharp}$ and ${v^\flat}$, respectively. The notion of these operations can be naturally extended to higher rank tensors.

As we said, the gamma matrices map a covector ${\alpha_a}$ to a Clifford object ${\slash\mspace{-10mu}\alpha=\alpha_a\gamma^a=\gamma_*\alpha}$. When we restrict to the exterior algebra picture, the isomorphism \eqref{cleaiso} gives a trivial map ${\alpha_a \to \alpha = \alpha_a e^a}$. This trivial identification of the tangent covectors with 1-forms is provided by the \emph{canonical 1-form} ${e^a\in TM\otimes\Omega^1(M)}$ which plays the role of the gamma matrices.\footnote{The canonical form ${e^a}$ is actually the identity tensor with one explicit vector index and a hidden form index. If we write the form index explicitly, we get ${e^a_b = \delta^a_b}$.} We introduce also a dual object ${X_a}$ which maps ${\Omega^1(M)\to T^*M}$ as ${\alpha\to \alpha_a=X_a\hook\alpha}$.

We have introduced the canonical form ${e^a}$ as a tensor in the spacetime index ${a}$. However, it is customary to chose an orthonormal vielbein of 1-forms ${\efr^a}$ and a dual vector frame~${\Xfr_a}$, and to express all spacetime indices (but not hidden form indices) with respect to these frames. In such a case, the components of the canonical form are just the vielbein forms, ${e^a = \efr^a}$, and similarly ${X_a = \Xfr_a}$. Nevertheless, one has to remember that in the case of the canonical form ${e^a}$ and of its dual ${X_a}$,  the index ${a}$ is a tensor component. It will be significant when taking the covariant derivative; see discussion below.

The `hook' operation (\emph{inner derivative}) is an action of a vector ${v}$ on any antisymmetric form ${\omega}$. When transformed back to tangent tensors, it is just a contraction of the vector with the form in the first index,
\begin{equation}\label{hook}
    (v\hook\omega)_{a_1\dots a_{p{-}1}}=v^b\omega_{b a_1\dots a_{p{-}1}}\;.
\end{equation}
For a scalar ${\varphi}$, we set ${v\hook\varphi=0}$. We also use freely ${e_a=g_{ab}e^b}$, ${X^a=g^{ab}X_b}$, ${X_a\hook e^b = \delta_a^b}$, and ${e^\sharp_a = X_a}$.

The Clifford bundle is endowed with the Clifford product. Using the isomorphism \eqref{cleaiso} we can define it also in the exterior algebra ${\Omega(M)}$ and we denote it by juxtaposition of forms. Using the relation \eqref{ggmetric}, or its equivalent ${e^a\,e^b + e^b\, e^a = 2g^{ab}}$, we can express the Clifford product in terms of the exterior product ${\wedge}$ and the inner derivative ${\hook}$. In particular, if $\alpha$ is a $1$-form and $\omega$ a $p$-form, we get \cite{BennTucker:book, HouriEtal:2010a}
\begin{equation}\label{eq:notation_product}
\begin{aligned}
\alpha\omega &= \alpha \wedge \omega + \alpha^\sharp \hook \omega \;,\\
\omega\alpha &= (-1)^p \bigl( \alpha \wedge \omega - \alpha^\sharp \hook \omega\bigr)\;.
\end{aligned}
\end{equation}
Applying recursively eq.~\eqref{eq:notation_product}, it is possible to construct the product of two generic forms \eqref{usefulProduct1} and \eqref{usefulProduct2} and further useful formulas listed in Appendix~\ref{apx:usefulid}. In these identities, it is useful to introduce the contracted wedge product defined inductively by
\ba
\alpha\cwedge{0}\beta &=& \alpha \wedge \beta  \, , \nn \\
\alpha\cwedge{k}\beta &=& (X_a \hook \alpha ) \cwedge{k-1} (X^a \hook \beta) \qquad (k\ge 1) \, , \nn \\
\alpha\cwedge{k}\beta &=& 0 \qquad\qquad\qquad\qquad\qquad\, (k<0) \, .\label{cwedgedef}
\ea
The contracted wedge product vanishes, ${\displaystyle\alpha\cwedge{k}\beta=0}$, if $k<p+q-\dm$ or ${k>\min(p,q)}$, where ${\alpha\in\Omega^p(M)}$ and ${\beta\in\Omega^q(M)}$.

An inhomogeneous form $\omega$ can be written as a sum of homogeneous $p$-forms
\begin{equation}\label{inhomforms}
\omega = \sum_{p=0} \omega^{(p)} \; ,
\end{equation}
where throughout the paper we use the convention that the upper limit of sums of forms, when not indicated, is the highest value of the summation index that yields a non-zero result in the argument of the sum. The degree operator ${\pi}$ and parity operator ${\eta}$ acts on an inhomogeneous form $\omega$ as
\begin{equation}\label{pietadef}
\pi \omega = \sum_{p=0} p \, \omega^{(p)} \;,\quad
\eta \omega = \sum_{p=0} (-1)^p \omega^{(p)} \, .
\end{equation}

For ${A}$, ${B}$ Clifford valued operators we will indicate with $[A,B]\equiv[A,B]_-$ and ${[A,B]_+}$ the commutator and, respectively, the anticommutator
\be
[A,B] = AB - BA \; , \quad [A,B]_+ = AB + BA \; .
\ee
For operators ${A}$, ${B}$ with defined parity given by ${\pi_{A,B}=0,1}$ as ${\eta A = (-1)^{\pi_A} A}$,  ${\eta B} = {(-1)^{\pi_B} B}$, the graded (anti)commutator is
\begin{equation}\label{gr(anti)com}
 \bgc{A}{B} = AB- (-1)^{\pi_A\pi_B} BA\;,\quad
\bgc{A}{B}_+= AB+(-1)^{\pi_A\pi_B} BA\;.
\end{equation}

The metric covariant derivative can be lifted into the Clifford bundle demanding ${\nabla_a\gamma^b=0}$. Translating this relation into the exterior algebra, we obtain natural relations
\begin{equation}\label{nablaeX=0}
    \nabla_a e^b =0\;,\quad \nabla_a X_b =0\;,
\end{equation}
which just say that we use the same derivative in the tangent space and in the exterior algebra.
In this formalism the Dirac operator is written as ${D\equiv e^a\nabla_{\!a}=\nabla_{\!a} e^a}$, the exterior derivative acting on forms as ${d = e^a \wedge \nabla_{\!a} = \nabla_{\!a}\, e^a \wedge}$, and the co-differential  ${\delta = - X^a \hook \nabla_{\!a} = -\nabla_{\!a}\, X^a \hook}$. All these expressions are to be understood as operators acting to the right. Using \eqref{eq:notation_product}, we get the well-known action of the Dirac operator on the exterior algebra
\begin{equation}\label{diracddelta}
    D\omega = \nabla_{\!a}(e^a\omega)
      = \nabla_{\!a}(e^a\wedge\omega) + \nabla_{\!a}(X^a\hook\omega)
      = d\omega - \delta\omega\;.
\end{equation}

The Latin indices used in the paper can be understood as abstract indices indicating tensor character of the spacetime tensors. Or, as we mentioned above, they can denote vielbein components with respect to an orthogonal frames ${\efr^a}$.
In that case, we shall stress that by ${\nabla_{\!a} v^b}$ we mean vielbein components of the covariant derivative
\begin{equation}\label{covdervcomp}
    \nabla_{\!a}v^b = X_a[v^b] + \omega{}_a{}^b{}_c v^c\;,
\end{equation}
where ${X_a[v^b]}$ are derivatives of the components ${v^b}$ in directions ${X_a}$ and ${\omega{}_a{}^b{}_c}$ are connection (Ricci) coefficients. Thanks to the orthonormality of the vielbein, one has ${\omega_{abc} = -\omega_{acb}}$ and we can define connection 2-forms ${\omega_a= \frac12\omega_{abc} e^b\wedge e^c}$. For a generic form ${\alpha}$, the covariant derivative can then be rewritten as
\begin{equation}\label{covdervielbein}
    \nabla_{\!a}\alpha = X_a[\alpha] - \omega{}_a \cwedge{1} \alpha
\end{equation}
Here, ${(X_a[\alpha])_{b_1\dots b_p} = X_a[\alpha_{b_1\dots b_p}]}$.
With this convention we recover
\begin{equation}\label{nablae=0invielbein}
    \nabla_{\!a} e^b = X_a[e^b] + \omega{}_a{}^b{}_c\, e^c - \omega_a\cwedge{1} e^b = 0\;,
\end{equation}
since ${X_a[e^b]=0}$ and ${\displaystyle\omega_a \cwedge{1}e^b = \omega{}_a{}^b{}_c\, e^c}$.
This should not be confused with the standard relation
\begin{equation}\label{nablae=omegae}
    \nabla_{\!a} \efr^b = -\omega{}_a{}^b{}_c\, \efr^c \;,
\end{equation}
in which one ignores the convention \eqref{covdervcomp}, since here the index ${b}$ is not a tensor component but the index labeling the vielbein 1-forms.

\subsection{Clifford and contracted wedge products}\label{apx:usefulid}
Here we gather useful identities related to the Clifford and contracted wedge products used in the main text.
Some of the relations are taken from \cite{HouriEtal:2010a}, the others are new.
Let $\alpha\in\Omega^p(M)$,  $\beta\in\Omega^q(M)$ and $p\leq q$. Then the Clifford product expands as
\be
\alpha \beta = \sum_{m=0}^p \frac{ (-1)^{m(p-m) + [m/2]}}{m!} \alpha \cwedge{m}  \beta  \, , \label{usefulProduct1}
\ee
and
\be
\beta \alpha = (-1)^{pq} \sum_{m=0}^p \frac{ (-1)^{m(p-m +1) + [m/2]}}{m!} \alpha \cwedge{m}  \beta  \, .
\label{usefulProduct2}
\ee
As an application, for general inhomogeneous odd forms $\alpha$ and $\beta$ we get\footnote{%
In the sums for which the upper limit is omitted we assume summation over all integer values starting with the lower limit. Number of terms in the sum is finite since the summed expression vanishes for a sufficiently large summation index.}
\be
\bigl[\alpha,\beta\bigr] = 2 \sum_{k=0} \frac{(-1)^k}{(2k)!}\,\alpha\cwedge{2k}\beta\;,\qquad
\bigl[\alpha,\beta\bigr]_+ = 2 \sum_{k=0} \frac{(-1)^k}{(2k{+}1)!}\,\alpha\cwedge{2k+1}\beta\;,\qquad
\label{useful1_odd}
\ee
and for even form $\alpha$ and arbitrary $\beta$
\be
\bigl[\alpha,\beta\bigr] = 2 \sum_{k=0} \frac{(-1)^{k+1}}{(2k{+}1)!}\,\alpha\cwedge{2k+1}\beta\;,\qquad
\bigl[\alpha,\beta\bigr]_+ = 2 \sum_{k=0} \frac{(-1)^k}{(2k)!}\,\alpha\cwedge{2k}\beta\;.
\label{useful1_even}
\ee
These can be translated into graded (anti)-commutator relations which hold for arbitrary forms ${\alpha}$, ${\beta}$
\begin{equation}\label{useful_graded}
  \bgc{\alpha}{\beta} = 2 \sum_{k=0} \frac{(-1)^{k+1}}{(2k{+}1)!}\,(\eta\alpha)\cwedge{2k+1}\beta\;,\qquad
  \bgc{\alpha}{\beta}_+ = 2 \sum_{k=0} \frac{(-1)^k}{(2k)!}\,\alpha\cwedge{2k}\beta\;.
\end{equation}

For $\omega\in\Omega^1(M)$ and $\alpha$, $\beta$ generic inhomogeneous forms it holds
\begin{equation}\label{useful2+3}
\begin{aligned}
( \omega \wedge \alpha ) \cwedge{m} \beta &=
  (-1)^m \omega \wedge \bigl( \alpha \cwedge{m} \beta \bigr)
  + m\, \alpha \cwedge{m{-}1} \bigl( \omega^\sharp \hook \beta \bigr) \, , \\
\alpha \cwedge{m} \bigl( \omega \wedge \beta \bigr)  &=
  \bigl((-1)^\pi \omega\bigr) \wedge \bigl( \alpha \cwedge{m} \beta \bigr)
  + m\, \bigl( \omega^\sharp \hook \alpha \bigr)  \cwedge{m{-}1} \beta  \, .
\end{aligned}
\end{equation}
The hook operator acts on the contracted wedge product in the following way:
\begin{equation}\label{useful4}
X \hook \bigl( \alpha \cwedge{m} \beta \bigr) =
  (-1)^m \bigl( X \hook \alpha \bigr) \cwedge{m} \beta
  + \bigl((-1)^\pi\alpha\bigr) \cwedge{m} \bigl( X \hook \beta \bigr)  \, .
\end{equation}

It is useful to generalize the identity $e^a \wedge \left( X_a \hook \alpha \right) = \pi \alpha$ to
\begin{equation}\label{mywedge_hookWedge}
\begin{aligned}
e^a \wedge \Bigl( \bigl( X_a \hook \alpha \bigr) \cwedge{m} \beta \Bigr) &=
  (-1)^m\, \bigl((\pi-m)\alpha\bigr) \cwedge{m} \beta \,, \\
e^a \wedge \Bigl( \alpha \cwedge{m} \bigl( X_a \hook \beta \bigr) \Bigr) &=
  \bigl((-1)^\pi \alpha\bigr) \cwedge{m} \bigl((\pi-m)\beta\bigr) \, .
\end{aligned}
\end{equation}
Using this we can prove
\be
\left( e^b \wedge \alpha \right) \cwedge{m} \left( e_b \wedge \beta \right) = m \left( \dm - p - q + m -1 \right) \alpha \cwedge{m-1} \beta \, . \label{mywedgeContraction}
\ee
The following identity  will be used in App.~\ref{apx:PCKYsym}. Let $\rho$ and $\sigma$ be two arbitrary even-forms, $\omega^a$ arbitrary 1-forms, and $o^a=(\omega^a)^\sharp$ the corresponding vectors. Then
\ba
[\omega^{(a}\wedge \rho] \cwedge{2k} [\omega^{b)}\wedge \sigma]  &=& 2k ( o^a \hook \omega^b) \bigl(\rho \cwedge{2k-1} \sigma\bigr) - 2k \omega^{(a} \wedge \Bigl[o^{b)} \hook \bigl( \rho \cwedge{2k-1} \sigma \bigr)\Bigr]  \nn \\
&& - 2k (2k-1)\bigl( o^{(a} \hook \rho \bigr) \cwedge{2k-2} \bigl( o^{b)} \hook \sigma \bigr) \, . \label{eq:even_contraction_ab}
\ea
This follows from repeatedly applying identities \eqref{useful2+3} and \eqref{useful4}.

The following two useful identities for derivatives of a contracted wedge product were proved in
\cite{HouriEtal:2010a}:
\begin{eqnarray}
\delta(\alpha \cwedge{k}\beta) &=& (-1)^k \delta \alpha \cwedge{k}
\beta - (-1)^p \nabla_a \alpha \cwedge{k} \,(X^a\hook \beta) \nonumber\\
&&+(-1)^p  \alpha \cwedge{k}
\delta \beta - (-1)^k (X^a\hook \alpha)\cwedge{k} \nabla_a \beta\,,\label{A23}\\
d(\alpha \cwedge{k}\beta) &=& (-1)^k d \alpha \cwedge{k}
\beta - (-1)^k k \nabla_a \alpha \cwedge{k-1} (X^a\hook \beta)\nonumber\\
&&+(-1)^p\alpha \cwedge{k}
d\beta - (-1)^p k (X^a\hook \alpha)\cwedge{k-1} \nabla_a \beta \,,\quad\label{A24}
\end{eqnarray}
In particular, when $\alpha$ and $\beta$ are CCKY forms we have
\ba
\delta \left( \alpha \cwedge{m} \beta \right) &=& \left(\dm - p - q + m +1 \right) \left[ \frac{(-1)^m}{\dm - p+1} \delta \alpha \cwedge{m} \beta + \frac{(-1)^p}{\dm - q+1} \alpha \cwedge{m} \delta \beta \right] \, , \label{mywedge_delta}\\
d \left( \alpha \cwedge{m} \beta \right) &=& m \left[ (-1)^p \frac{q-m+1}{\dm - p+1} \delta \alpha \cwedge{m-1} \beta + (-1)^m \frac{p-m+1}{\dm - q+1} \alpha \cwedge{m-1} \delta \beta \right]   \label{mywedge_d} \, .
\ea

\subsection{Hodge duality}\label{apx:HodgeD}

The Hodge dual of a homogeneous ${p}$-form ${\omega}$,
\begin{equation}\label{HDdef}
    *\omega = \frac1{p!}\,\omega\cwedge{p}\eps\;,
\end{equation}
is defined in terms of Levi-Civita tensor ${\eps}$, which is an antisymmetric ${\dm}$-form satisfying
${\displaystyle\eps\cwedge{n}\eps = s\,\dm!}$.
Here, ${s}$ is a product of signs in the metric signature. It follows that
\begin{equation}\label{epsids}
    **\omega = s (-1)^{p(\dm-p)}\omega\;,\quad
    *1 = \eps\;,\quad *\eps = s\;,\quad
    \eps^2 =  (-1)^{[\frac\dm2]}\, s\;.
\end{equation}
Since the contracted wedge product ${\displaystyle\omega\cwedge{k}\eps}$ of the Levi-Civita tensor with a homogeneous ${p}$-form ${\omega}$ is nonzero only for ${k=p}$, using \eqref{usefulProduct1}, \eqref{usefulProduct2} one can write
\begin{equation}\label{HodgeDualClapx}
    *\omega = (-1)^{[\frac p2]}\omega\eps = (-1)^{(n-1)p+[\frac p2]}\eps\omega\;,
\end{equation}
which can be easily generalized for inhomogeneous forms by linearity. As a corollary,
\begin{equation}\label{epscom}
    \eps\omega = (-1)^{(n-1)p}\,\omega\eps\;.
\end{equation}

Applying \eqref{usefulProduct1} on the both sides of identity ${(\eps\alpha)\beta=\eps(\alpha\beta)}$ and comparing the terms with the same degree, one gets for homogeneous forms ${\alpha}$, ${\beta}$ of degrees ${p}$, ${q}$ useful relations
\begin{equation}\label{HDcwedge}
    *\bigl(\alpha\cwedge{k}\beta)
      = \frac{k!}{(q-k)!}(-1)^{k(n-q)}(*\alpha)\cwedge{q{-}k}\beta
      = \frac{k!}{(p-k)!}(-1)^{p(q-k)}\alpha\cwedge{p{-}k}(*\beta)\;.
\end{equation}
Applying both these relation together we obtain
\begin{equation}\label{cwedgeHDs}
    (*\alpha)\cwedge{k}(*\beta) =
    s(-1)^{k(p{+}q)}\frac{k!}{(p{+}q{-}\dm{+}k)!}\,\alpha\cwedge{p{+}q{-}\dm{+}k}\beta\;.
\end{equation}
As a particular cases we can write down the following identities, which hold also for inhomogeneous forms:
\begin{equation}\label{HDwedgehook}
    \eps(e^a\wedge\omega) = (-1)^{n-1} X^a\hook (\eps\omega)\;,\quad
    \eps(X^a\hook\omega) = (-1)^{n-1} e^a\wedge (\eps\omega)\;.
\end{equation}
Taking the covariant derivative of these identities one obtains the known duality of exterior derivative and co-derivative:
\begin{equation}\label{HDddelta}
    \eps(d\omega) = (-1)^{n} \delta (\eps\omega)\;,\quad
    \eps(\delta\omega) = (-1)^{n} d (\eps\omega)\;,
\end{equation}
or in terms of the Hodge dual ${*d\omega = - \delta (*\eta\omega)}$ and ${*\delta\omega = d (*\eta\omega)}$.

\section{Results for the CCKY 2--form tower}\label{apx:PCKYsym}

First, let us mention that Eq. \eqref{mywedge_delta} implies a useful relation
\be
\frac1{{\dm {-} 2j {+}1}}\,\delta h^{(j)}
  =  \frac{j}{\dm-1}\, \delta h \wedge h^{(j-1)}
  = -j\, \xi\wedge h^{(j-1)} \; , \label{eq:delta_hi}
\ee
with ${\xi}$ given by Eq.~\eqref{PCKYsec}.

Now we prove the important properties of the operators ${h^{(j)}}$:
\begin{lemma}\label{lemma_tower1}
Let $h$ be a CCKY 2--form, $h^{(i)}$, $i=1,\dots N-1$ the corresponding tower of CCKY forms \eqref{hj}. Then
\begin{equation}\label{hihj=0}
\bigl[ h^{(i)} , h^{(j)} \bigr] = 0 \;,
\end{equation}
and for arbitrary 1-forms $\omega^a$,
\begin{equation}\label{proofco}
\bigl[\omega^{(a}\wedge h^{(i)}\bigr] \cwedge{2k}\, \bigl[\omega^{b)}\wedge h^{(j)}\bigr]  = 0\;.
\end{equation}
\end{lemma}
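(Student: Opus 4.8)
The plan is to reduce the whole lemma to the first identity \eqref{hihj=0}, prove that by a pointwise normal-form argument, and then deduce \eqref{proofco} by a short induction built on the identity \eqref{eq:even_contraction_ab}. First I would observe that both statements are purely algebraic (no derivatives), so it suffices to verify them at a fixed point. Moreover, since ${h^{(i)}}$ and ${h^{(j)}}$ are even forms, the expansion \eqref{useful1_even} gives
\[
  [h^{(i)},h^{(j)}] = 2\sum_{k}\frac{(-1)^{k+1}}{(2k{+}1)!}\,h^{(i)}\cwedge{2k{+}1}h^{(j)}\;,
\]
where the ${k}$-th summand is homogeneous of degree ${2(i{+}j){-}2(2k{+}1)}$; these degrees are pairwise distinct, so \eqref{hihj=0} is equivalent to the vanishing of all the \emph{odd} contracted wedge products,
\begin{equation}\label{proofstar}
  h^{(i)}\cwedge{2k{+}1}h^{(j)}=0\;,\qquad k=0,1,\dots\;.
\end{equation}

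To prove \eqref{hihj=0} (hence \eqref{proofstar}) I would bring the skew tensor ${h_{ab}}$ into Darboux form at the point: there is an orthonormal coframe in which ${h=\sum_{\mu}x_\mu\,p_\mu}$ with ${p_\mu=\efr^{2\mu-1}\w\efr^{2\mu}}$ (some ${x_\mu}$ possibly zero or coinciding; for indefinite signatures the block decomposition of the ${g}$-skew operator associated with ${h}$ may also contain a null block, handled in the same spirit or avoided by complexifying). Because the ${p_\mu}$ are built from \emph{disjoint} pairs of orthonormal generators, any two of them Clifford-commute (interchanging ${p_\mu}$ and ${p_\nu}$ amounts to four transpositions of anticommuting generators), each ${p_\mu^2}$ is a scalar, and for ${\mu\neq\nu}$ one has ${p_\mu\w p_\nu=p_\mu p_\nu}$. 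Hence the ${p_\mu}$ generate a commutative subalgebra ${\mathcal A}$ of the Clifford algebra, and
\[
  h^{(i)}=h^{\w i}=i!\!\!\sum_{\mu_1<\dots<\mu_i}\!\! x_{\mu_1}\cdots x_{\mu_i}\;p_{\mu_1}\w\dots\w p_{\mu_i}
\]
is a linear combination of (Clifford ${=}$ wedge) products of the ${p_\mu}$; therefore ${h^{(i)},h^{(j)}\in\mathcal A}$ and ${[h^{(i)},h^{(j)}]=0}$, which is \eqref{hihj=0}.

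With \eqref{proofstar} available, \eqref{proofco} follows by induction on ${k}$ from the identity \eqref{eq:even_contraction_ab} applied with ${\rho=h^{(i)}}$, ${\sigma=h^{(j)}}$ and ${o^a=(\omega^a)^\sharp}$. Its first two terms carry the factor ${h^{(i)}\cwedge{2k{-}1}h^{(j)}}$, which vanishes by \eqref{proofstar}. For the third term one uses the Leibniz rule for ${\hook}$ together with the evenness of ${h}$ to get ${o^a\hook h^{(i)}=i\,\theta^a\w h^{(i-1)}}$ with ${\theta^a:=o^a\hook h}$ a ${1}$-form; this rewrites that term as ${-2k(2k{-}1)\,ij\,[\theta^{(a}\w h^{(i-1)}]\cwedge{2(k{-}1)}[\theta^{b)}\w h^{(j-1)}]}$, which is precisely the left-hand side of \eqref{proofco} with ${k}$, ${i}$, ${j}$ decreased by one and the arbitrary ${1}$-forms ${\omega^a}$ replaced by ${\theta^a}$ — and in the recursion the forms fed into \eqref{eq:even_contraction_ab} remain the even forms ${h^{(\,\cdot\,)}}$. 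The base case ${k=0}$ is immediate, since ${[\omega^{(a}\w h^{(i)}]\w[\omega^{b)}\w h^{(j)}]=\omega^{(a}\w\omega^{b)}\w h^{(i+j)}=0}$ by antisymmetry of the wedge of two ${1}$-forms; the recursion also stops as soon as ${i}$ or ${j}$ reaches ${0}$, when ${o^a\hook h^{(i)}=0}$.

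The main obstacle is \eqref{proofstar}/\eqref{hihj=0}: one cannot simply invoke commutativity of powers of a single Clifford element, because the wedge powers ${h^{(i)}=h^{\w i}}$ do \emph{not} coincide with the Clifford powers of ${h}$ (they differ by lower-degree contraction terms). What makes the argument go through is the structural point that all the wedge powers nevertheless lie in the \emph{same} commutative subalgebra ${\mathcal A}$ generated by the Darboux plaquettes; this is exactly where the special nature of a ${2}$-form is used. Everything else — the commutator expansion with its degree bookkeeping and the inductive descent based on \eqref{eq:even_contraction_ab} — is routine.
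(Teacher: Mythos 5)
Your proposal is correct in substance, and its second half (the induction on $k$ via \eqref{eq:even_contraction_ab}, using ${o^a\hook h^{(i)}=i\,(o^a\hook h)\wedge h^{(i-1)}}$ to feed the third term back into the induction hypothesis with the new $1$-forms ${\theta^a=o^a\hook h}$) is essentially the paper's own argument, including the reduction of \eqref{hihj=0} to the vanishing of all odd contractions ${h^{(i)}\cwedge{2k+1}h^{(j)}}$ by degree bookkeeping in \eqref{useful1_even}. Where you genuinely diverge is in proving that key identity: you put $h$ into pointwise Darboux form and observe that the disjoint orthonormal plaquettes generate a commutative Clifford subalgebra containing all the wedge powers ${h^{(i)}}$, whereas the paper argues combinatorially that every term of an odd contraction must contain a ``linear'' chain ${h_{[a|}{}^{c_1}h_{c_1}{}^{c_2}\cdots h_{c_{2l}|b]}}$ with an even number of $h$'s, which is symmetric in its free endpoints and hence killed by the overall antisymmetrization. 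The paper's argument is signature-blind and purely index-theoretic; yours is more conceptual (it explains \emph{why} the whole tower commutes: everything lives in one commutative subalgebra) but, as you note yourself, the orthonormal Darboux decomposition is not available for every real $2$-form in indefinite signature. Since the lemma is asserted for an arbitrary CCKY $2$-form in any signature, you do need to carry out the patch you only gesture at: either redo the plaquette computation with null (hyperbolic) blocks over ${\mathbb C}$ for diagonalizable skew operators and then invoke the fact that ${h^{(i)}\cwedge{2k+1}h^{(j)}}$ is a polynomial identity in the components of $h$, hence extends from a dense set to all $h$, or adopt the paper's combinatorial argument outright. With that step supplied, your route is a valid and arguably more illuminating alternative.
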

\begin{proof}
To prove the first relation we expand it by using \eqref{useful1_even}. Hence we must show that all odd contractions of
$h^{(i)}$ and $h^{(j)}$ vanish,
\be\label{A37}
h^{(i)} \cwedge{2k+1} h^{(j)}=0\,,\quad k=0,\,1,\,2,\,\dots\;.
\ee
Such contractions can be expanded into a sum of a number of terms, each of these terms being a product of `cyclic' sub-terms (i.e. contracted products of ${h}$'s of the kind ${h_{a_1}{}^{a_2}h_{a_2}{}^{a_3}\dots h_{a_{2l}}{}^{a_1}}$) and `linear' sub-terms (of the kind ${h_{a}{}^{a_2}h_{a_2}{}^{a_3}\dots h_{a_{l}b}}$). Since the total number of contractions is ${2k+1}$ in any term there must be at least one linear sub-term with an odd number of contractions. Such a term ${h_{a}{}^{a_2}h_{a_2}{}^{a_3}\dots h_{a_{2l}b}}$ is symmetric in the free indices. However, the contracted wedge \eqref{A37} has all free indices antisymmetrized, hence, the linear term with even number of ${h}$'s must vanish, ${h_{[a|}{}^{a_2}h_{a_2}{}^{a_3}\dots h_{a_{2l}|b]}=0}$.
All terms in the sum thus vanish and the first relation \eqref{hihj=0} is established.

The second statement follows by induction on $k$. The identity is trivial for $k=0$, with any $i$, $j$, and $\omega^a$. Suppose it is true for ${k-1}$. Then by applying \eqref{eq:even_contraction_ab} and using \eqref{A37} we have $[o^a\equiv (\omega^a)^\sharp$]
\begin{equation}
\begin{split}
&[\omega^{(a}\wedge h^{(i)}]  \cwedge{2k} [\omega^{b)}\wedge h^{(j)}]
  = - 2k (2k-1) \Bigl( o^{(a} \hook h^{(i)} \Bigr) \cwedge{2k-2}  \Bigl( o^{b)} \hook  h^{(j)}  \Bigr) \\
  &\qquad\qquad\qquad = -2k\,ij\, (2k-1)\, \Bigl[ \bigl(o^{(a} \hook h\bigr) \wedge h^{(i-1)}  \Bigr]
  \cwedge{2k-2}
  \Bigl[ \bigl(o^{b)} \hook h\bigr) \wedge h^{(j-1)}  \Bigr]\,. \qquad\ \
\end{split}
\end{equation}
Since we can write $o^a \hook h = \tilde{\omega}^a$ the last expression is zero by the induction hypothesis.
\end{proof}

\begin{lemma}\label{lemma_tower2}
Let $h$ be a CCKY 2--form, $h^{(i)}$, $i=1,\dots N-1$ the corresponding tower of CCKY forms \eqref{hj} and $f^{(i)}=*h^{(i)}$ the associated tower of KY forms \eqref{fj}. Then
\begin{equation}\label{algcondhhff}
[e^{(a}\wedge h^{(i)}, e^{b)}\wedge h^{(j)}] = 0  \,,\quad
[X_{(a}\hook f^{(i)}, X_{b)}\hook f^{(j)}]=0      \,.
\end{equation}
Moreover, in an odd dimension, it also holds
\begin{equation}
[X_{(a}\hook f^{(i)}, e_{b)}\wedge h^{(j)}]=0     \,.\label{algcondfh}
\end{equation}
\end{lemma}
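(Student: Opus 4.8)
The plan is to reduce all three identities in Lemma~\ref{lemma_tower2} to the already-established relation \eqref{proofco} of Lemma~\ref{lemma_tower1}, first by specialising the arbitrary $1$-forms $\omega^a$ there to the canonical form $e^a$, and then, for the statements involving $f^{(i)}$, by transferring the result through Hodge duality.

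First I would settle the purely $h$-relation. Since $e^a\wedge h^{(i)}$ is a $(2i{+}1)$-form, hence Clifford-odd, the commutator expansion \eqref{useful1_odd} gives
\begin{equation}
[e^{(a}\wedge h^{(i)}, e^{b)}\wedge h^{(j)}]
  = 2\sum_{k=0}\frac{(-1)^k}{(2k)!}\,\bigl(e^{(a}\wedge h^{(i)}\bigr)\cwedge{2k}\bigl(e^{b)}\wedge h^{(j)}\bigr)\;,
\end{equation}
and every term on the right vanishes by \eqref{proofco} with the choice $\omega^a=e^a$. This establishes the first equation of \eqref{algcondhhff}.

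For the remaining two identities I would use that $f^{(i)}=*h^{(i)}=(-1)^i\,\eps\,h^{(i)}$, which follows from \eqref{HodgeDualClapx} with $p=2i$, together with the duality $\eps(e^a\wedge\omega)=(-1)^{\dm-1}X^a\hook(\eps\omega)$ from \eqref{HDwedgehook}. These combine to $X^a\hook f^{(i)}=(-1)^{i+\dm-1}\,\eps\,(e^a\wedge h^{(i)})$. Writing $A^a=e^a\wedge h^{(i)}$ and $B^b=e^b\wedge h^{(j)}$ — both odd forms — and commuting the Levi-Civita factors past them with $\omega\eps=(-1)^{(\dm-1)p}\eps\omega$ from \eqref{epscom} (with $p=2i{+}1$, resp. $p=2j{+}1$, odd), and using that $\eps^2$ is a scalar by \eqref{epsids}, one gets
\begin{equation}
\bigl[X_{(a}\hook f^{(i)}, X_{b)}\hook f^{(j)}\bigr]
  = (-1)^{i+j}\,\bigl[\eps A_{(a}, \eps B_{b)}\bigr]
  = (-1)^{i+j}(-1)^{\dm-1}\,\eps^2\,\bigl[A_{(a},B_{b)}\bigr]\;,
\end{equation}
which vanishes by the first identity, giving the second equation of \eqref{algcondhhff}. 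For \eqref{algcondfh} the same manipulation applies, but now only one factor carries an $\eps$: moving it past the odd form $B^b=e^b\wedge h^{(j)}$ produces the sign $(-1)^{(\dm-1)(2j+1)}$, which equals $+1$ precisely when $\dm$ is odd. In that case $[\eps A_{(a},B_{b)}]=\eps\,[A_{(a},B_{b)}]=0$, whereas for $\dm$ even one would instead obtain the anticommutator $\eps\,[A_{(a},B_{b)}]_+$, which explains the restriction to odd dimensions.

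The genuinely substantive input is \eqref{proofco}, proved separately in Lemma~\ref{lemma_tower1} by induction on $k$ using \eqref{eq:even_contraction_ab} and the vanishing of odd contractions of the $h^{(i)}$; everything here is then bookkeeping of signs in the Hodge-duality identities. The only real care needed — and the place where a slip is easy — is to evaluate the parity exponents $(\dm-1)p$ and $[p/2]$ at the correct degree $p=2i{+}1$ of $e^a\wedge h^{(i)}$ (and $p=2i$ of $h^{(i)}$ itself), and to notice that it is exactly the odd parity of these forms that forces the two sign factors in the commutator to agree in the $f$--$f$ case but to differ (when $\dm$ is even) in the $f$--$h$ case.
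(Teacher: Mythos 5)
Your proof is correct and follows essentially the same route as the paper's: the first identity is obtained from Lemma~\ref{lemma_tower1} (relation \eqref{proofco} with ${\omega^a=e^a}$, expanded via \eqref{useful1_odd}), and the other two by Hodge duality using \eqref{HodgeDualClapx}, \eqref{HDwedgehook}, and \eqref{epscom}, with the residual sign ${(-1)^{(\dm-1)(2j+1)}}$ from commuting a single ${\eps}$ past an odd form correctly identified as the source of the odd-dimension restriction in \eqref{algcondfh}. Your version simply makes the sign bookkeeping more explicit than the paper does.
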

\begin{proof}
The first relation in \eqref{algcondhhff} follows from Lemma~\ref{lemma_tower1}. The other can be obtained applying Hodge duality to the first, namely using Eqs. \eqref{HodgeDualClapx}, \eqref{HDwedgehook}, and \eqref{epscom}. The last relation \eqref{algcondfh} follows analogously. However, since the Hodge dual is applied only in the first argument of the commutator, one receives an additional relative sign ${(-1)^{n{-}1}}$ which changes the commutator into anticommutator in even dimensions. Therefore, we need the additional restriction on the dimension.
\end{proof}

Lemma \eqref{lemma_tower2} provides the conditions which guarantee the existence of the Killing--Yano brackets among ${h^{(j)}}$ and ${f^{(j)}}$, respectively. The main result of Sec.~\ref{apx:setcomop} is based on the the fact that these Killing--Yano brackets vanish:
\begin{lemma}\label{lemma_tower3}
Let $h$ be a CCKY 2--form and $h^{(i)}$, $i=1,\dots N-1$ the corresponding tower of CCKY forms \eqref{hj}.
Then
\begin{equation}\label{KYbrhhApp}
[h^{(i)}, h^{(j)}]_\KY = 0  \;.
\end{equation}
\end{lemma}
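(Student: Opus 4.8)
The plan is to reduce the statement to the closed-form expression for Killing--Yano brackets of homogeneous forms and then to kill every resulting term with the vanishing of odd contractions of the tower elements. Since $h^{(i)}$ and $h^{(j)}$ are homogeneous even CCKY forms of degrees $2i$ and $2j$, and since Lemma~\ref{lemma_tower2} establishes the algebraic condition \eqref{algcondhhff} (hence \emph{a fortiori} the necessary condition \eqref{necesary_cond}), the bracket $[h^{(i)},h^{(j)}]_\KY$ is well defined, is again an even CCKY form, and is given by the right-hand side of \eqref{KYbracketsHomMM} with $a=2i$, $b=2j$. It is then enough to inspect the three branches of \eqref{KYbracketsHomMM}; the only input beyond routine contracted-wedge algebra will be the identity \eqref{A37}, $h^{(k)}\cwedge{2l+1}h^{(m)}=0$, proved in Lemma~\ref{lemma_tower1}.

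For $\dm$ even with $2i+2j\le\dm$, formula \eqref{KYbracketsHomMM} yields $[h^{(i)},h^{(j)}]_\KY=0$ on the nose. For $\dm$ odd, \eqref{KYbracketsHomMM} reads $[h^{(i)},h^{(j)}]_\KY\propto\delta\bigl(h^{(i)}\cwedge{2i+2j-\dm}h^{(j)}\bigr)$; the contraction order $2i+2j-\dm$ is odd (and negative orders give zero trivially), so the contracted wedge vanishes by \eqref{A37} and with it the bracket.

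The genuinely computational case is $\dm$ even with $2i+2j>\dm$. Set $m=2i+2j-\dm-1$, which is odd. The relevant branch of \eqref{KYbracketsHomMM} together with \eqref{eq:delta_hi} --- which gives $\frac{1}{\dm-2k+1}\delta h^{(k)}=-k\,\xi\wedge h^{(k-1)}$ --- expresses the bracket as a nonzero multiple of $i\,(\xi\wedge h^{(i-1)})\cwedge{m}h^{(j)}-j\,h^{(i)}\cwedge{m}(\xi\wedge h^{(j-1)})$. Expanding each contracted wedge with the identities \eqref{useful2+3}, the resulting "bulk" terms are proportional to $h^{(i-1)}\cwedge{m}h^{(j)}$ and $h^{(i)}\cwedge{m}h^{(j-1)}$ and hence vanish by \eqref{A37} ($m$ odd), leaving $i\,m\,h^{(i-1)}\cwedge{m-1}\bigl(\xi^\sharp\hook h^{(j)}\bigr)-j\,m\,\bigl(\xi^\sharp\hook h^{(i)}\bigr)\cwedge{m-1}h^{(j-1)}$. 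Substituting the antiderivation rule $\xi^\sharp\hook h^{(k)}=k\,(\xi^\sharp\hook h)\wedge h^{(k-1)}$ (immediate from $h^{(k)}=h^{\wedge k}$ and $\deg h=2$), this collapses, up to the nonzero factor $i\,j\,m$, to
\[
   h^{(i-1)}\cwedge{m-1}\bigl(\varpi\wedge h^{(j-1)}\bigr)-\bigl(\varpi\wedge h^{(i-1)}\bigr)\cwedge{m-1}h^{(j-1)}\,,\qquad\varpi\equiv\xi^\sharp\hook h\,.
\]
One more pass of \eqref{useful2+3} (all signs being trivial since every $h^{(k)}$ has even degree) followed by \eqref{useful4} rewrites this combination as a multiple of $h^{(i-1)}\cwedge{m-2}h^{(j-1)}$, and since $m-2$ is again odd this vanishes by \eqref{A37}. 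Thus the bracket is zero in all cases. The only real obstacle is the sign-bookkeeping in this last step; there is no geometric content beyond Lemma~\ref{lemma_tower1}. (In odd dimensions, the Hodge duality of Prop.~\ref{pr:HDKYbr} then carries this over to the KY forms $f^{(k)}$, and via Prop.~\ref{Prop3.4} produces the Schouten--Nijenhuis statement \eqref{SNff=0}.)
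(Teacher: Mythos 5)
Your proposal is correct and follows essentially the same route as the paper's own proof of Lemma~\ref{lemma_tower3}: reduce to the homogeneous bracket formula \eqref{KYbracketsHomMM}, dispose of the odd-dimensional and low-degree even-dimensional branches via \eqref{A37}, and in the remaining even-dimensional case use \eqref{eq:delta_hi} together with \eqref{useful2+3} and \eqref{useful4} to collapse everything onto odd contracted wedges of tower elements, which vanish by Lemma~\ref{lemma_tower1}. The sign bookkeeping in your final step checks out against the paper's computation.
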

\begin{proof}
We use Eq.~\eqref{KYbracketsHomMM} from Prop.~\ref{PropKYbrhom} to evaluate the bracket. In odd dimension we obtain (up to a numerical prefactor)
\begin{equation}
  [h^{(i)}, h^{(j)}]_\KY \propto  \delta\bigl( h^{(i)}\cwedge{2i+2j-\dm} h^{(j)} \bigr) = 0
\end{equation}
which vanishes thanks to \eqref{hihj=0}. In even dimensions one gets
\be
 [h^{(i)}, h^{(j)}]_\KY \propto
    -\frac{1}{\dm - 2i +1} \delta h^{(i)} \cwedge{m_*} h^{(j)}
    + \frac{1}{\dm-2j+1} h^{(i)} \cwedge{m_*} \delta h^{(j)} \, ,
\ee
where $m_* = 2i + 2j - \dm -1$ is such that we cannot use Eq.~\eqref{mywedge_delta} to reduce the non-trivial term to a total co-differential. We start using Eq.~\eqref{eq:delta_hi} to rewrite the term as
\be
  [h^{(i)}, h^{(j)}]_\KY \propto
   i\, \bigl( \xi \wedge h^{(i{-}1)} \bigr) \cwedge{m_*} h^{(j)}
   - j\, h^{(i)} \cwedge{m_*} \bigl( \xi \wedge h^{(j{-}1)} \bigr) \, .
\ee
We can expand the first term using identities \eqref{useful2+3}, \eqref{A37}, \eqref{useful4}, and again \eqref{useful2+3} as follows
\begin{equation}
\begin{split}
&i \bigl( \xi \wedge h^{(i{-}1)} \bigr) \cwedge{m_*} h^{(j)}
  = -i\, \xi \wedge \bigl(  h^{(i{-}1)} \cwedge{m_*}  h^{(j)} \bigr)
    + i m_{*}\,  h^{(i{-}1)} \cwedge{m_* {-}1} \bigl( \xi^\sharp \hook  h^{(j)} \bigr) \\
&\qquad = m_* i j\,  h^{(i{-}1)} \cwedge{m_*{-}1} \bigl( (\xi^\sharp \hook h) \wedge  h^{(j{-}1)} \bigr) \\
&\qquad = m_* ij \bigl(\xi^\sharp \hook h \bigr) \wedge \bigl(  h^{(i{-}1)} \cwedge{m_*{-}1}  h^{(j{-}1)} \bigr)
   + m_* (m_* {-}1) ij \bigl( (\xi^\sharp \hook h) \hook \wedge h^{(i{-}1)} \bigr) \cwedge{m_*{-}2}  h^{(j{-}1)} \, .
\end{split}\raisetag{9ex}
\end{equation}
A similar expansion of the other term gives
\begin{equation}
\begin{split}
j\, h^{(i)}  \cwedge{m_*} \bigl( \xi \wedge h^{(j{-}1)} \bigr)
  &= m_* ij\, (\xi^\sharp \hook h ) \wedge \bigl(  h^{(i{-}1)} \cwedge{m_*{-}1}  h^{(j{-}1)} \bigr) \\
&\quad
    + m_* (m_*{-}1) ij  h^{(i{-}1)} \cwedge{m_*{-}2}  \bigl( (\xi^\sharp \hook h)\hook h^{(j{-}1)} \bigr) \, .
\end{split}
\end{equation}
Summing these two together we get
\begin{equation}
\begin{split}
[h^{(i)}, h^{(j)}]_\KY &\propto
  \bigl( (\xi^\sharp \hook h) \hook \wedge h^{(i{-}1)} \bigr) \cwedge{m_*{-}2}  h^{(j{-}1)}
  - h^{(i{-}1)} \cwedge{m_*{-}2}  \bigl( (\xi^\sharp \hook h)\hook h^{(j{-}1)} \bigr) \\
& = -(\xi^\sharp \hook h)\hook
  \bigl( h^{(i{-}1)} \cwedge{m_*{-}2}  h^{(j{-}1)} \bigr) = 0\;.
\end{split}
\end{equation}
Here we used Eq.~\eqref{useful4} and finally again Eq.~\eqref{A37}, remembering that ${m_*-2}$ is odd.
\end{proof}





\providecommand{\href}[2]{#2}\begingroup\raggedright
\end{document}